\newcommand{\bra}[1]{{\left\langle{#1}\right\vert}}
\newcommand{\ket}[1]{{\left\vert{#1}\right\rangle}}
\newcommand{\qw}[1][-1]{\ar @{-} [0,#1]}
\newcommand{\qwx}[1][-1]{\ar @{-} [#1,0]}
\newcommand{\gate}[1]{*+<.6em>{#1} \POS ="i","i"+UR;"i"+UL **\dir{-};"i"+DL **\dir{-};"i"+DR **\dir{-};"i"+UR **\dir{-},"i" \qw}
\newcommand{\control}{*!<0em,.025em>-=-<.2em>{\bullet}}
\newcommand{\ctrl}[1]{\control \qwx[#1] \qw}
\newcommand{\targ}{*+<.02em,.02em>{\xy ="i","i"-<.39em,0em>;"i"+<.39em,0em> **\dir{-}, "i"-<0em,.39em>;"i"+<0em,.39em> **\dir{-},"i"*\xycircle<.4em>{} \endxy} \qw}
\newcommand{\multigate}[2]{*+<1em,.9em>{\hphantom{#2}} \POS [0,0]="i",[0,0].[#1,0]="e",!C *{#2},"e"+UR;"e"+UL **\dir{-};"e"+DL **\dir{-};"e"+DR **\dir{-};"e"+UR **\dir{-},"i" \qw}
\newcommand{\ghost}[1]{*+<1em,.9em>{\hphantom{#1}} \qw}
\newcommand{\lstick}[1]{*!R!<.5em,0em>=<0em>{#1}}
\newcommand{\Qcircuit}{\xymatrix @*=<0em>}
\theoremstyle{plain}
\declaretheorem[name=Theorem]{theorem}
\declaretheorem[name=Lemma, numberlike=theorem]{lemma}
\declaretheorem[name=Corollary, numberlike=theorem]{corollary}
\declaretheorem[name=Definition, numberlike=theorem]{definition}
\declaretheorem[name=Lemma~\ref{lem:toff_proof},numbered=no]{repeated}
\declaretheorem[name=Theorem~\ref{thm:main_real_orthogonal},numbered=no]{real_repeated}
\newcommand{\Per}{\operatorname{Per}}
\newcommand{\encoder}{E}
\newcommand{\decoder}{D}
\newcommand{\NS}{\operatorname{NS_1}}
\newcommand{\B}{R_{\pi/4}}
\renewcommand{\R}{R_{\pi}}
\newcommand{\CCR}{\operatorname{CC-}\!\R}
\newcommand{\CSIGN}{\operatorname{CSIGN}}
\newcommand{\CNOT}{\operatorname{CNOT}}
\newcommand{\sharpP}{\#\P}
\newcommand{\ModpP}{\ComplexityFont{Mod}_{p}\P}
\newcommand{\ptf}{$\varphi$-transition formula\xspace}
\newcommand{\parityP}{\oplus\P}
\renewcommand{\O}{\mathcal{O}}
\renewcommand{\(}{\left(}
\renewcommand{\)}{\right)}
\newcommand\blfootnote[1]{%
  \begingroup
  \renewcommand\thefootnote{}\footnote{#1}%
  \addtocounter{footnote}{-1}%
  \endgroup
}
\title{New Hardness Results for the Permanent Using Linear Optics}
\author{Daniel Grier\thanks{MIT. \ Email: grierd@mit.edu. \ Supported by an NSF
Graduate Research Fellowship under Grant No. 1122374.}
\and Luke Schaeffer\thanks{MIT. \ Email: lrs@mit.edu. } \blfootnote{Both authors were supported by the Vannevar Bush Faculty Fellowship from the US Department of Defense. Part of this research was completed while visiting UT Austin.} }
\date{}
\begin{document}

\maketitle
\begin{abstract}
In 2011, Aaronson gave a striking proof, based on quantum linear optics, that the problem of computing the permanent of a matrix is $\sharpP$-hard.  Aaronson's proof led naturally to hardness of approximation results for the permanent, and it was arguably simpler than Valiant's seminal proof of the same fact in 1979.  Nevertheless, it did not show $\sharpP$-hardness of the permanent for any class of matrices which was not previously known.   In this paper, we present a collection of \emph{new} results about matrix permanents that are derived primarily via these linear optical techniques.   

First, we show that the problem of computing the permanent of a real orthogonal matrix is $\sharpP$-hard. Much like Aaronson's original proof, this implies that even a multiplicative approximation remains $\sharpP$-hard to compute.  The hardness result even translates to permanents of orthogonal matrices over the finite field $\mathbb{F}_{p^4}$ for $p \neq 2, 3$.  Interestingly, this characterization is tight:  in fields of characteristic 2,  the permanent coincides with the determinant; in fields of characteristic 3, one can efficiently compute the permanent of an orthogonal matrix by a nontrivial result of Kogan.

Finally, we use more elementary arguments to prove $\sharpP$-hardness for the permanent of a positive semidefinite matrix.  This result shows that certain probabilities of boson sampling experiments with thermal states are hard to compute exactly, despite the fact that they can be efficiently sampled by a classical computer.  
\end{abstract}

\section{Introduction}
\label{sec:introduction}

The permanent of a matrix has been a central fixture in computer science ever since Valiant showed how it could efficiently encode the number of satisfying solutions to classic \NP-complete constraint satisfaction problems \cite{valiant}.  His theory led to the formalization of many counting classes in complexity theory, including $\sharpP$.  Indeed, the power of these counting classes was later demonstrated by Toda's celebrated theorem, which proved that every language in the polynomial hierarchy could be computed in polynomial-time with only a single call to a $\sharpP$ oracle \cite{toda}.

Let us recall the definition of the matrix permanent.  Suppose $A = ( a_{i,j} )$ is an $n \times n$ matrix over some field.  The \emph{permanent} of $A$ is
$$\Per(A) = \sum_{\sigma \in S_n} \prod_{i=1}^n a_{i, \sigma(i)}$$
where $S_n$ is the group of permutations of $\{ 1, 2, \ldots, n \}$. Compare this to the \emph{determinant} of $A$:
$$\det(A) = \sum_{\sigma \in S_n} \text{sign}(\sigma) \prod_{i=1}^{n} a_{i, \sigma(i)}.$$
Since we can compute the determinant in polynomial time (in fact, in $\NC^2$; see Berkowitz \cite{berkowitz:1984}), the apparent difference in complexity between the determinant and permanent comes down to the cancellation of terms in the determinant \cite{sengupta:1997, valiant:1979}.  

In his original proof, Valiant \cite{valiant} casts the permanent in a combinatorial light, in terms of a directed graph rather than as a polynomial.  Imagine the matrix $A$ encodes the adjacency matrix of a weighted graph with vertices labeled $\{ 1, \ldots, n \}$. Each permutation $\sigma$ on the vertices has a cycle decomposition, which partitions the vertices into a collection of cycles known as a \emph{cycle cover}. The \emph{weight} of a cycle cover is the product of the edge weights of the cycles (i.e., $\prod_{i=1}^{n} a_{i, \sigma(i)}$).  Therefore, the permanent is the sum of the weights of all cycle covers of the graph. Equipped with this combinatorial interpretation of the permanent, Valiant constructs a graph by linking together different kinds of gadgets in such a way that some cycle covers correspond to solutions to a CNF formula, and the rest of the cycle covers cancel out. 

Valiant's groundbreaking proof, while impressive, is fairly opaque and full of complicated gadgets.  A subsequent proof by Ben-Dor and Halevi \cite{ben-dor:1993} simplified the construction, while still relying on the cycle cover interpretation of the permanent.  In 2009, Rudolph \cite{rudolph:2009} noticed an important connection between quantum circuits and matrix permanents---a version of a correspondence we will use often in this paper.  Rudolph cast the cycle cover arguments of Valiant into more physics-friendly language, which culminated in a direct proof that the amplitudes of a certain class of universal quantum circuits were proportional to the permanent.  Had he pointed out that one could embed $\sharpP$-hard problems into the amplitudes of a quantum circuit, then this would have constituted a semi-quantum proof that the permanent is $\sharpP$-hard.  Finally, in 2011, Aaronson \cite{aar:per} (independently from Rudolph) gave a completely self-contained and quantum linear optical proof that the permanent is $\sharpP$-hard. 

One must then ask, what is gained from converting Valiant's combinatorial proof to Aaronson's linear optical one?  One advantage is pragmatic---much of the difficulty of arguments based on cycle cover gadgets is offloaded onto central, well-known theorems in linear optics and quantum computation.  In this paper, we show that the linear optical approach has an even more important role in analyzing permanents of matrices with a global group structure.  Such properties can be very difficult to handle in the ``cycle cover model." For instance, the matrices which arise from Valiant's construction may indeed be invertible, but this seems to be more accidental than intentional, and a proof of their invertibility appears nontrivial.   Adapting such techniques to give hardness results for orthogonal matrices would be extraordinarily tedious.  In contrast, using the linear optical framework, we give proofs of hardness for many such matrices.

This gives a clean example for which a quantum mechanical approach sheds light on a problem in classical theoretical computer science.  To take another example of such a quantum-classical connection, Kuperberg \cite{kuperberg:jones} shows that computing certain values of the Jones polynomial to high accuracy is $\sharpP$-hard using $\PostBQP = \PP$, a well-known result of Aaronson \cite{aar:pp}.  For a more thorough treatment of this topic, see the survey on quantum proofs for classical theorems of Drucker and de Wolf \cite{ddw}.


\subsection{Results}  
We refine Aaronson's linear optical proof technique and show that it \emph{can} provide new $\sharpP$-hardness results.  First, let us formally define what we mean by $\sharpP$-hardness throughout this paper. We say that the \emph{permanent is $\sharpP$-hard} for a class of matrices if all functions in $\sharpP$ can be efficiently computed with single-call access to an oracle which computes permanents of matrices in that class.  That is, the permanent is hard for a function class $A$ if, given an oracle $\O$ for the permanent, $A \subseteq \FP^{\O[1]}$. 

Our main result is a linear optical proof that the permanent of a real orthogonal matrix is $\sharpP$-hard.  Consequently, the permanent of matrices in any of the classical Lie groups (e.g., invertible matrices, unitary matrices, symplectic matrices) is also $\sharpP$-hard.  

Our approach also reveals a surprising connection between the hardness of the permanent of orthogonal matrices over finite fields and the characteristic of the field.  First notice that in fields of characteristic 2, the permanent is equal to the determinant and is therefore efficiently computable.  Over fields of characteristic 3, there exists an elaborate yet polynomial time algorithm of Kogan \cite{kogan:1996} that computes the (orthogonal) matrix permanent.  We give the first explanation for why no equivalent algorithm was found for the remaining prime characteristics, establishing a sharp dichotomy theorem:  for fields of characteristic $2$ or $3$ there is an efficient procedure to compute orthogonal matrix permanents, and for \emph{all} other primes $p$ there exists a finite field\footnote{We prove that this field is $\mathbb{F}_{p^4}$, although in some cases $\mathbb{F}_{p^2}$ or even $\mathbb{F}_{p}$ will suffice.  See Section~\ref{sec:finite_field} for more details.} of characteristic $p$ for which the permanent of an orthogonal matrix (over that field) is as hard as counting the number of solutions to a CNF formula mod $p$.\footnote{Formally, this language is complete for the class $\ModpP$. By Toda's theorem, we have that $\PH \subseteq \BPP^{\ModpP}$.  See Appendix~\ref{sec:complexity} for a more precise exposition of such counting classes.}  Furthermore, there exist infinitely many primes for which computing the permanent of an orthogonal matrix over $\mathbb F_p$ (i.e., modulo $p$) is hard.

Finally, we give a polynomial interpolation argument showing that the permanent of a positive semidefinite matrix is $\sharpP$-hard.  This has an interesting consequence due a recent connection between matrix permanents and boson sampling experiments with thermal input states \cite{chakhmakhchyan:2016, rahimi:2015}.  In particular, the probability of a particular experimental outcome is proportional to a positive semidefinite matrix which depends on the temperatures of the thermal states.  Our result implies that it is hard to compute such output probabilities exactly despite the fact that an efficient classical sampling algorithm exists \cite{rahimi:2015}.

\subsection{Proof Outline} 
The main result concerning the $\sharpP$-hardness of real orthogonal permanents follows from three major steps:
\begin{enumerate}
\item Construct a quantum circuit (over qubits) with the following property:  \emph{If} you could compute the probability of measuring the all-zeros state after the circuit has been applied to the all-zeros state, then you could calculate some $\sharpP$-hard quantity.   We must modify the original construction of Aaronson \cite{aar:per}, so that all the gates used in this construction are real.
\item Use a modified version of the Knill, Laflamme, Milburn protocol \cite{klm} to construct a linear optical circuit which simulates the quantum circuit in the previous step.  In particular, we modify the protocol to ensure that the linear optical circuit starts and ends with one photon in every mode.  Notice that this is distinct from Aaronson's approach \cite{aar:per} because we can no longer immediately use the dual-rail encoding of KLM.  We build new postselected encoding and decoding gadgets to circumvent this problem.
\item Use a known connection (first pointed out by Caianiello \cite{caianiello}) between the transition amplitude of a linear optical circuit and the permanent of its underlying matrix.  Because we paid special attention to the distribution of photons across the modes of our linear optical network in the previous step, the success probability of the linear optical circuit is exactly the permanent of the underlying transition matrix.  It is then simple to work backwards from this permanent to calculate our original $\sharpP$-hard quantity.
\end{enumerate}

The paper is organized as follows. Section~\ref{sec:linear_optics} gives a brief introduction to the linear optical framework and the relevant tools we use in this paper.  In Section~\ref{sec:real_orthogonal}, we use this framework to show that the permanent of a real orthogonal matrix is $\sharpP$-hard. A careful analysis in Section~\ref{sec:finite_field} (and Appendix~\ref{app:fftech}) extends these gadgets to finite fields.\footnote{As is the case with Aaronson's proof, our real orthogonal construction also leads naturally to hardness of approximation results, which we discuss in Appendix~\ref{sec:approx}. }  Finally, in Section~\ref{sec:misc_permanents}, we explore other matrix classes, culminating in a proof that the permanent of a real special orthogonal symplectic involution is $\sharpP$-hard.


\section{Linear Optics Primer} 
\label{sec:linear_optics}

In this section we will introduce the so-called ``boson sampling" model of quantum computation, which will make clear the connection between the dynamics of \emph{noninteracting} bosons and the computation of matrix permanents \cite{caianiello, troyanskytishby}. The most promising practical implementations of this model are based on linear optics and use photons controlled by optical elements such as beamsplitters. We will use the term ``linear optics" throughout, although any type of indistinguishable bosons would have the same dynamics.

Let us first consider the dynamics of a single boson.  At any point in time, it is in one of finitely many \emph{modes}.  As the system evolves, the particle moves from one of, say, $m$ initial modes to a superposition of $m$ final modes according to a \emph{transition matrix} of \emph{amplitudes}. That is, there is an $m \times m$ unitary transition matrix $U \in \mathbb C^{m \times m}$, where $U_{ji}$ is the amplitude of a particle going from mode $i$ to mode $j$.

The model becomes more complex when we consider a system of multiple particles evolving on the same modes according to the same transition matrix.  Let us define states in our space of $k$ bosons in what is called the Fock basis.  A \emph{Fock} state for a $k$-photon, $m$-mode system is of the form $\ket{s_1, s_2, \ldots, s_m}$ where $s_i \ge 0$ is the number of bosons in the $i$th mode  and $\sum_{i=1}^m s_i = k$.  Therefore, the Hilbert space which spans the Fock basis states $\Phi_{m,k}$ has dimension $\binom{k+m-1}{k}$.  Let $\varphi$  be the transformation which lifts the unitary $U$ to act on a multi-particle system.  On a $k$-particle system, $\varphi(U)$ is a linear transformation from $\Phi_{m,k}$ to $\Phi_{m,k}$. 

Let $\ket{S} = \ket{s_1, s_2, \ldots, s_m}$ be the Fock state describing the starting state of the system, and let $\ket{T} = \ket{t_1, t_2, \ldots, t_m}$ be the ending state.  We have:  
$$\bra{T} \varphi(U) \ket{S} = \frac{\Per(U_{S,T})}{\sqrt{s_1! \ldots s_m! t_1! \ldots t_m!}}$$
where $U_{S,T}$ is the matrix obtained by taking $s_i$ copies of the $i$th row and $t_i$ copies of the column $i$ in $U$ for all $i \in \{1, 2, \ldots, m\}$.  We will refer to this formula as the \emph{\ptf}. \footnote{A different intuition, perhaps more appealing for non-physicists, is to think of Fock states as monomials. Specifically, we equate $\ket{s_1, \ldots, s_m}$ with the monomial $x_1^{s_1} \cdots x_m^{s_m}$. That is, there is a variable for each mode, the degree is precisely the number of particles, and the variables commute because bosons are indistinguishable. The unitary acts on polynomials by mapping each variable to a linear combination of variables, e.g., $x_1 \mapsto U_{11} x_1 + \cdots + U_{1m} x_m$. It is not too hard to see that $\Per(U_{S,T})$ is summing up all the ways we can expand $U x^{S}$ to get the monomial $x^{T}$, except that it sums over all permutations, including some permutations which fix $x^{T}$ or $x^{S}$. The normalization factor $1 / \sqrt{s_1! \ldots s_m! t_! \ldots t_m!}$ is there to account for those symmetries. The square root arises due to the fact that we are computing amplitudes, not probabilities. See Aaronson and Arkhipov \cite{aark} for a more thorough treatment of Fock states and polynomials.}
Notice that $s_1 + \cdots + s_m$ must equal $t_1 + \cdots + t_m$ in order for $U_{S,T}$ to be square. This expresses the physical principle that photons are not created or destroyed in the experiment. 

For example, suppose $U$ is the Hadamard gate and that we wish to apply $U$ to two modes each with a single photon.  That is, $U = \frac{1}{\sqrt{2}} \(\begin{smallmatrix} 1 & 1 \\ 1 & -1 \end{smallmatrix}\)$ and $\ket{S} = \ket{1,1}$.  Since the number of photons must be conserved, the resulting state of the system is in some linear combination of $\ket{2,0}, \ket{1,1},$ and $\ket{0,2}$.  We calculate these amplitudes explicitly below:

\begin{center}
\begin{tabular}{l | c c c}
$\ket{T}$ & $\ket{2,0}$ & $\ket{1,1}$ & $\ket{0,2}$  \\ \hline
$U_{S,T}$ & $\frac{1}{\sqrt{2}}\(\begin{smallmatrix} 1 & 1 \\ 1 & 1 \end{smallmatrix}\)$ & $\frac{1}{\sqrt{2}}\(\begin{smallmatrix} 1 & 1 \\  1 & -1 \end{smallmatrix}\)$ & $\frac{1}{\sqrt{2}}\(\begin{smallmatrix} 1 & 1 \\ -1 & -1 \end{smallmatrix}\)$ \rule{0pt}{2.6ex} \\
$\Per(U_{S,T})$ & $1$ & $0$ & $-1$ \\
$\bra{T} \varphi(U) \ket{S}$ \rule{0pt}{2.6ex} & $1/\sqrt{2}$ & $0$ & $-1/\sqrt{2}$ \\
\end{tabular}
\end{center}

Therefore, when we apply Hadamard in a linear optical circuit to the state $\ket{1,1}$ we get the state $\frac{\ket{2,0} - \ket{0,2}}{\sqrt{2}}$.  Indeed, we have derived the famous Hong-Ou-Mandel effect---the photons are noninteracting, yet the final state is clearly highly entangled \cite{hom}.

Finally, we note that $\varphi$ expresses the fact that linear optical systems are reversible and can be composed together. This behavior is captured by the following theorem.

\begin{theorem}[see, for example, Aaronson and Arkhipov \cite{aar:per, aark}]
The map $\varphi$ is a group homomorphism.  Furthermore, if $U \in \mathbb{C}^{n \times n}$ is unitary, then $\varphi(U)$ is unitary.
\end{theorem}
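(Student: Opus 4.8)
The plan is to derive everything from the \ptf. It suffices to establish three claims: (i) $\varphi(I_n)$ is the identity operator on every $\Phi_{m,k}$; (ii) $\varphi(UV) = \varphi(U)\varphi(V)$ for all $U,V \in \mathbb{C}^{n\times n}$; and (iii) $\varphi(U^\dagger) = \varphi(U)^\dagger$ for all $U$. Claims (i) and (ii) are the homomorphism statement (restricted to invertible matrices they say $\varphi$ is a group homomorphism), and given all three the unitarity statement is immediate: if $U$ is unitary then, using (iii), (ii), and (i) in turn, $\varphi(U)\varphi(U)^\dagger = \varphi(U)\varphi(U^\dagger) = \varphi(UU^\dagger) = \varphi(I_n) = I$, and symmetrically $\varphi(U)^\dagger\varphi(U) = I$; since every $\Phi_{m,k}$ is finite-dimensional, $\varphi(U)$ is unitary. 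Throughout write $\mu(S) := s_1!\cdots s_m!$.

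Claim (i) is a direct computation from the \ptf. When $S = T$, the matrix $I_{S,S}$ is block diagonal (after grouping the repeated rows and columns by mode) with one $s_i\times s_i$ all-ones block per mode $i$, so $\Per(I_{S,S}) = \prod_i s_i! = \mu(S)$ and $\bra{S}\varphi(I_n)\ket{S} = \mu(S)/\sqrt{\mu(S)\mu(S)} = 1$; when $S \neq T$, some mode $i$ has $s_i > t_i$, and then $I_{S,T}$ contains $s_i$ identical rows whose nonzero entries all lie in a common set of $t_i < s_i$ columns, so every term of the permanent vanishes and $\bra{T}\varphi(I_n)\ket{S} = 0$. Claim (iii) is also quick: writing entries of the blown-up matrices with labels such as ``the $a$-th copy of row $i$'' shows $(U^\dagger)_{T,S} = (U_{S,T})^{\dagger}$, and since the permanent is a polynomial with integer coefficients (hence commutes with complex conjugation) and is invariant under transpose, $\Per(M^\dagger) = \overline{\Per(M)}$; as $\mu$ is real and $\sqrt{\mu(S)\mu(T)}$ is symmetric in $S$ and $T$, this gives $\bra{S}\varphi(U^\dagger)\ket{T} = \overline{\bra{T}\varphi(U)\ket{S}}$, which is precisely (iii).

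The substantive claim is (ii). After cancelling the (real, symmetric) normalization factors from the \ptf, $\varphi(UV) = \varphi(U)\varphi(V)$ is equivalent to the permanent analogue of the Cauchy--Binet identity,
\[
\Per\big((UV)_{S,T}\big) \;=\; \sum_{R \in \Phi_{m,k}} \frac{\Per(V_{S,R})\,\Per(U_{R,T})}{\mu(R)},
\]
required for all $k$-photon Fock states $S$ and $T$. I would prove this through the monomial picture of the footnote: the linear substitution $x_i \mapsto \sum_j U_{ji}x_j$ (oriented so that composing substitutions matches matrix multiplication) defines a $\mathbb{C}$-algebra endomorphism $\Psi_U$ of $\mathbb{C}[x_1,\dots,x_m]$, and $U \mapsto \Psi_U$ is a homomorphism because substituting twice composes the underlying linear maps. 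Restricting $\Psi_U$ to the degree-$k$ homogeneous component and expanding in the monomial basis $\{x^S : \sum_i s_i = k\}$, a multinomial expansion shows that the matrix of $\Psi_U$ on this component agrees with $\varphi(U)$ up to conjugation by the fixed diagonal matrix $\operatorname{diag}(\sqrt{\mu(S)})$ — this single computation is where the symmetry factors $\mu(R)$ enter, counting the orderings of a multiset of ``intermediate'' indices. Since conjugation by a fixed diagonal matrix preserves products, $\varphi$ inherits multiplicativity from $U \mapsto \Psi_U$, and reading $\Psi_{UV} = \Psi_U\circ\Psi_V$ off in the monomial basis yields exactly the displayed identity.

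The main obstacle is purely bookkeeping: tracking the factorial symmetry factors correctly — in the identification of $\Psi_U$ with $\varphi(U)$, and equivalently in the regrouping of the expansion of $(UV)_{S,T}$ by intermediate indices. The polynomial-substitution viewpoint confines this to one multinomial computation rather than spreading it through a double sum over Fock states. One could instead avoid the combinatorics entirely by introducing second quantization and defining $\varphi(U)$ as the operator conjugating creation operators via $a_i^\dagger \mapsto \sum_j U_{ji} a_j^\dagger$, for which multiplicativity and unitarity are manifest and the \ptf recovers the matrix elements; but this requires Fock-space machinery the paper has not set up.
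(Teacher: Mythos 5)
The paper does not actually prove this theorem: it is stated as a cited background fact (with the reference to Aaronson and Aaronson--Arkhipov), so there is no ``paper's own proof'' to compare against. Your argument is a correct, self-contained proof, and it follows precisely the intuition the paper sketches in its own footnote in this section --- namely, identifying the Fock state $\ket{s_1,\ldots,s_m}$ with the monomial $x_1^{s_1}\cdots x_m^{s_m}$, so that $\varphi(U)$ is, up to the fixed diagonal similarity by $\operatorname{diag}(\sqrt{\mu(S)})$, the action of the substitution homomorphism $\Psi_U$ on the degree-$k$ homogeneous component. Claims (i), (iii), and the derivation of unitarity from (i)--(iii) are complete and correct. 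For (ii) you reduce to the multiplicativity of $U\mapsto\Psi_U$ and then transport it across the diagonal conjugation; the one step you leave as ``a multinomial expansion'' is the verification that the $(T,S)$ entry of the degree-$k$ matrix of $\Psi_U$ equals $\Per(U_{S,T})/\mu(T)$, so that $\varphi(U) = D[\Psi_U]D^{-1}$ with $D=\operatorname{diag}(\sqrt{\mu(S)})$ --- the $\mu(T)$ counting the $\prod_j t_j!$ ways to promote a choice of targets with the right multiplicities into a bijection onto column slots. You flag this explicitly as the bookkeeping step, and it is indeed routine, but a fully rigorous write-up should spell it out. You also correctly identify the second-quantization shortcut (conjugating creation operators $a_i^\dagger\mapsto\sum_j U_{ji}a_j^\dagger$) that the cited sources typically use, where multiplicativity and unitarity are manifest; the paper avoids setting up that machinery, and your monomial-basis route is the natural elementary substitute.
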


We now state a landmark result in linear optics, which connects the dynamics of a linear optical system with those of a traditional quantum circuit over qubits.  Define $\ket{I} = \ket{0,1, \ldots, 0,1}$, the Fock state with a photon in every other mode.

\begin{theorem}[Knill, Laflamme, and Milburn \cite{klm}]
\label{thm:klm}
Postselected linear optical circuits are universal for quantum computation.  Formally, given a quantum circuit $Q$ consisting of CSIGN and single-qubit gates, there exists a linear optical network $U$  constructible in polynomial time such that 
$$\bra{I} \varphi(U) \ket{I} = \frac{1}{4^{\Gamma}} \bra{0 \cdots 0} Q \ket{0 \cdots 0},$$
where $\Gamma$  is the number of CSIGN gates in $Q$.
\end{theorem}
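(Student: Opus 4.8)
The plan is to follow the Knill--Laflamme--Milburn construction itself: encode each qubit in a pair of optical modes (``dual rail''), realize single-qubit gates \emph{exactly} by passive linear optics, realize each $\CSIGN$ only \emph{probabilistically} using two ancilla modes together with postselection, and then assemble the whole network using the fact (the previous theorem) that $\varphi$ is a group homomorphism, so that $\varphi$ of a composition is the composition of the $\varphi$'s. Concretely, let the $j$-th qubit live on modes $2j-1,2j$, with logical $\ket0$ and $\ket1$ the two one-photon states of that pair, the labels chosen so that the state $\ket I$ --- a photon in every other mode --- simultaneously represents the logical all-zeros string and fixes the occupation of every ancilla mode. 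On a pair of modes carrying exactly one photon, any $2\times2$ unitary block of $U$ acts on the two-dimensional one-photon sector precisely as that $2\times2$ matrix (this is the one-particle case of the \ptf) and keeps the photon inside the pair; since beamsplitters and phase shifters generate all of $U(2)$ on two modes, each single-qubit gate of $Q$ is reproduced exactly and contributes amplitude $1$.

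The only substantive ingredient is $\CSIGN$. Following KLM I would first exhibit the nonlinear sign gate $\NS$: a fixed $3\times3$ linear-optical unitary $V$ acting on one signal mode and two ancilla modes prepared as a one-photon/vacuum pair, with the property that conditioning on the ancillas returning to that configuration sends $c_0\ket0+c_1\ket1+c_2\ket2$ to $\tfrac12\(c_0\ket0+c_1\ket1-c_2\ket2\)$ on the signal mode; writing down such a $V$ and verifying this is a bounded computation with the \ptf. Given $\NS$, I implement $\CSIGN$ on qubits $a,b$ by a balanced beamsplitter between the logical-$\ket1$ modes of $a$ and $b$, an $\NS$ on each of those two modes, and the inverse beamsplitter. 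On every computational-basis input this leaves the dual-rail state unchanged except that the $\ket1_a\ket1_b$ branch --- which the beamsplitter converts into a two-photon branch by the Hong--Ou--Mandel effect --- acquires the minus sign from the two $\NS$ gates; and since each $\NS$ contributes the factor $\tfrac12$ whether or not a photon enters it, the gadget realizes $\CSIGN$ with the \emph{uniform} amplitude $\tfrac14$.

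Laying out the $2n$ rail modes together with two ancilla modes for each of the $\Gamma$ copies of $\CSIGN$, let $U$ be the network obtained by composing these gadgets in the order the gates occur in $Q$. It uses $O(n+\Gamma)$ modes and is plainly polynomial-time constructible, and $\varphi(U)$ is the corresponding composition of the pieces' $\varphi$'s. Projecting the output onto $\ket I$ simultaneously postselects every ancilla pair back to its initial one-photon/vacuum state (so each $\NS$, hence each $\CSIGN$, ``succeeds'') and every qubit pair onto logical $\ket0$ (so any branch in which photons drifted out of the dual-rail pattern is discarded), leaving exactly $\bra{0\cdots0}Q\ket{0\cdots0}$ times the product of the per-gate amplitudes --- $1$ per single-qubit gate and $\tfrac14$ per $\CSIGN$ --- which is $\bra I\varphi(U)\ket I = 4^{-\Gamma}\bra{0\cdots0}Q\ket{0\cdots0}$.

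The main obstacle is the $\NS$ gate: producing an explicit $3\times3$ unitary and checking, via the permanent formula for $\bra T\varphi(V)\ket S$, that postselection yields precisely $\ket2\mapsto-\ket2$ together with the identity on $\ket0,\ket1$, and with amplitude exactly $\tfrac12$. Everything else is careful bookkeeping --- arranging mode labels and ancilla placements so that the single state $\ket I$ can serve at once as input, as the output postselection, and as the ancilla initialization (pairing each $\NS$ ancilla as a one-photon/vacuum pair placed consistently with the ``photon in every other mode'' pattern), and confirming that the $\ket I$ projector annihilates every out-of-subspace branch so that nothing but the intended amplitude survives.
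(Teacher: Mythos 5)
Your proposal is correct and follows exactly the Knill--Laflamme--Milburn construction that the paper cites and sketches: dual-rail encoding, exact single-qubit gates acting on mode pairs, and $\CSIGN$ realized by conjugating two postselected $\NS$ gadgets (each contributing amplitude $\tfrac12$) with a balanced beamsplitter, assembled via the homomorphism property of $\varphi$ so that a single final projection onto $\ket{I}$ handles both the ancilla postselections and the logical $\ket{0}$ readout. One small bookkeeping slip: since, as you yourself note, each $\NS$ is a three-mode gadget with two ancilla modes and each $\CSIGN$ uses two $\NS$ gates, a $\CSIGN$ consumes four ancilla modes (so $2n+4\Gamma$ modes total), not the two ancilla modes per $\CSIGN$ stated in your final paragraph---though this does not affect the argument or the $4^{-\Gamma}$ amplitude.
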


We will refer to the construction of the linear optical network $U$ from $Q$ in Theorem~\ref{thm:klm} as the \emph{KLM protocol}.  It will be helpful to give some idea of its proof here.   First, each qubit of $Q$ is encoded in two modes of $U$ in the classic dual-rail encoding.  That is, the qubit state $\ket{0}$ is encoded by the Fock state $\ket{0,1}$ and the state $\ket{1}$ is encoded by the Fock state $\ket{1,0}$. 

Now suppose $G$ is a single-qubit gate in $Q$.  Using the \ptf, it is not hard to see that applying $G$ to the corresponding pair of dual-rail modes in the linear optical circuit implements the correct single-qubit unitary.  Applying a CSIGN gate is trickier.  The KLM protocol builds the CSIGN gate from a simpler $\NS$ gate, which flips the sign of a single mode if it has 2 photons and does nothing when the mode has 0 or 1 photon. Using two $\NS$ gates one can construct a CSIGN gate (see Figure~\ref{fig:csign_from_ns} in Appendix~\ref{app:ns_approach}).

Unfortunately, the $\NS$ gate cannot be implemented with a straightforward linear optical circuit.  Therefore, some additional resource is required.  The original KLM protocol uses \emph{adaptive measurements}, that is, the ability to measure in the Fock basis in the middle of a linear optical computation and adjust the remaining sequence of gates if necessary.  Intuitively, using adaptive measurements one can apply some transformation and then measure a subset of the modes to ``check'' if the $\NS$ gate was applied.  For simplicity, however, we will assume we have a stronger resource---namely, \emph{postselection}---so we can assume the measurements always yield the most convenient outcome.  Putting the above parts together completes the proof Theorem~\ref{thm:klm}.

%
%
%


\section{Permanents of Real Orthogonal Matrices}
\label{sec:real_orthogonal}

The first class of matrices we consider are the real orthogonal matrices, that is, square matrices $M \in \mathbb R^{n \times n}$ with $M M^{T} = M^{T} M = I$.  This section is devoted to proving the following theorem, which forms the basis for many of the remaining results in this paper.

\begin{theorem}
\label{thm:main_real_orthogonal}
The permanent of a real orthogonal matrix is $\sharpP$-hard.
\end{theorem}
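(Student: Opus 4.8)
The plan is to follow the three-step outline sketched in the introduction, with the technical crux being the construction of real gadgets that replace the complex ones in Aaronson's original argument. First I would set up the ``arithmetic'' reduction: choose a $\sharpP$-hard quantity—say, the number of satisfying assignments of a $3$-CNF formula $\phi$ on $n$ variables—and build a quantum circuit $Q$ over qubits, using only $\CSIGN$ and \emph{real} single-qubit gates (e.g.\ $\{H, T\text{-free}\}$ rotations, or more precisely gates with real entries such as Hadamard and appropriate real rotations), such that $\bra{0\cdots 0} Q \ket{0\cdots 0}$ encodes $\#\phi$ up to a known normalization. The standard way to do this is to prepare a uniform superposition over assignments, coherently evaluate $\phi$ into a phase or an ancilla, and arrange the amplitude of returning to $\ket{0\cdots0}$ to be proportional to $\#\phi / 2^n$; one must check that replacing complex phase gates with sign flips ($\R$-type gates) still lets the satisfying-assignment count survive in the amplitude, possibly after a small interpolation or a change in which $\sharpP$-complete quantity we target.

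Next I would invoke a real version of Theorem~\ref{thm:klm}: apply the KLM protocol to $Q$ to obtain a linear optical network $U$ with $\bra{I}\varphi(U)\ket{I} = 4^{-\Gamma}\bra{0\cdots0}Q\ket{0\cdots0}$. The subtlety flagged in the outline is that the naive dual-rail encoding forces the input/output Fock state to be $\ket{I} = \ket{0,1,0,1,\ldots}$, whereas to extract a clean permanent via the \ptf\ I want the linear optical circuit to start and end with exactly one photon in \emph{every} mode, so that $U_{S,T} = U$ itself and $\bra{1,\ldots,1}\varphi(U')\ket{1,\ldots,1} = \Per(U')$ with no stray normalization from repeated rows/columns. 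So I would design postselected \emph{encoder} and \emph{decoder} gadgets $\encoder, \decoder$ that map $\ket{1,1}$ on an ancilla-augmented pair of modes to the dual-rail $\ket{0,1}$ (absorbing the spare photon into a postselected ancilla mode), and symmetrically at the output; composing $\decoder \circ (\text{KLM network}) \circ \encoder$ gives a real orthogonal $U'$ whose relevant transition amplitude equals $\Per(U')$ times an explicit constant. Crucially, every gadget must be realizable by a \emph{real orthogonal} transition matrix—beamsplitters with real entries—so that $U'$, being a product of real orthogonal blocks embedded in a larger identity, is itself real orthogonal; here I would also need to pad with extra vacuum/ancilla modes carrying one photon each that pass through untouched, to make the ``one photon per mode'' bookkeeping work globally.

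Finally I would run the chain backwards: given an oracle for $\Per$ on real orthogonal matrices, compute $\Per(U')$, divide out the known constant (a power of $4$ from $\CSIGN$ gates times factors from the encoder/decoder postselection probabilities times the $2^n$ from the superposition) to recover $\bra{0\cdots0}Q\ket{0\cdots0}$, and read off $\#\phi$. Since this is a single oracle call followed by polynomial-time arithmetic, it establishes $\sharpP \subseteq \FP^{\O[1]}$ for the real-orthogonal-permanent oracle $\O$, which is exactly the claimed $\sharpP$-hardness.

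I expect the main obstacle to be Step~2, specifically getting the encoder/decoder gadgets to be simultaneously (i) real, (ii) orthogonal (norm-preserving on the relevant subspace, so that postselection succeeds with a fixed, computable probability independent of the quantum data), and (iii) compatible with the ``one photon per mode in, one photon per mode out'' invariant so that the \ptf\ yields $\Per$ of the \emph{entire} network matrix rather than $\Per$ of a submatrix with repeated rows and columns. A secondary difficulty is ensuring the real-gate quantum circuit in Step~1 still captures a $\sharpP$-hard quantity: since the amplitude $\bra{0\cdots0}Q\ket{0\cdots0}$ is now real, one must argue (perhaps via a Hadamard-test/parity trick or by targeting $\#_{\text{SAT}}$ rather than $\text{GapP}$-style differences) that a single such real amplitude still determines a $\sharpP$-complete function.
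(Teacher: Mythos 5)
Your overall strategy matches the paper's: build a real-gate quantum circuit whose $\ket{0\cdots0} \to \ket{0\cdots0}$ amplitude encodes a $\sharpP$-hard quantity (the paper uses $\Delta_C$, the gap between satisfying and unsatisfying assignments, which is naturally real so your worry about real amplitudes is resolved immediately), then compile via KLM, then extract the permanent via the \ptf. The encoder/decoder gadget idea is exactly what the paper does, and the bookkeeping to make $U_{S,T} = U$ is the right goal.

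However, there is a concrete gap in Step~2 that your plan as written does not address. You observe that dual-rail encoding leaves half the logical modes empty and propose encoder/decoder gadgets to fix this—good. But the original KLM protocol introduces a \emph{second} source of empty modes: the $\NS$ gate underlying CSIGN postselects its two ancilla modes on the Fock state $\ket{0,1}$, so one ancilla mode per $\NS$ necessarily ends with zero photons. Wrapping the ``KLM network'' with $\encoder$ and $\decoder$ as you describe does not touch these ancillas, and padding with extra one-photon modes that ``pass through untouched'' does not help either, since the offending modes are already wired into the $\NS$ circuit and \emph{must} end empty for the postselection to implement $\NS$. The paper sidesteps this by discarding the $\NS$-based CSIGN entirely and substituting Knill's $4$-mode gadget $V$, which implements a postselected CSIGN with both ancilla modes postselected on $\ket{1,1}$ — so every ancilla starts and ends with one photon, and the matrix $V$ is real orthogonal. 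Without this (or an explicit re-engineering of the $\NS$ gadget to postselect on a one-photon-per-mode state, which the paper notes is possible but nontrivial), your construction does not achieve the claimed ``one photon in every mode'' invariant, and the \ptf\ would give you $\Per(U_{S,T})$ for a non-identity submatrix rather than $\Per(U)$ itself. A secondary detail you leave unresolved is the exact real decomposition of the Toffoli gate; the paper needs an explicit construction (Lemma~\ref{lem:toff_proof}, using $\B = R_{\pi/4}$) rather than a density argument, both for cleanliness and because the later finite-field results require exact algebraic control of the gate entries.
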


The orthogonal matrices form a group under composition, the \emph{real orthogonal group}, usually denoted $\mathrm{O}(n,\mathbb R)$. This is a subgroup of the unitary group, $\mathrm{U}(n, \mathbb C)$, which is itself a subgroup of the general linear group $\mathrm{GL}(n, \mathbb C)$. Notice then that the hardness result of Theorem~\ref{thm:main_real_orthogonal} will carry over to unitary matrices and invertible matrices.\footnote{See Corollary~\ref{cor:classical_groups} for a complete list of classical Lie groups for which our result generalizes.}

Our result follows the outline of Aaronson's linear optical proof \cite{aar:per} that the permanent is $\sharpP$-hard. In particular, our result depends on the KLM construction \cite{klm}, and a subsequent improvement by Knill \cite{knill:2002}, which will happen to have several important properties for our reduction. 

Let us briefly summarize Aaronson's argument.  Suppose we are given a classical circuit $C$, and wish to compute $\Delta_C$, the number of satisfying assignments minus the number of unsatisfying assignments.  Clearly, calculating $\Delta_C$ is a $\sharpP$-hard problem.  The first thing to notice is that there exists a simple quantum circuit $Q$ such that the amplitude $\bra{0 \cdots 0}Q\ket{0 \cdots 0}$ is proportional to $\Delta_C$. The KLM protocol of Theorem~\ref{thm:klm} implies that there exists a postselected linear optical experiment simulating $Q$.   This results in the following chain which relates $\Delta_C$ to a permanent.

$$\Per(U_{I, I}) = \bra{I} \varphi(U) \ket{I} \propto  \bra{0 \cdots 0} Q \ket{0 \cdots 0} \propto \Delta_C.$$

Notice that Aaronson's result does not imply that the permanent of $U \in \mathrm{U}(n, \mathbb C)$ is $\sharpP$-hard since $U_{I,I}$ is a \emph{submatrix} of $U$.  If, however, $\ket{S} = \ket{T} = \ket{1, \ldots, 1}$, then $U_{S,T} = U$ so the analogous chain relates $\Delta_C$ directly to the permanent of $U$, which is a complex unitary matrix. In fact, this is exactly what we will arrange by modifying the KLM protocol.  Furthermore, we will be careful to use \emph{real} matrices exclusively during all gadget constructions, which will result in $U$ being real, finishing the proof of Theorem~\ref{thm:main_real_orthogonal}.
 

In the following subsections, we will focus on the exact details of the reduction and emphasize those points where our construction differs from that of Aaronson.  

\subsection{Constructing the Quantum Circuit}
\label{subsec:quantum_circuit}
Let $C : \{0,1\}^n \rightarrow \{0,1\}$ be a classical Boolean circuit of polynomial size and let $$\Delta_C :=  \sum_{x \in \{0,1\}^n} (-1)^{C(x)}.$$  In this section, we prove the following:
\begin{theorem}
\label{thm:build_Q}
Given $C$, there exists a $p(n)$-qubit quantum circuit $Q$ such that $$\bra{0}^{\otimes p(n)} Q \ket{0}^{\otimes p(n)} = \frac{\Delta_C}{2^n}$$ where $p(n)$ is some polynomial in $n$.  Furthermore, $Q$ can be constructed in polynomial time with a polynomial number of real single-qubit gates and $\CSIGN$ gates.
\end{theorem}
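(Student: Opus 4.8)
The plan is to follow the standard ``Hadamard sandwich with phase kickback'' template, and to spend the real effort on ensuring every gate can be taken to be a real single-qubit gate or a $\CSIGN$ gate.

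First I would construct the circuit. Partition the $p(n)$ qubits, all initialized to $\ket 0$, into an $n$-qubit input register, a work register, and a one-qubit flag. Apply $H^{\otimes n}$ to the input register to obtain $2^{-n/2}\sum_{x \in \{0,1\}^n} \ket{x}$. By the standard reversibilization of classical circuits (Bennett), $C$ has a reversible implementation using only NOT, $\CNOT$, and Toffoli gates on $\mathrm{poly}(n)$ qubits that sends $\ket{x}\ket{0 \cdots 0}$ to $\ket{x}\ket{g(x)}\ket{C(x)}$ for some garbage string $g(x)$; run it. Apply the real single-qubit gate $Z = \mathrm{diag}(1,-1)$ to the flag, multiplying each branch by $(-1)^{C(x)}$; then run the reversible circuit backwards to clear the work register and flag. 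Finally apply $H^{\otimes n}$ to the input register again. Pairing $\bra{0}^{\otimes p(n)}$ with the resulting state $2^{-n/2}\sum_x (-1)^{C(x)}\ket{x}\ket{0 \cdots 0}$ gives amplitude $2^{-n}\sum_x (-1)^{C(x)} = \Delta_C/2^n$, which is exactly the claimed identity.

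Next I would force every gate into the target set. The gates $H$, $Z$, and NOT are already real single-qubit gates, and $\CNOT = (I \otimes H)\,\CSIGN\,(I \otimes H)$, so $\CNOT$ is allowed. The sole obstruction is the Toffoli gate: although Toffoli is itself real (a permutation matrix), the textbook way of writing it in terms of $\CNOT$s and single-qubit gates introduces the non-real gate $\mathrm{diag}(1,e^{i\pi/4})$, so that route is unavailable. I expect this to be the crux of the theorem. The remedy is to replace each Toffoli by a gadget implementing it exactly --- or, equivalently up to single-qubit gates on the target, a doubly-controlled sign gate ($\CCR$) --- using only real single-qubit gates, $\CSIGN$ gates, and one fresh ancilla qubit: intuitively, substituting a suitable real single-qubit gate for the role played by $\mathrm{diag}(1,e^{i\pi/4})$ and letting the ancilla absorb what would otherwise be complex phases. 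I would isolate this as a separate lemma; granting it, the present theorem follows.

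Finally, a bookkeeping check. Composing the reversible simulation of $C$ (of size $\mathrm{poly}(n)$), the $2n$ Hadamards, the single $Z$, and one real-Toffoli gadget per Toffoli gate (each contributing one ancilla and $O(1)$ additional $\CSIGN$ and single-qubit gates), the circuit $Q$ uses $p(n) = \mathrm{poly}(n)$ qubits and $\mathrm{poly}(n)$ gates, every one a real single-qubit gate or a $\CSIGN$ gate, and the whole construction is clearly computable from $C$ in polynomial time. All the genuine difficulty lies in the real-Toffoli gadget; the rest is routine.
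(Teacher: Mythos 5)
Your proposal is correct and follows essentially the same route as the paper: a Hadamard sandwich with phase kickback (the paper uses $\O_C$ acting on a $\ket{-}$ ancilla, you compute $C(x)$, hit the flag with $Z$, and uncompute --- trivially equivalent), reduction of $C$ to a reversible Toffoli/NOT circuit, and isolation of the real exact Toffoli decomposition as the key separate lemma. The paper's Lemma~\ref{lem:toff_proof} is precisely the gadget you postulate, and it does indeed use one fresh ancilla qubit as you anticipate.
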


To prove the theorem, it will suffice to implement $\O_C$, the standard oracle instantiation of $C$ on $n+1$ qubits.  That is, $\O_C \ket{x, b} = \ket{x, b \oplus C(x)}$ for all $x \in \{0,1\}^n$ and $b \in \{0,1\}$.  The circuit for $Q$ is depicted below, where $H$ is the Hadamard gate and $Z = (\begin{smallmatrix} 1 & 0 \\ 0 & -1 \end{smallmatrix})$ is the Pauli $\sigma_Z$ gate.

\begin{center}
\mbox{
\Qcircuit @C=.5em @R=.8em {
& \qw & \gate{H} & \qw & \multigate{3}{\O_C} & \qw & \gate{H} & \qw \\ 
& \qw & \gate{H} & \qw & \ghost{\O_C} & \qw & \gate{H} & \qw \\ 
&  & \raisebox{.5em}{\vdots} &  &  &  & \raisebox{.5em}{\vdots}  &  \\ 
& \qw & \gate{H} & \gate{Z} &\ghost{\O_C} & \gate{Z} & \gate{H} & \qw
}}
\end{center}

From this construction, we have 
$$\bra{0}^{\otimes p(n)} Q \ket{0}^{\otimes p(n)} = \frac{1}{2^n} \( \sum_{x \in \{0,1\}^n} \bra{x} \bra{-} \) \O_C \( \sum_{x \in \{0,1\}^n} \ket{x} \ket{-} \) = \frac{\Delta_C}{2^n}.$$

Therefore, to complete the proof, it suffices to construct $\O_C$ from CSIGN and single-qubit gates. For now let us assume we have access to the Toffoli gate as well. Since $C$ is a classical Boolean function of polynomial complexity, $C$ can be implemented with a polynomial number of Toffoli and NOT gates\footnote{Because we require that all ancillas start in the $\ket{0}$ state, we also need the NOT gate to create $\ket{1}$ ancillas.} and a polynomial number of ancillas starting in the $\ket{0}$ state \cite{toffoli}. 

Let us describe, briefly, one way to construct $\O_C$. Suppose we are given the circuit $C$ as a network of polynomially many NAND gates. For each wire, with the exception of the input wires, we create an ancilla initially in state $\ket{0}$ and use the NOT gate to put it in state $\ket{1}$. For each NAND gate (in topological ordering, i.e., such that no gate is applied before its inputs have been computed), we apply a Toffoli gate targeting the ancilla associated with the output wire, and controlled by the qubits associated with its input wires (whether they are the output of an earlier NAND gate, or an actual input). Hence, the target qubit is in state $\ket{1}$ unless both control qubits are in state $\ket{1}$, simulating a NAND gate. Once we have applied all the gates of $C$, the output of the function will exist in the final ancilla register. We can now apply the same sequence of gates (ignoring the final Toffoli gate) in reverse order, which returns all other ancillas and inputs to their original value.  This completes the construction.


Finally, we must construct the Toffoli gate from single-qubit gates and CSIGN gates.  Unfortunately, Aaronson's proof \cite{aar:per} uses a classic construction of the Toffoli gate which uses complex single-qubit gates (see, for example, Nielsen and Chuang \cite{nc}).  This will later give rise to linear optical circuits with complex matrix representations as well.\footnote{Actually, the proof of Aaronson \cite{aar:per} claims that the final linear optical matrix consists entirely of real-valued entries even though the matrices of the individual single-qubit gates have complex entries.  In fact, the matrix \emph{does} have complex entries, but our construction for Toffoli suffices to fix this error.}  Therefore, we will restrict ourselves to CSIGN and \emph{real} single-qubit gates in our construction of the Toffoli gate.\footnote{Although it is known that the Toffoli gate and the set of real single-qubit gates suffice to densely generate the orthogonal matrices (i.e., $\mathrm{O}(2^n)$ for every $n>0$) \cite{shi:gate}, it will turn out to be both simpler and necessary to have an exact decomposition.  In particular, we will need an exact construction of the Toffoli gate in Section~\ref{sec:finite_field} where we discuss the computation of permanents in finite fields.}

\begin{lemma}
\label{lem:toff_proof}
There exists a circuit of CSIGN, Hadamard, and $\B$ gates which implements a Toffoli gate \emph{exactly}, where 
$$\B = \frac{1}{2}
\begin{pmatrix}
 \sqrt{2+\sqrt{2}} & -\sqrt{2-\sqrt{2}} \\
 \sqrt{2-\sqrt{2}} & \sqrt{2+\sqrt{2}} 
\end{pmatrix}.$$
\end{lemma}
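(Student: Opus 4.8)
The plan is to give an explicit gate-by-gate decomposition of the Toffoli gate into CSIGN, Hadamard, and $\B$ gates, verified by direct matrix computation. First I recall the standard "textbook" decomposition of the Toffoli (controlled-controlled-$X$) gate: conjugate the target by Hadamards to turn it into a controlled-controlled-$Z$, then write controlled-controlled-$Z$ as a sequence of three controlled-$V$/controlled-$V^\dagger$ gates and two CNOTs, where $V$ is a square root of $Z$ (in the usual version $V = \sqrt{X}$ up to conjugation, or equivalently one uses a $T$-gate based circuit with six CNOTs). The obstacle with that route is precisely the one flagged in the text: $\sqrt{Z} = \operatorname{diag}(1,i)$ and the associated single-qubit gates are genuinely complex, so that decomposition will not do. So instead I would look for a real square root, or more precisely exploit that $\CSIGN = \mathrm{CC}\text{-}Z$-like structure already gives a controlled-$Z$ essentially for free, and build the three-qubit version by a suitable "staircase."

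The cleaner approach I would actually pursue: note $\B$ is a rotation by the angle $\theta$ with $\cos\theta = \tfrac12\sqrt{2+\sqrt2}$, i.e. $\theta = \pi/8$ — indeed $\B = R_{\pi/8}$ in the notation $R_\alpha = \bigl(\begin{smallmatrix}\cos\alpha & -\sin\alpha\\ \sin\alpha & \cos\alpha\end{smallmatrix}\bigr)$, since $\cos(\pi/8) = \tfrac12\sqrt{2+\sqrt2}$ and $\sin(\pi/8) = \tfrac12\sqrt{2-\sqrt2}$ (half-angle formulas). Then $\B^2 = R_{\pi/4}$, $\B^4 = R_{\pi/2} = \bigl(\begin{smallmatrix}0&-1\\1&0\end{smallmatrix}\bigr)$, and $H R_{\pi/2} H = \operatorname{diag}(1,-1)\cdot(\text{sign/permutation adjustments})$ — more usefully, $H \B^4 H$ or a short word in $H$ and $\B$ will produce the phase gate's real analogue. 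The key identity to establish is that some controlled version of $\B$ (controlled-$R_{\pi/8}$, or controlled-$R_{\pi/4}$) can be synthesized from CSIGN and single-qubit $H$, $\B$ gates — because $\CSIGN$ is itself controlled-$\operatorname{diag}(1,-1)$, conjugating the target qubit by a gate that diagonalizes $\B$ turns $\CSIGN$ into controlled-$\B'$ for the appropriate $\B'$. Concretely: if $W$ is a single-qubit real orthogonal gate with $W \operatorname{diag}(1,-1) W^{-1}$ equal to (plus/minus) a rotation, then sandwiching $\CSIGN$ between $W^{-1}$ and $W$ on the target realizes a controlled-rotation; and since $\B = R_{\pi/8}$ is such a rotation, and rotations compose additively, one builds controlled-$R_{\pi/4}$ (needed twice) and controlled-$R_{-\pi/4}$ from controlled-$\CSIGN = $ controlled-$R_{\pi/2}$-ish blocks together with fixed single-qubit corrections. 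Then I assemble the three-qubit gate using the standard Barenco-type identity
$$\mathrm{Toff} = (H\text{ on target}) \cdot (c_2\text{-}V) \cdot \CNOT_{12} \cdot (c_2\text{-}V^\dagger) \cdot \CNOT_{12} \cdot (c_1\text{-}V) \cdot (H\text{ on target}),$$
with $V$ the real square root of the (real, conjugated) $Z$, and with each $\CNOT$ itself written as $H\cdot\CSIGN\cdot H$ on the target.

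The main obstacle is getting the square-root-of-$Z$ step to stay real: naively $\sqrt{Z}$ is complex, so I must instead first conjugate the would-be $\mathrm{CC}$-$Z$ into a basis where the relevant two-level block is a genuine rotation (a real $2\times2$ special orthogonal matrix), take its real square root there (a rotation by half the angle, hence expressible via $\B$ and its powers), and only then conjugate back — all while keeping the conjugating single-qubit gates inside the allowed set $\{H, \B\}$ (and noting $\B$ generates $R_{k\pi/8}$ for all integers $k$, so in particular $R_{\pi/4}, R_{\pi/2}, R_\pi$ and the permutation/sign gates needed for basis changes are available exactly). I would finish by multiplying out the resulting explicit circuit and checking it equals the $8\times8$ Toffoli matrix; this is a finite, mechanical verification. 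The only subtlety to watch is global/relative phases introduced by the Hadamard conjugations and by $R_{\pi/2}^2 = R_\pi = -I$ on the relevant subspace — these must cancel, which they do because Toffoli is itself real orthogonal with determinant $1$ and every building block used is real orthogonal, so no stray $i$'s can appear and the phase bookkeeping reduces to tracking signs.
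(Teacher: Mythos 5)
Your identification of $\B$ as the $2\times 2$ rotation by $\pi/8$ and your idea to use $\B^4 = R_{\pi/2}$ (in your convention) as a real square root inside a Barenco-style controlled-controlled decomposition is the right first move, and it matches what the paper does in Step 1 (albeit with a slightly different circuit than Barenco's). But the plan has a genuine gap at the final assembly step, and it hinges on a factual error: you assert that Toffoli has determinant $1$, but Toffoli is the transposition $\ket{110}\leftrightarrow\ket{111}$ and therefore has determinant $-1$. Meanwhile, every allowed gate embedded into $n\ge 3$ qubits has determinant $+1$: a single-qubit gate $G$ tensored with identity on $n-1$ qubits has determinant $\det(G)^{2^{n-1}}=+1$ for $n\ge 2$, and $\CSIGN$ tensored with identity has determinant $(-1)^{2^{n-2}}=+1$ for $n\ge 3$. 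So no word in $H$, $\B$, $\CSIGN$ on exactly three qubits can equal Toffoli; an ancilla qubit is unavoidable, and your proposal never introduces one.

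Concretely, your Barenco circuit with $V=R_{\pi/2}$ (a real orthogonal square root of $R_\pi$) produces controlled-controlled-$R_\pi$, where $R_\pi=\bigl(\begin{smallmatrix}0&-1\\ 1&0\end{smallmatrix}\bigr)$. Conjugating the target by $H$ does not turn this into Toffoli; one checks $H R_\pi H = R_\pi^{-1}$, not $X$. So what you get is Toffoli up to a stray sign on the $\ket{11}$-controlled block, and the determinant argument above shows that no further real-orthogonal single-qubit ``corrections'' on the same three wires can remove that sign. Your claim that ``the phase bookkeeping reduces to tracking signs'' and that ``these must cancel'' because everything is real orthogonal is exactly where the argument breaks: the signs provably cannot cancel on three qubits. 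The paper handles this by continuing the construction: it first uses $\CCR$ and $\CSIGN$ to build a classical non-affine three-bit reversible gate (an even permutation), and then, crucially, synthesizes Toffoli from that gate and CNOTs \emph{with one extra ancilla initialized to $\ket{0}$}, noting explicitly that the ancilla is needed because the non-affine gate is an even permutation while Toffoli is odd. If you want to stay with the Barenco route, you would need an analogous step: combine two CC-$R_\pi$ gates (or a CC-$R_\pi$ with additional CSIGNs and an ancilla) so that the offending sign is absorbed, for instance by computing into a fourth qubit, applying $R_\pi$ there (which is harmless), and uncomputing. Without that, the construction does not implement Toffoli exactly.
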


We prove this lemma in Appendix~\ref{app:toff_proof}.  This completes the proof of Theorem~\ref{thm:build_Q}.

\subsection{Postselected Linear Optical Gadgets}
\label{subsec:linear_optical}

We will construct a postselected linear optical circuit $O$ which will simulate the qubit circuit $Q$ on the all zeros input via a modified version of the KLM protocol.  The following chain of relations will hold:\footnote{To clarify, $\ket{0 \cdots 0}$ is a tensor product of qubits in the state $\ket{0}$ and $\ket{1, \ldots, 1}$ is a Fock state with $1$ photon in every mode.}
$$\Per(O) = \bra{1, \ldots,  1} \varphi(O) \ket{1, \ldots, 1} \propto  \bra{0 \cdots 0} Q \ket{0 \cdots 0} \propto \Delta_C.$$

The first step was to convert from a classical circuit to a quantum circuit.  Below we formalize the second step: converting from a quantum circuit to a linear optical circuit.

\begin{theorem}
\label{thm:klm_plus}
Given an $n$-qubit quantum circuit $Q$ with a polynomial number of CSIGN and real single-qubit gates, there exists a linear optical circuit $O$ on $4n + 2\Gamma$ modes such that 
$$\bra{1, \ldots, 1} \varphi(O) \ket{1, \ldots, 1}  = \(\frac{1}{3}\sqrt{\frac{2}{3}}\)^\Gamma \(\frac{-1}{\sqrt{6}}\)^n  \bra{0}^{\otimes n} Q \ket{0}^{\otimes n},$$
where $\Gamma$ is the number of CSIGN gates in $Q$.  Furthermore, the matrix representing $O$ is a real orthogonal matrix and can be computed in time polynomial in $n$.
\end{theorem}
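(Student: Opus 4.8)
The plan is to adapt the KLM protocol (Theorem~\ref{thm:klm}) so that the linear optical circuit $O$ both \emph{starts} and \emph{ends} with the Fock state $\ket{1,\ldots,1}$ having exactly one photon in every mode, rather than the dual-rail state $\ket{I} = \ket{0,1,\ldots,0,1}$. This is the key difference from Aaronson's approach: since $O_{S,T} = O$ exactly when $\ket{S} = \ket{T} = \ket{1,\ldots,1}$, the \ptf then gives $\Per(O) = \bra{1,\ldots,1}\varphi(O)\ket{1,\ldots,1}$ with no normalization factor and no passage to a submatrix. So the architecture I would build is: (i) an \emph{encoder} gadget that, acting on an input of one photon per mode, postselects onto the dual-rail encoding of $\ket{0}^{\otimes n}$ (i.e.\ maps $\ket{1,1}$ on each qubit's two modes to $\ket{0,1}$, up to the postselection constant); (ii) the body of the KLM circuit implementing $Q$ in the dual-rail encoding, using $4n$ modes so that each of the $n$ qubits gets a pair of "working" modes and a pair of "buffer" modes that are restored to one photon each at the end; (iii) a \emph{decoder} gadget that is essentially the time-reverse of the encoder, mapping the dual-rail encoding of $\ket{0}^{\otimes n}$ back to one photon per mode. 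The $2\Gamma$ extra modes account for the ancilla modes consumed by the postselected $\NS$ gates inside the $\Gamma$ CSIGN gadgets, exactly as in KLM.

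I would carry this out in the following steps. \textbf{Step 1:} Fix the single-qubit dual-rail encoding and recall from the KLM proof that real single-qubit gates act on the two dual-rail modes by the same $2\times 2$ real matrix, so they contribute real orthogonal blocks. \textbf{Step 2:} Analyze the postselected $\NS$ gate and the CSIGN gadget built from two of them (Figure~\ref{fig:csign_from_ns}), tracking the exact amplitude picked up on the postselected branch; this is where the $\tfrac13\sqrt{2/3}$ factor per CSIGN arises, and I would verify the underlying $\NS$ matrix can be chosen real orthogonal (the standard KLM $\NS$ uses real beamsplitter angles, so this is fine). \textbf{Step 3:} Design the encoder/decoder gadgets on the pair of modes per qubit: I need a real orthogonal transformation on a small number of modes which, when fed $\ket{1,1}$ (or $\ket{1,1}$ together with ancilla modes each holding one photon) and postselected appropriately, outputs $\ket{0,1}$ up to a constant; compute that constant and check it equals $-1/\sqrt6$ per qubit (matching the stated $(-1/\sqrt6)^n$). \textbf{Step 4:} Compose everything via the homomorphism property of $\varphi$ (first theorem in Section~\ref{sec:linear_optics}): the overall matrix $O$ is a product of real orthogonal blocks embedded in $\mathrm{O}(4n+2\Gamma,\mathbb R)$, hence real orthogonal, and polynomial-time computable since $Q$ has polynomially many gates. \textbf{Step 5:} Assemble the amplitude chain: $\bra{1,\ldots,1}\varphi(O)\ket{1,\ldots,1}$ equals (decoder constant) $\times$ (KLM body amplitude, which is $(1/4)^\Gamma$ times... wait — here one must be careful that the postselection constants in \emph{this} modified protocol differ from the $1/4^\Gamma$ of Theorem~\ref{thm:klm}) $\times$ (encoder constant) $\times \bra{0}^{\otimes n}Q\ket{0}^{\otimes n}$, and collect the constants into $\bigl(\tfrac13\sqrt{2/3}\bigr)^\Gamma(-1/\sqrt6)^n$.

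The main obstacle I expect is \textbf{Step 3}: constructing encoder and decoder gadgets that are genuinely real orthogonal, use only a bounded number of extra modes, and correctly interface with the dual-rail body — in particular making sure the buffer modes really do return to exactly one photon each (so that the final state is $\ket{1,\ldots,1}$ on \emph{all} $4n+2\Gamma$ modes and not just on the dual-rail modes), and that the postselection succeeds with a nonzero, exactly-computable amplitude. A secondary subtlety is bookkeeping the various postselection normalization constants: the factor in this theorem is not simply the $1/4^\Gamma$ of the original KLM statement, so I would recompute the CSIGN-gadget amplitude from scratch under whatever postselection convention the encoder/decoder force, and confirm the clean closed form $\bigl(\tfrac13\sqrt{\tfrac23}\bigr)^\Gamma(-1/\sqrt6)^n$. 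Everything else — realness of single-qubit blocks, the homomorphism/composition argument, and polynomial-time constructibility — is routine given the tools already set up in Section~\ref{sec:linear_optics}.
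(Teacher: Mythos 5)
Your high-level plan — sandwich a dual-rail KLM body between postselected encoder and decoder gadgets so that the linear optical circuit starts and ends on $\ket{1,\ldots,1}$, then read off $\Per(O)$ directly from the $\varphi$-transition formula — is exactly the architecture of the paper's proof, and your reasoning about why $O_{S,T}=O$ and why realness propagates through $\varphi$ is correct. But there is one substantive gap in Step~2, and a related under-specification in Step~3.

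The gap: you propose to implement CSIGN using the original KLM $\NS$ gadget (Figure~\ref{fig:csign_from_ns}) and treat the remaining work as bookkeeping the postselection constants. This cannot work as stated, because the KLM $\NS$ gate postselects its ancilla modes onto $\ket{0,1}$ — it \emph{requires} one ancilla mode to begin and end empty — which directly violates the invariant that every mode of $O$ starts and ends with exactly one photon. This is not a normalization subtlety; it is the second of the two obstacles the construction must remove (the first being the dual-rail issue you do address). The paper resolves it by switching to Knill's $4$-mode gadget $V$, which realizes CSIGN in one shot with its two ancilla modes postselected onto $\ket{1,1}$; that is also where the per-gate factor $\tfrac13\sqrt{2/3}$ comes from and where the ancilla count $2\Gamma$ (rather than the $4\Gamma$ your two-$\NS$-per-CSIGN scheme would need) comes from. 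Notice your own stated mode count $4n+2\Gamma$ is consistent with the $V$-gadget design, not with the $\NS$-based design you describe. (One \emph{could} salvage the $\NS$ route by building a real $\NS$ gate that postselects on $\ket{1,1}$ — the paper does exactly this in Appendix~\ref{app:ns_approach} as a secondary construction — but you would have to recognize and solve that problem, and then the closed-form constant would not be $\tfrac13\sqrt{2/3}$.)

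The related under-specification in Step~3: you say the encoder should map $\ket{1,1}$ on a qubit's two modes to $\ket{0,1}$ "up to a postselection constant." As stated this violates photon-number conservation. The extra photon has to go somewhere, and placing it correctly is the crux of the encoder design. In the paper each qubit owns four modes in state $\ket{1,1,1,1}$; the encoder $E$ acts on three of them and produces (after postselecting one ancilla mode on $\ket{1}$) the state $\ket{0,1,2,1}$ — two dual-rail modes plus a buffer mode holding \emph{two} photons plus the postselected ancilla. The decoder is then just a Hadamard splitting the two-photon buffer back into $\ket{1,1}$ across two modes, giving the factor $-1/\sqrt2$ per qubit, which multiplied by the encoder's $1/\sqrt3$ yields $(-1/\sqrt6)^n$. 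Your "pair of working modes plus pair of buffer modes restored to one photon each" picture is close but misses that the buffer is loaded up to two photons mid-computation and that you need a designated postselection mode. Without committing to a concrete photon-number-conserving encoder you cannot verify the $-1/\sqrt6$ per-qubit constant, so the final amplitude formula is not yet justified by the proposal.
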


We now show how to construct such a linear optical circuit $O$ using the original KLM protocol, subsequent improvements by Knill \cite{knill:2002}, and a new gadget unique to our problem.  First, let us recall our main issue with using the original KLM protocol:  to prove that \emph{orthogonal} matrices are $\sharpP$-hard, we must have that all modes start and end with exactly one photon.  There are two instances in which the original KLM protocol requires a mode to be empty at the beginning and end of the computation.  First, the $\NS$ gate postselects on the Fock state $\ket{0,1}$, and second, KLM protocol works in a dual-rail encoding.  Therefore, half of the modes in the original KLM protocol start and end empty.

To overcome the first obstacle, we appeal to subsequent work of Knill \cite{knill:2002}, in which the $\NS$ gadget construction for CSIGN is replaced by a single $4$-mode gadget $V$, which directly implements CSIGN with two modes postselected in state $\ket{1,1}$.  From the matrix gadget
$$V = \frac{1}{3 \sqrt{2}} \begin{pmatrix}
 -\sqrt{2} & -2 & 2 & 2 \sqrt{2} \\
 2 & -\sqrt{2} & -2 \sqrt{2} & 2 \\
 -\sqrt{6 + 2 \sqrt{6}} & \sqrt{6-2 \sqrt{6}} & -\sqrt{3+\sqrt{6}} & \sqrt{3-\sqrt{6}} \\
 -\sqrt{6-2 \sqrt{6}} & -\sqrt{6 +2 \sqrt{6}} & -\sqrt{3-\sqrt{6}} & -\sqrt{3+\sqrt{6}} \\
\end{pmatrix}
$$
we can directly calculate the transition amplitudes of the circuit:
\begin{align*}
\bra{0, 0, 1, 1} \varphi(V) \ket{0, 0, 1, 1} &= \textstyle{\frac{1}{3}\sqrt{\frac{2}{3}}} & \bra{0, 1, 1, 1} \varphi(V) \ket{1, 0, 1, 1} &= 0 \\
\bra{0, 1, 1, 1} \varphi(V) \ket{0, 1, 1, 1} &= \textstyle{\frac{1}{3}\sqrt{\frac{2}{3}}} &
\bra{1, 0, 1, 1} \varphi(V) \ket{0, 1, 1, 1} &= 0 \\
\bra{1, 0, 1, 1} \varphi(V) \ket{1, 0, 1, 1} &= \textstyle{\frac{1}{3}\sqrt{\frac{2}{3}}} &
\bra{2, 0, 1, 1} \varphi(V) \ket{1, 1, 1, 1} &= 0 \\
\bra{1, 1, 1, 1} \varphi(V) \ket{1, 1, 1, 1} &= \textstyle{-\frac{1}{3}\sqrt{\frac{2}{3}}} &\bra{0, 2, 1, 1} \varphi(V) \ket{1, 1, 1, 1} &= 0
\end{align*}

We now argue that these transition amplitudes suffice to generate a postselected CSIGN.  Consider the linear optical circuit depicted in Figure~\ref{fig:app_V}: the first two inputs of the $V$ gadget are applied to the dual rail modes which contain a photon whenever their corresponding input qubits of the CSIGN gate are in state $\ket{1}$; the next two modes are postselected in the $\ket{1,1}$ state.  First, because we postselect on the final two modes ending in the state $\ket{1,1}$, we only need to consider those transitions for which those two modes end in that state.  Secondly, because we use ``fresh'' ancillary modes for every CSIGN gate, we can always assume that those two modes start in the $\ket{1,1}$ state.  This already vastly reduces the number of cases we must consider.

\begin{figure}[h]
\centering
\mbox{
\Qcircuit @C=1em @R=.6em {
& & \multigate{3}{V} & \qw & & &         & & &  \multigate{1}{\CSIGN} & \qw \\ 
& & \ghost{V} & \qw & & &           & & & \ghost{\CSIGN} & \qw \\
\lstick{\ket{1}} & & \ghost{V} & \qw & \ket{1} & & \raisebox{1em}{=} & & & \qw & \qw  \\
\lstick{\ket{1}} & & \ghost{V} & \qw & \ket{1} & &          & & &    \qw & \qw 
}}
\caption{Applying a postselected $V$ gadget to generate CSIGN.}
\label{fig:app_V}
\end{figure}
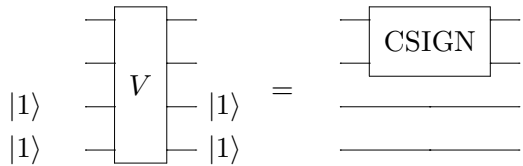

Finally, we wish to know what will happen when the first two modes start in the states $\ket{0,0}, \ket{0,1}, \ket{1,0},$ and $\ket{1,1}$.  Our construction will ensure that there is never more than one photon per mode representing one of the dual-rail encoded qubits.  For instance, the transition amplitudes of $V$ show that whenever the first two modes of the circuit each start with a photon, there is 0 probability (after postselection) that those photons transition to a state in which one of those modes contains 2 photons and the other contains no photons.  

We find that all other amplitudes behave exactly as we would expect for CSIGN.  Since each of the acceptable transitions (e.g. from the state $\ket{0,1}$ to the state $\ket{0,1}$) has equal magnitude, we only have left to check that $V$ flips the sign of the state whenever the input modes are both in the $\ket{1}$ state, which is indeed the case.  Importantly, because $\varphi$ is a homomorphism, we can analyze each such gate separately.  Therefore, using the above we can now construct a linear optical circuit $O$ where all of our postselected modes for CSIGN start and end in the $\ket{1}$ state. 

We now turn our attention to the dual-rail encoding.  Instead of changing the dual-rail encoding of the KLM protocol directly, we will  start with one photon in every mode and apply a linear optical gadget to convert to a dual-rail encoding.  Of course, the number of photons in the circuit must be conserved, so we will dump these extra photons into $n$ modes separate from the modes of the dual-rail encoding.  Specifically, each logical qubit is our scheme is initially represented by four modes in the state $\ket{1,1,1,1}$.  We construct a gadget that moves a photon from the first mode to the third mode, postselecting on a single photon in the last mode.  That is, under postselection, we get the transition
$$\ket{1,1,1,1} \rightarrow \ket{0,1,2,1},$$
where the first two modes now represent a $\ket{0}$ qubit in the dual-rail encoding, the third mode stores the extra photon (which we will reclaim later), and the last qubit is necessary for postselection.  We call the gadget for this task the \emph{encoding gadget} $E$, and it is applied to the first, third, and fourth mode of the above state.  The matrix for $E$ is 
$$\encoder = \frac{1}{\sqrt{6}}
\begin{pmatrix}
 \sqrt{2} & -\sqrt{2} & \sqrt{2} \\
 0 & \sqrt{3} & \sqrt{3} \\
 -2 & -1 & 1 \\
\end{pmatrix}$$
from which we get the following transition amplitudes
\begin{center}
\begin{tabular}{l c l c l }
$\bra{1,1,1} \varphi(\encoder) \ket{1,1,1} = 0$, & & $\bra{2,0,1} \varphi(\encoder)\ket{1,1,1}  = 0$, &  & $\bra{0,2,1} \varphi(\encoder) \ket{1,1,1}= \frac{1}{\sqrt{3}}$. 
\end{tabular} 
\end{center}

After applying the encoding gadget to each logical qubit, we can implement the KLM protocol as previously discussed.\footnote{One might wonder why we cannot simply apply the encoding gadget to the \emph{entire} input, thus circumventing the need to use Knill's more complicated $V$ gadget to implement CSIGN.  Examining Theorem~\ref{thm:klm} carefully, we see that all the postselection actually happens at the end of the computation.  One might be concerned that once we measured the state $\ket{0,1}$ to implement $\NS$, those modes would remain in that state.  Nevertheless, it \emph{is} possible to compose the gadgets in such a way to allow for postselection on $\ket{0}$ while maintaining that the desired amplitude is still on the $\ket{1, \ldots, 1}$ state.  We omit such a design since $V$ will turn out to have some nice properties, including its minimal usage of ancillary modes.}  Therefore, the relevant amplitude in the computation of $Q$ is now proportional to amplitude of the Fock state which has $n$ groups of modes in the state $\ket{1,0,2,1}$ and $2\Gamma$ modes in the state $\ket{1}$.  Because we want to return to a state which has one photon in every mode, we must reverse the encoding step.\footnote{Notice that postselection was required for the encoding gadget, so it does not have a natural inverse.}  For this purpose, we construct a \emph{decoding gadget} $D$, which will not require any extra postselected modes.  We apply the gadget to the second and third modes of the logical qubit such that the two photons in the third mode split with some nonzero probability.  The matrix for $D$ is 
$$\decoder = \frac{1}{\sqrt{2}}
\begin{pmatrix}
1 & 1 \\ 1 & -1 
\end{pmatrix}$$
from which the transition condition $\bra{1,1} \varphi(\decoder) \ket{0,2} = -1/\sqrt{2}$ follows.  Nearly any two-mode linear optical gate would suffice here, but $D$, the familiar Hadamard gate, optimizes the amplitude above.  Notice that if the logical qubit had been in the $\ket{1}$ state, then we would have applied $D$ to a three-photon state.  Because of conservativity, the resulting amplitude on the $\ket{1,1}$ state would be zero.  Putting together the encoding gadget, the KLM scheme, and the decoding gadget completes the proof of the theorem.

\subsection{Main Result}
We are finally ready to prove the Theorem~\ref{thm:main_real_orthogonal}, which we restate below.

\begin{real_repeated}
The permanent of a real orthogonal matrix is $\sharpP$-hard.  Specifically, given a polynomially sized Boolean circuit $C$, there exist integers $a, b \in \mathbb{Z}$ and a real orthogonal matrix $O$ computable in polynomial time such that 
$$\Per(O) = 2^a 3^b \Delta_C.$$
\end{real_repeated}
\begin{proof}
We reduce from the problem of calculating $\Delta_C$ for some polynomially sized Boolean circuit $C$ on $n$ bits.  By Theorem~\ref{thm:build_Q}, we first construct the quantum circuit $Q$ from CSIGN and single-qubit gates such that 
$\bra{0}^{\otimes p(n)} Q \ket{0}^{\otimes p(n)} = \Delta_C/2^n$.  Let $\Gamma$ be the number of CSIGN gates in $Q$.  We then convert the qubit circuit $Q$ to a linear optical circuit $O$ on $4p(n) + 2\Gamma$ modes using Theorem~\ref{thm:klm_plus}.  Notice that we can assume without loss of generality that $p(n)$ and $\Gamma$ are both even since we can always add an extra qubit to the circuit $Q$ and/or apply an extra CISGN gate to the $\ket{00}$ state.  Combined with the fact that the output amplitudes of linear optical experiments can be described by permanents via the \ptf, we have the following chain of consequences
\begin{align*}
\Per(O) &= \bra{1, \ldots, 1} \varphi(O) \ket{1, \ldots, 1} \\
&=  \(\frac{1}{3}\sqrt{\frac{2}{3}}\)^\Gamma \(\frac{-1}{\sqrt{6}}\)^{p(n)} \bra{0}^{\otimes p(n)} Q \ket{0}^{\otimes p(n)} \\
&=  \(\frac{1}{3}\sqrt{\frac{2}{3}}\)^\Gamma \(\frac{-1}{\sqrt{6}}\)^{p(n)} \(\frac{1}{2^n}\) \Delta_C \\
&=   2^a 3^b \Delta_C,
\end{align*}
where the last equality comes from the fact that $\Gamma$ and $p(n)$ are even.
\end{proof}


\section{Permanents over Finite Fields}
\label{sec:finite_field}

Valiant's foundational work on $\sharpP$ is well-known, but his contemporary work on the relationship between the permanent and the class we now know as $\ModkP$ is less appreciated. In another 1979 paper \cite{valiant_completeness_classes}, Valiant showed that the permanent modulo $p$ is $\ModpP$-complete, except when $p = 2$, in which case the permanent coincides with the determinant because $1 \equiv -1 \pmod{2}$. 
\begin{theorem}[Valiant \cite{valiant_completeness_classes}]
The problem of computing $\Per(M) \bmod p$ for a square matrix $M \in \mathbb F_p^{n \times n}$ is $\ModpP$-complete for any prime $p \neq 2$ (and in $\NC^2$ otherwise). 
\end{theorem}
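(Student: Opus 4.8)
The plan is to prove the dichotomy by treating the prime $p = 2$ separately from the odd primes. When $p = 2$ the sign character collapses: $\mathrm{sign}(\sigma) = \pm 1$ and $-1 \equiv 1 \pmod 2$, so term by term $\Per(M) \equiv \det(M) \pmod 2$ for every $M \in \mathbb{F}_2^{n \times n}$. Since the determinant over any field is computable in $\NC^2$ (Berkowitz \cite{berkowitz:1984}), this disposes of the ``otherwise'' clause. For the remainder I fix an odd prime $p$ and must show both that $\Per(\cdot) \bmod p$ lies in $\ModpP$ and that it is $\ModpP$-hard.

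For membership, recall that $\ModpP$, viewed as a class of functions, consists of the maps $x \mapsto \#\mathrm{acc}_N(x) \bmod p$ for nondeterministic polynomial-time machines $N$. Given $M \in \mathbb{F}_p^{n \times n}$, let $N$ guess a permutation $\sigma \in S_n$ via a canonical bijective encoding (e.g.\ the Lehmer code), so that each $\sigma$ has a unique witness and malformed guesses are rejected; compute $c_\sigma := \prod_{i=1}^{n} M_{i,\sigma(i)}$ in $\mathbb{F}_p$ in polynomial time; and then fork into exactly $c_\sigma$ accepting computations, reading $c_\sigma$ as an integer in $\{0,\dots,p-1\}$. Then $\#\mathrm{acc}_N(M) = \sum_{\sigma \in S_n} c_\sigma \equiv \Per(M) \pmod p$, where the permanent is taken over $\mathbb{F}_p$; hence $\Per(\cdot) \bmod p \in \ModpP$.

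For hardness I reduce, with a single oracle call, from the $\ModpP$-complete problem of computing modulo $p$ the number of satisfying assignments of a $3$CNF formula $\phi$. Here I invoke the combinatorial cycle-cover picture of the permanent together with Valiant's gadget construction \cite{valiant} (or the streamlined variant of Ben-Dor and Halevi \cite{ben-dor:1993}): in polynomial time one builds a weighted digraph, and by replacing each weighted edge with a small $0$-$1$ gadget of matching permanent even a $0$-$1$ graph, whose adjacency matrix $A_\phi$ satisfies $\Per(A_\phi) = N \cdot (\#\text{satisfying assignments of }\phi)$, the contributions of all spurious cycle covers having cancelled; here $N$ is a positive integer depending only on $|\phi|$. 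Reducing this integer identity mod $p$ gives $\Per(A_\phi) \equiv N \cdot \#\mathrm{sat}(\phi) \pmod p$, and since one verifies that $N$ is a power of $2$ — so that $N$ is a unit in $\mathbb{F}_p$ for every odd $p$ — multiplying the oracle's answer by $N^{-1} \bmod p$ recovers $\#\mathrm{sat}(\phi) \bmod p$.

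The only delicate point, and the only place the hypothesis $p \neq 2$ enters, is the arithmetic of this reduction: one must confirm that the cancellations engineered over $\mathbb{Z}$ remain valid after reduction mod $p$ — which they do, since reducing an identity mod $p$ is harmless — and, crucially, that the normalization constant $N$ is a pure power of $2$, hence coprime to every odd prime (in particular to $p = 3$, which the statement also covers). For $p = 2$ the constant $N$ vanishes mod $p$, the reduction degenerates, and consistently the problem is in fact easy because $\Per \equiv \det$. So the work lies not in inventing a new reduction but in re-examining the classical $\sharpP$-hardness construction and pinning down the exact power of $2$ in its normalization factor.
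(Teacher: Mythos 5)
The paper offers no proof of this statement; it is attributed directly to Valiant's 1979 ``Completeness classes in algebra'' paper and used as a black box, so there is no in-paper argument to compare against. Your outline of the $p=2$ case, the $\NC^2$ remark, and the $\ModpP$ membership argument (guess a permutation via a bijective encoding, compute $c_\sigma \in \{0,\dots,p-1\}$, branch into $c_\sigma$ accepting paths) are all correct and standard.

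The hardness half, however, has a genuine gap, and it sits exactly where you flag the ``delicate point.'' You assert that Valiant's (or Ben-Dor--Halevi's) construction yields a \emph{0-1} matrix $A_\phi$ with $\Per(A_\phi) = N \cdot \#\mathrm{sat}(\phi)$ as an identity over $\mathbb{Z}$, with $N$ a power of $2$. That is not what those constructions give. Valiant's weighted graph uses negative entries (and $2$s and $3$s); the cancellation that kills the spurious cycle covers relies essentially on signs, and a 0-1 graph cannot reproduce it because every cycle cover contributes non-negatively --- if $\#\mathrm{sat}(\phi)=0$ you would need $\Per(A_\phi)=0$, which forces every cycle cover to vanish, not to cancel. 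Valiant's reduction to 0-1 matrices is therefore carried out \emph{modulo} a number of the form $2^r+1$ (to simulate $-1$ as $2^r$), so the 0-1 identity holds only mod $2^r+1$, not over $\mathbb{Z}$, and reducing it mod an arbitrary odd $p$ is illegitimate. As written, your derivation of $\Per(A_\phi) \equiv N\,\#\mathrm{sat}(\phi) \pmod p$ does not follow.

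The fix is easy and actually removes work: the statement allows any $M \in \mathbb{F}_p^{n\times n}$, so there is no reason to pass to 0-1 matrices at all. Take Valiant's integer-weighted matrix $A$ with entries in $\{-1,0,1,2,3\}$ and the genuine integer identity $\Per(A) = 4^{t} \cdot \#\mathrm{sat}(\phi)$; reduce the \emph{entries} of $A$ mod $p$ to get $M \in \mathbb{F}_p^{n\times n}$. Since the permanent is a polynomial in the entries with integer coefficients, $\Per(M) \equiv 4^{t}\,\#\mathrm{sat}(\phi) \pmod p$, and $4^t$ is a unit for every odd $p$ (including $p=3$), so a single oracle call recovers $\#\mathrm{sat}(\phi) \bmod p$. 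With that substitution your proof goes through; as written, the 0-1 detour is both unnecessary and unsound.
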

As discussed in Appendix~\ref{sec:complexity}, $\ModpP$-hardness provides evidence for the difficulty of computing the permanent, even modulo a prime. In particular, an efficient algorithm for the problem would collapse the polynomial hierarchy. 

In the spirit of our result on real orthogonal matrices, we ask whether the permanent is still hard for orthogonal matrices in a finite field. We are not the first to consider the problem; there is the following surprising theorem of Kogan \cite{kogan:1996} in 1996. 
\begin{theorem}[Kogan \cite{kogan:1996}]
\label{thm:characteristic3easy}
Let $\mathbb F$ be any field of characteristic $3$. There is a polynomial time algorithm to compute the permanent of any \emph{orthogonal} matrix over $\mathbb F$.
\end{theorem}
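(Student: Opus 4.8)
Since the permanent over $\mathbb{F}_{p}$ is $\ModpP$-complete for every prime $p\neq 2$ (Valiant's theorem above), a polynomial-time algorithm in the orthogonal case must genuinely use both $QQ^{T}=I$ and the hypothesis $\chr(\mathbb{F})=3$; the plan is to find a special identity rather than a generic reduction. The identity I would build on is a consequence of $3!\equiv 0\pmod 3$: if $R\in\mathbb{F}^{m\times m}$ is orthogonal and $u,v\in\mathbb{F}^{m}$, then
\[
\Per(R+tuv^{T}) \;=\; \Per(R)\;+\;t\sum_{i,j}u_{i}v_{j}\,\Per(R_{\hat i,\hat j})\;+\;t^{2}\!\!\sum_{\substack{i<i'\\ j<j'}}(\text{coeff})\cdot\Per\!\big(R_{\widehat{i,i'},\,\widehat{j,j'}}\big)
\]
has degree at most $2$ in $t$ — and, more generally, the permanent of a rank-$r$ perturbation of an orthogonal matrix has degree $\le 2$ in each of $r$ perturbation parameters — because any term of the multilinear row expansion that replaces three or more rows by multiples of a common vector carries a factor $k!=0$. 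Over characteristic $0$, or over $\mathbb{F}_{p}$ with $p\ge 5$, there is no such truncation, which is exactly why the method is confined to characteristic $3$ (and, together with $\Per=\det$ in characteristic $2$, why the dichotomy of this paper is sharp).

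Granting this identity, the plan is: (i) reduce $Q$, by left-multiplying with a sequence of elementary orthogonal transformations (Householder-type reflections, treating isotropic directions separately), to a diagonal $\pm1$ matrix $D$, whose permanent is $\pm1$ and all of whose submatrices are trivial; (ii) undo the transformations one step at a time, so that each step replaces the current orthogonal matrix $M$ by $M+(\text{rank}\le 1)$, whence by the identity $\Per$ of the new matrix equals $\Per(M)$ plus polynomially many permanents of submatrices of $M$ obtained by deleting at most two rows and columns; (iii) observe that a submatrix of an orthogonal matrix has small \emph{orthogonality defect} $\operatorname{rank}\!\big(I-M'(M')^{T}\big)$, hence is itself a low-rank perturbation of an orthogonal matrix, so the identity applies recursively on a strictly smaller problem. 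The base case is a brute-force computation on matrices of bounded size, reached once the size has been whittled down.

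The step I expect to be the real obstacle is making this recursion run in \emph{polynomial} — not merely quasi-polynomial — time. A naive unfolding both (a) lets the orthogonality defect double at each recursive expansion and (b) branches into $\mathrm{poly}(n)$ subproblems at each of up to $n$ levels, since each ``reduce to an orthogonal matrix'' step can introduce a fresh orthogonal matrix. The cure must come from organizing the eliminations so that the defect stays bounded \emph{and} only polynomially many distinct subproblems ever occur — for instance by always deleting a canonical window of rows and columns, or by computing whole families of the submatrix-permanent corrections at once through a single determinant-like identity — and carrying this out is precisely the technical core of Kogan's argument. It cannot rest on any characteristic-free bookkeeping, since for real orthogonal matrices the permanent is $\sharpP$-hard by Theorem~\ref{thm:main_real_orthogonal}; the polynomial-time bound must exploit characteristic $3$ essentially, through the $3!\equiv 0$ truncation above.
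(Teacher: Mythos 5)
The paper does not prove this theorem; it is a citation to Kogan's 1996 paper, invoked as a black box to motivate the dichotomy with Theorem~\ref{thm:ffmain1}. So there is no in-paper proof to compare your proposal against, and the evaluation is purely of whether your sketch stands on its own.

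Your central observation --- that over a field of characteristic $3$, for any $n\times n$ matrix $R$ and rank-one update $uv^{T}$, the polynomial $t\mapsto\Per(R+tuv^{T})$ has degree at most $2$ because a matrix with three or more proportional rows has permanent divisible by $3!\equiv 0$ --- is correct and genuinely uses $\chr=3$. But notice it does not use orthogonality of $R$ at all: the degree-$2$ truncation holds for \emph{every} matrix $R$. If the truncation plus a naive recursion sufficed, you would get a polynomial-time algorithm for arbitrary matrices mod $3$, contradicting Valiant's $\Mod_{3}\P$-completeness. So orthogonality must enter through the structure of the recursion, and that is exactly where your sketch has a genuine gap that you yourself flag.

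Two places the recursion as written does not close. First, when you delete one row and one column of an orthogonal $R$ to form $R'$, the relation $R'(R')^{T}=I-cc^{T}$ (with $c$ a partial column of $R$) controls the Gram matrix, but over a finite field it does \emph{not} give you a decomposition $R'=O+\text{(low rank)}$ with $O$ orthogonal of size $n-1$; there is no polar decomposition or square-root argument available in $\mathbb F_{3^k}$. So ``small orthogonality defect'' does not obviously propagate as ``low-rank perturbation of an orthogonal matrix,'' and step (iii) of your plan is unsupported. Second, even granting step (iii), each reflection you undo spawns $\Theta(n^{2})$ subproblems (the $(n-1)\times(n-1)$ minors in the $t^{1}$ coefficient) and another $\Theta(n^{4})$ (the $(n-2)\times(n-2)$ minors in the $t^{2}$ coefficient), with up to $n$ reflection levels; without the memoization/structure you gesture at but do not construct, this is super-polynomial. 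You explicitly concede that making the count polynomial ``is precisely the technical core of Kogan's argument,'' which is an honest admission that the proposal is a plan, not a proof. As it stands the argument would also need to explain why the same plan fails for a generic (non-orthogonal) $R$, since the truncation identity alone is characteristic-specific but not orthogonality-specific.
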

In other words, for orthogonal matrices, the permanent is easy to compute for fields of characteristic 2 (since it is easy in general), but it is also easy for fields of characteristic 3 (by a much more elaborate argument)! Could it be that the permanent is easy for all finite fields of some other characteristic? No, it turns out. Using the gadgets from Section~\ref{sec:real_orthogonal}, we prove a converse to Theorem~\ref{thm:characteristic3easy}.

\begin{theorem}
\label{thm:ffmain1}
Let $p \neq 2, 3$ be a prime. There exists a finite field of characteristic $p$, namely $\mathbb F_{p^4}$, such that the permanent of an orthogonal matrix in $\mathbb F_{p^4}$ is $\ModpP$-hard.
\end{theorem}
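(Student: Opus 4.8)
The plan is to transport the real-orthogonal construction of Theorem~\ref{thm:main_real_orthogonal} to a finite field of characteristic $p$ by reducing it modulo a prime above $p$. The first point is that \emph{all} the matrices used to assemble the linear optical circuit $O$ of Theorem~\ref{thm:klm_plus} --- the Hadamard matrix, $\B$ (which enters through the exact Toffoli decomposition of Lemma~\ref{lem:toff_proof}), $\encoder$, $\decoder$, and Knill's $V$ --- have all their entries in the \emph{single} number field
$$K \;:=\; \mathbb{Q}\bigl(\sqrt{2},\ \sqrt{3},\ \sqrt{2+\sqrt 2},\ \sqrt{3+\sqrt 6}\bigr),$$
independent of the circuit $C$. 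Indeed $\sqrt 6 = \sqrt2\,\sqrt3$, $\sqrt{2-\sqrt2}=\sqrt2/\sqrt{2+\sqrt2}$, $\sqrt{3-\sqrt6}=\sqrt3/\sqrt{3+\sqrt6}$, and $\sqrt{6\pm 2\sqrt6}=\sqrt2\,\sqrt{3\pm\sqrt6}$, which exhausts the radicals occurring in $\encoder$, $\decoder$, $V$, and $\B$; and since every denominator is a power of $2$ or $3$ up to a unit, the entries in fact lie in the localization $\mathcal O_K[\tfrac16]$.

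Now fix a prime $p\neq 2,3$ and any prime ideal $\mathfrak p$ of $\mathcal O_K$ above $p$. Because $p\nmid 6$, reduction modulo $\mathfrak p$ is a ring homomorphism $\pi\colon \mathcal O_K[\tfrac16]\to \mathcal O_K/\mathfrak p \cong \mathbb F_{p^{f}}$, where $f$ is the residue degree. Apply $\pi$ entrywise to $O$ to obtain $\bar O$ over $\mathbb F_{p^f}$. Since $\pi$ is a ring homomorphism it preserves the two identities that matter: $O O^{T}=O^{T}O=I$ gives $\bar O\bar O^{T}=\bar O^{T}\bar O = I$, so $\bar O$ is orthogonal; and $\Per(O)=2^{a}3^{b}\Delta_C$ (with $a,b\in\mathbb Z$ and $\Delta_C\in\mathbb Z$, as in the proof of Theorem~\ref{thm:main_real_orthogonal}) gives $\Per(\bar O)=2^{a}3^{b}\Delta_C$ in $\mathbb F_{p^f}$, the powers of $2$ and $3$ now being read through their inverses mod $p$. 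In particular $\Per(\bar O)\in\mathbb F_p$, and since $2^{a}3^{b}$ is a unit mod $p$, from the single value $\Per(\bar O)$ we recover $\Delta_C\bmod p$.

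It remains to force $f\mid 4$ and to extract a $\ModpP$-hard quantity. Each of $x^2-2$, $x^2-3$, $x^4-4x^2+2$ (with root $\sqrt{2+\sqrt2}$), and $x^4-6x^2+3$ (with root $\sqrt{3+\sqrt6}$) has discriminant equal to a power of $2$ times a power of $3$, so $K/\mathbb Q$ is unramified at every $p\neq 2,3$. Moreover $K$ is Galois over $\mathbb Q$, being the compositum of $\mathbb Q(\sqrt2,\sqrt3)$ (group $(\mathbb Z/2)^2$), $\mathbb Q(\sqrt{2+\sqrt2})=\mathbb Q(\zeta_{16})^{+}$ (group $\mathbb Z/4$), and the splitting field of $x^4-6x^2+3$ (group $D_4$, by the classical criterion for $x^4+bx^2+c$). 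Restriction embeds $\mathrm{Gal}(K/\mathbb Q)$ into the direct product of these three groups, which has exponent $4$; hence the Frobenius at $p$ has order dividing $4$, i.e.\ $f\mid 4$, so $\mathbb F_{p^f}\subseteq\mathbb F_{p^4}$ and $\bar O$ is an orthogonal matrix over $\mathbb F_{p^4}$, computable in polynomial time (compute $O$ symbolically over $K$, then substitute a compatible system of roots of the four polynomials above in $\mathbb F_{p^4}$). Finally, writing $\Delta_C = 2^{n}-2\,\#\{x:C(x)=1\}$ and inverting $2$ mod $p$, one call to an oracle for permanents of orthogonal matrices over $\mathbb F_{p^4}$ yields $\#\{x:C(x)=1\}\bmod p$, a $\ModpP$-complete quantity, proving Theorem~\ref{thm:ffmain1}.

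The crux is the Galois-theoretic step: identifying the exact field generated by the radicals in $V$ and in $\B$ and verifying that its Galois group has exponent dividing $4$, which is precisely what forces the residue degree --- and hence the ambient finite field --- down to $\mathbb F_{p^4}$. (When $p$ splits further, $f$ is $2$ or $1$, giving the $\mathbb F_{p^2}$ and $\mathbb F_p$ refinements mentioned in the introduction.) A small but essential point is that one reduces the \emph{concrete} matrix $O\in K^{N\times N}$, not a symbolic expression in abstract square roots: because $O$ genuinely satisfies $OO^{T}=I$ over $K\subseteq\mathbb R$, orthogonality and the permanent identity pass through $\pi$ automatically, with no sign ambiguities among the nested radicals to re-resolve.
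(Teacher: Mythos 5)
Your proof is correct and reaches the same conclusion as the paper, but it takes a genuinely different (and in some respects cleaner) route. The paper fixes a primitive element $\alpha = \sqrt{2+\sqrt2}+\sqrt{3+\sqrt6}$ with minimal polynomial $f$ of degree $16$, writes every entry of every gadget as a polynomial in $\alpha$ with rational coefficients, reduces those \emph{coefficients} mod $p$ to land in the ring $\mathbb F_p[x]/(f(x))$ (which is usually not a field), and then projects onto a field quotient $\mathbb F_p[x]/(g(x))$ for $g$ an irreducible factor of $f$. The bound $\deg g \le 4$ comes from an elementary repeated-square-root argument (Lemma~\ref{lem:squaring_roots}): $\sqrt2,\sqrt6\in\mathbb F_{p^2}$, hence $\sqrt{2+\sqrt2},\sqrt{3+\sqrt6}\in\mathbb F_{p^4}$, so $\alpha\in\mathbb F_{p^4}$ and (by Galois-ness of $\mathbb Q(\alpha)$) every root of $f$ is. The cost of this concrete basis is that the coefficients of those degree-$15$ polynomials happen to have $23$ in their denominators, forcing the paper to exclude $p=23$ and patch it with a second primitive element $\beta$.

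You instead reduce the matrix entries \emph{as elements of} $\mathcal O_K[\tfrac16]$ modulo a prime ideal $\mathfrak p\mid p$, which is a ring homomorphism into a field $\mathcal O_K/\mathfrak p\cong\mathbb F_{p^f}$ directly, with no intermediate quotient ring to project from and no spurious prime $23$ (that $23$ was an artifact of the $\alpha$-power basis, not of the gadget entries themselves, which genuinely live in $\mathcal O_K[\tfrac16]$). To bound $f$ you argue Galois-theoretically: $K/\mathbb Q$ is unramified at $p\neq2,3$ since it is a compositum of fields ramified only at $2,3$; the Frobenius at $\mathfrak p$ has order $f$; and $\mathrm{Gal}(K/\mathbb Q)$ embeds by restriction into $(\mathbb Z/2)^2\times\mathbb Z/4\times D_4$, which has exponent $4$, so $f\mid 4$. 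This is heavier machinery (decomposition groups, Frobenius, ramification) than the paper's Lemma~\ref{lem:squaring_roots}, but it is arguably more structural: it explains \emph{why} $4$ appears, namely because the radical tower has height $2$ and so the Galois exponent is $4$, and it automatically yields the uniform treatment of all $p\neq2,3$ without special-casing $23$. Both arguments rely on the same Galois-ness of $\mathbb Q(\alpha)=K$ (the paper verifies this explicitly in Appendix~\ref{app:fftech_galois}; you verify it by exhibiting $K$ as a compositum of three Galois extensions). The remaining steps --- preservation of $OO^T=I$ and $\Per(O)=2^a3^b\Delta_C$ under the reduction map, recovery of $\Delta_C\bmod p$ and hence $\#\{x:C(x)=1\}\bmod p$, and embedding $\mathbb F_{p^f}\hookrightarrow\mathbb F_{p^4}$ --- are essentially identical in both proofs.
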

We prove the theorem by carefully porting Theorem~\ref{thm:main_real_orthogonal} to the finite field setting. Recall that Theorem~\ref{thm:main_real_orthogonal} takes a circuit $C$ and constructs a sequence of gadgets $G_1, \ldots, G_m$ such that 
\begin{equation}
\Per(G_1 \cdots G_m) = 2^{a} 3^{b} \Delta_{C},
\label{eqn:real_orthogonal}
\end{equation}
for some $a, b \in \mathbb Z$. In general, there is no way to convert such an identity on real numbers into one over finite fields, but all of our gadgets are built out of \emph{algebraic} numbers. In particular, all of the entries are in some algebraic field extension $\mathbb Q(\alpha)$ of the rationals, where $\alpha \approx 4.182173283$ is the largest real root of irreducible polynomial
$$
f(x) = x^{16} - 40x^{14} + 572x^{12} - 3736x^{10} + 11782 x^8 - 17816 x^6 + 11324 x^4 - 1832 x^2 + 1.
$$
Each element in $\mathbb Q(\alpha)$ can be written as a polynomial (of degree less than 16) in $\alpha$ over the rationals. In Appendix~\ref{app:fftech_entries}, we give explicit canonical representations for a set of numbers which generate (via addition, subtraction and multiplication, but \emph{not} division) the entries of all our gadgets.

Each entry of a gadget $G_i$ is a polynomial in $\alpha$ with rational coefficients, so observe that we can take a common denominator for the coefficients and write the entry as an integer polynomial divided by some positive integer. By the same token, we can take a common denominator for the entries of a gadget $G_i$, and write it as $\frac{1}{k_i}\hat{G_i}$ where $\hat{G_i}$ is a matrix over $\mathbb Z[\alpha]$, and $k_i$ is a positive integer.

Now we would like to take Equation~\ref{eqn:real_orthogonal} modulo a prime $p$. In principle, we can pull $k_1, \ldots, k_m$ out of the permanent, multiply through by $Z = (k_1 \cdots k_m)^n 2^{|a|} 3^{|b|}$ to remove all fractions on both sides, and obtain an equation of the form 
$$
K \Per(\hat{G_1} \cdots \hat{G_m}) = K' \Delta_C,
$$
where $K, K'$ are integers. Then the entire equation is over $\mathbb Z[\alpha]$, so if we reduce all the coefficients modulo $p$, we get an equation over $\mathbb F_p[\alpha]$.

We show in Appendix~\ref{app:fftech_entries} that for each gadget we use, the denominator $k_i$ may have prime divisors $2$, $3$, and $23$, but no others. Hence, as long as $p \neq 2, 3, 23$ (and in the case $p = 23$, there is an alternative representation we can use, see Appendix~\ref{app:fftech}), we can divide through by $Z$, pull it back inside the permanent as the $\frac{1}{k_i}$s, and distribute each $\frac{1}{k_i}$ into the corresponding $\hat{G_i}$. This gives
$$
\Per(G_1 \cdots G_m) \equiv 2^{a} 3^{b} \Delta_{C} \pmod p,
$$
the equivalent of Equation~\ref{eqn:real_orthogonal}, but over $\mathbb F_p[\alpha]$. In particular, $G_1, \ldots, G_m$ are now orthogonal matrices in $\mathbb F_p[\alpha]$, and $\Delta_C$ has been reduced modulo $p$. 

Note that $\mathbb F_p[\alpha] \cong \mathbb F_p[x] / (f(x))$ is a ring, not a field. If $f(x)$ were irreducible modulo $p$ then it would be a field, but this will never happen for our $f$. Consider the following lemma. 
\begin{lemma}
\label{lem:squaring_roots}
Let $q$ be a prime power.  Suppose $\mathbb F_q$ is the subfield of order $q$ contained in the finite field $\mathbb F_{q^2}$. Then every element in $\mathbb F_q$ has a square root in $\mathbb F_{q^2}$. 
\end{lemma}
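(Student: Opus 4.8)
The plan is to reduce everything to the cyclic structure of $\mathbb{F}_{q^2}^\times$ together with the fact that $\mathbb{F}_q^\times$ is its unique subgroup of order $q-1$. First I would dispatch the characteristic-$2$ case on its own: if $q$ is a power of $2$, the Frobenius map $x \mapsto x^2$ is a field automorphism of $\mathbb{F}_{q^2}$, hence a bijection, so \emph{every} element of $\mathbb{F}_{q^2}$ --- in particular every element of the subfield $\mathbb{F}_q$ --- is a square. So we may assume $q$ is odd.

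For odd $q$, I would invoke the standard square-root criterion: since $\mathbb{F}_{q^2}^\times$ is cyclic of order $q^2-1$ (which is even), the squaring homomorphism has image exactly the kernel of $x \mapsto x^{(q^2-1)/2}$, so a nonzero $a$ is a square in $\mathbb{F}_{q^2}$ iff $a^{(q^2-1)/2} = 1$. Now take $a \in \mathbb{F}_q^\times$. The key observation is the factorization $\tfrac{q^2-1}{2} = (q-1)\cdot\tfrac{q+1}{2}$, which is an integer multiple of $q-1$ precisely because $q$ is odd (so $q+1$ is even). Since $a$ lies in $\mathbb{F}_q^\times$ it satisfies $a^{q-1}=1$, hence $a^{(q^2-1)/2} = \big(a^{q-1}\big)^{(q+1)/2} = 1$, and thus $a$ has a square root in $\mathbb{F}_{q^2}$. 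Together with $0 = 0^2$, this shows every element of $\mathbb{F}_q$ has a square root in $\mathbb{F}_{q^2}$.

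I do not expect a genuine obstacle here; the only points that require care are treating characteristic $2$ separately (where ``exactly half the nonzero elements are squares'' fails, though trivially in our favor) and getting the parity of $q+1$ right in the exponent computation --- i.e. the whole content is the divisibility $q-1 \mid \tfrac{q^2-1}{2}$ for odd $q$. As an alternative phrasing of the odd case, avoiding the residue criterion, I could note that the squaring map on $\mathbb{F}_{q^2}^\times$ has image the unique subgroup of index $2$, namely of order $\tfrac{q^2-1}{2}$, and since $q-1 \mid \tfrac{q^2-1}{2}$ the unique order-$(q-1)$ subgroup $\mathbb{F}_q^\times$ is contained in it; either way the proof is a couple of lines.
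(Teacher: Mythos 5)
Your proof is correct, but it takes a genuinely different route from the paper's. You argue at the level of the multiplicative group: you split off characteristic $2$ via surjectivity of Frobenius, and for odd $q$ you invoke cyclicity of $\mathbb{F}_{q^2}^\times$ and the Euler-type criterion that $a$ is a square iff $a^{(q^2-1)/2}=1$, then check the divisibility $q-1 \mid \tfrac{q^2-1}{2}$. The paper instead works directly with the polynomial $f(x)=x^2-a$: either it already has a root in $\mathbb{F}_q$, or it is a degree-$2$ polynomial with no root, hence irreducible, in which case $\mathbb{F}_q[x]/(f(x))$ is a field of order $q^2$, isomorphic to $\mathbb{F}_{q^2}$, and the image of $x$ in that quotient is a square root of $a$. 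The paper's argument is uniform in the characteristic (no parity split), avoids the cyclicity of the unit group entirely, and is essentially a two-line construction; yours buys a more concrete criterion for which elements are squares and makes the arithmetic obstruction (the divisibility $q-1 \mid \tfrac{q^2-1}{2}$, true exactly because $q+1$ is even) explicit. Both are fine; just note that the characteristic-$2$ case in your write-up is handled for free in the paper's framing, since ``$x^2-a$ has no root $\Rightarrow$ irreducible'' never needs to be re-examined there.
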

\begin{proof}
Let $a$ be an arbitrary element of $\mathbb F_q$. By definition, $a$ has a square root if the polynomial $f(x) := x^2 - a$ has a root. If $f$ has a root in $\mathbb F_q$ then we are done. Otherwise, $f$ is irreducible, but has a root in $\mathbb F_q[x] / \langle f(x) \rangle \cong \mathbb F_{q^2}$. 
\end{proof}
By Lemma~\ref{lem:squaring_roots}, the square roots of $2$ and $6$ are in $\mathbb F_{p^2}$, and therefore so are $2 + \sqrt{2}$ and $3 + \sqrt{6}$. Then \emph{their} square roots are in $\mathbb F_{p^4}$, so $\alpha = \sqrt{2 + \sqrt{2}} + \sqrt{3 + \sqrt{6}}$ is in $\mathbb F_{p^4}$. All the other roots of $f$ can be expressed as polynomials in $\alpha$ (see Appendix~\ref{app:fftech_galois}), so they are all in $\mathbb F_{p^4}$. 
It follows that $f$ factors over $\mathbb F_p$ as a product of irreducible polynomials, each of degree $1$, $2$, or $4$.

Suppose $g$ is some irreducible factor of $f$. The ideal $(g(x))$ contains $(f(x))$, so there exists a ring homomorphism $\sigma$ from $\mathbb F_p[x] / (f(x))$ to $\mathbb F_p[x] / (g(x))$. Note that $\mathbb F_p[x] / (g(x))$ \emph{is} a field because $g(x)$ is irreducible over $\mathbb F_p$. Also, $\sigma$ fixes $\mathbb F_p$, so we obtain 
$$
\Per(\sigma(G_1) \cdots \sigma(G_m)) = \sigma(\Per(G_1 \cdots G_m)) = 2^{a} 3^{b} \Delta_{C}
$$
as an equation over the field $\mathbb F_p[x] / (g(x))$. For each $i$, $\sigma(G_i)$ is orthogonal in $\mathbb F_p[x] / (g(x))$ as well:
$$
\sigma(G_i) \sigma(G_i)^T = \sigma(G_i G_i^T) = \sigma(I) = I.
$$
It follows that $M := \sigma(G_1) \cdots \sigma(G_m)$ is orthogonal. 

Depending on the degree of $g$, the field $\mathbb F_p[x] / (g(x))$ is isomorphic to $\mathbb F_p$, $\mathbb F_{p^2}$, or $\mathbb F_{p^4}$. But $\mathbb F_{p^4}$ contains $\mathbb F_p$ and $\mathbb F_{p^2}$, so $M$ can be lifted to a matrix over $\mathbb F_{p^4}$. Given the permanent of $M$ in $\mathbb F_{p^4}$, we can easily solve for $\Delta_C$, so this completes the proof of Theorem~\ref{thm:ffmain1}. 

Theorem~\ref{thm:ffmain1} shows that for any prime $p \neq 2, 3$ there is some finite field of characteristic $p$ where computing permanents (of orthogonal matrices) is hard. In particular, $p=2$ and $p=3$ are the only cases where the permanent of an orthogonal matrix is easy to compute in \emph{every} finite field of characteristic $p$, assuming the polynomial hierarchy does not collapse. We will now show that there are primes $p$ for which this problem is hard in \emph{any} field of characteristic $p$, by showing that it is hard to compute in $\mathbb F_p$ (which is contained in every other field of characteristic $p$).   

\begin{theorem}
\label{thm:ffmain2}
For all but finitely many primes $p$ that split completely in $\mathbb Q(\alpha)$, computing the permanent of an orthogonal matrix over $\mathbb F_p$ is $\ModpP$-complete. This is a sequence of primes with density $\frac{1}{16}$ beginning
$$191, 239, 241, 337, 383, 433, 673, 863, 911, 1103, 1151, 1249, 1583, 1871, 1873, 2017, \ldots$$
\end{theorem}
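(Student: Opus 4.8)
The plan is to invoke the proof of Theorem~\ref{thm:ffmain1} essentially verbatim, but to insist that the irreducible factor $g$ of $f$ used there be \emph{linear}, say $g(x) = x - \alpha_0$. The point is that $\mathbb F_p[x]/(g(x))$ is then just $\mathbb F_p$, so the orthogonal matrix $M = \sigma(G_1) \cdots \sigma(G_m)$ constructed in that proof — which satisfies $\Per(M) = 2^a 3^b \Delta_C$, with $2^a 3^b$ a unit since $p \neq 2,3$ — already lives over $\mathbb F_p$. An oracle for permanents of orthogonal matrices over $\mathbb F_p$ then recovers $\Delta_C \bmod p$; since $\Delta_C = 2^n - 2\,\#\{x : C(x) = 1\}$ and $p \neq 2$, this determines $\#\{x : C(x)=1\} \bmod p$ for an arbitrary polynomial-size Boolean circuit $C$, a $\ModpP$-hard quantity. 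Together with the fact that permanents of arbitrary matrices over $\mathbb F_p$ are computable within $\ModpP$ (Valiant), this yields $\ModpP$-completeness. Note the reduction is polynomial time: for a fixed $p$ in the sequence the element $\alpha_0$ is a constant, so no root-finding occurs at run time, only building the gadgets, specializing via $\sigma$, and multiplying them out. So the entire content of the theorem is really the number-theoretic question: \emph{for which primes $p$ does $f$ have a root modulo $p$?}

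Here I would use that every root of $f$ is a polynomial in $\alpha$ (Appendix~\ref{app:fftech_galois}), whence $\mathbb Q(\alpha)$ is the splitting field of $f$ and $\mathbb Q(\alpha)/\mathbb Q$ is Galois of degree $16$. By Dedekind's factorization theorem, for every prime $p$ outside the finite set of ``bad'' primes — those dividing $\operatorname{disc}(f)$ (which also subsumes $p \nmid [\mathcal O_{\mathbb Q(\alpha)} : \mathbb Z[\alpha]]$) and the denominator primes $2,3,23$ from the proof of Theorem~\ref{thm:ffmain1} — the factorization type of $f$ modulo $p$ matches the splitting type of $p$ in $\mathbb Q(\alpha)$. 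Normality of $\mathbb Q(\alpha)$ forces all primes above such a $p$ to share one residue degree, so $f$ has \emph{a} root mod $p$ if and only if it splits into $16$ linear factors, i.e., if and only if $p$ splits completely. Thus the primes we want are exactly the completely split primes, minus a finite set — precisely the ``all but finitely many'' of the statement. By Chebotarev's density theorem, the completely split primes (those whose Frobenius class is trivial) have density $1/|\mathrm{Gal}(\mathbb Q(\alpha)/\mathbb Q)| = \tfrac{1}{16}$; and a direct sieve — equivalently, $2 + \sqrt 2$ and $3 + \sqrt 6$, and then their square roots, all lie in $\mathbb F_p$ — reproduces the initial list $191, 239, 241, 337, 383, \ldots$.

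The main obstacle is not conceptual but lies in honestly controlling the exceptional primes: one must confirm that the equivalence ``$p$ splits completely $\Leftrightarrow$ $f$ has a root mod $p$'' and the usability of the specialization $x \mapsto \alpha_0$ (no vanishing denominator, no unwanted coincidence among the roots of $f$ that the gadgets rely on) both fail only for finitely many $p$ — all of which are already pinned down by $\operatorname{disc}(f)$, the index $[\mathcal O_{\mathbb Q(\alpha)} : \mathbb Z[\alpha]]$, and the denominators $k_i$ isolated in the proof of Theorem~\ref{thm:ffmain1}. Keeping that list of exceptions genuinely finite — rather than leaking infinitely many primes through some denominator or index one forgot to bound — is the step I would be most careful about.
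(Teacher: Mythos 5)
Your proposal matches the paper's proof in essence: both obtain hardness over $\mathbb F_p$ by selecting a \emph{linear} irreducible factor $g$ of $f$ in the argument of Theorem~\ref{thm:ffmain1}, both use that $\mathbb Q(\alpha)/\mathbb Q$ is Galois so that (via Dedekind's theorem, for all but finitely many primes) $f$ has a root mod $p$ if and only if $p$ splits completely, and both invoke Chebotarev to get density $1/16$. Your version is slightly more explicit about the containment direction of $\ModpP$-completeness and about which finite set of primes must be excluded (index divisors, discriminant, denominator primes $2,3,23$), but these are just careful unfoldings of what the paper states more tersely, not a different route.
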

\begin{proof}
Recall that in the proof of Theorem~\ref{thm:ffmain1}, if $g$ is an irreducible factor of $f$, then the result applies over the field $\mathbb F_{p}[x] / (g(x)) \cong \mathbb F_{p^{\deg g}}$. We show that $g$ is degree at most $4$, but in special cases this can be improved. In particular, we want $g$ to be degree $1$ (i.e., a linear factor) for our orthogonal matrix to be over $\mathbb F_p$. 

First, observe that $\mathbb Q(\alpha)$ is a Galois extension of $\mathbb Q$. That is, every root of the minimal polynomial for $\alpha$ is in $\mathbb Q(\alpha)$. See Appendix~\ref{app:fftech_galois} for details.  We apply Chebotarev's density theorem \cite{Chebotarev}, which says that if $K$ is a finite Galois extension of $\mathbb Q$ of degree $n$, then the density of primes which split completely in $K$ is $\frac{1}{n}$. We take $K = \mathbb Q(\alpha)$, a degree 16 extension of $\mathbb Q$. 

For our purposes, a prime $p$ splits completely if and only if the ideal $(p)$ factors into 16 distinct maximal ideals in the ring of integers of $\mathbb{Q}(\alpha)$.  For all by finitely many such primes,\footnote{Actually, we can compute these primes explicitly as those that divide the index of $\mathbb{Z}[\alpha]$ in the ring of integers of $\mathbb{Q}(\alpha)$.  For our choice of field, this number is $19985054955504338544361472 = 2^{75} 23^2$.} we also have that $f$ (the minimal polynomial for $\alpha$) factors into distinct linear terms modulo $p$ by Dedekind's theorem. Furthermore, since $\mathbb Q(\alpha)$ is a Galois extension, $f$ will split into equal degree factors. Hence, if any factor is linear, then \emph{all} the factors are linear. 

Therefore, according to Chebotarev's theorem, $(1/16)$th of all primes split completely and yield the desired hardness result. We verified the list of primes given in the theorem computationally. Also note that for about $3/16$ of all primes, we can prove a hardness result over $\mathbb F_{p^2}$ but not $\mathbb F_p$. 
\end{proof}

We leave open how hard it is to compute the permanent of an orthogonal matrix modulo the remaining $\frac{15}{16}$ of all primes. Other linear optical gadgets can be used for $\CSIGN$ instead of $V$, resulting in different field extensions where different primes split. For instance, there exists an orthogonal gadget for KLM's $\NS$ gate for which computing the permanent modulo $97$ is hard (see Appendix~\ref{app:ns_approach}). However, it seems impossible to design linear optical gadgets that do not involve $2$ or $3$ photons at a time, in which case writing down $\varphi(O)$ requires $\sqrt{2}$ and $\sqrt{3}$. By quadratic reciprocity, these square roots only exist if $p \equiv \pm 1 \pmod{24}$ (i.e., for about a quarter of all primes), so the remaining primes may require some other technique.


\section{Expanding Permanent Hardness}
\label{sec:misc_permanents}

In this section, we try to fill in some of the remaining landscape of matrix permanents.  In particular, we will focus on the permanents of positive semidefinite (PSD) matrices and their connection to boson sampling.  We will conclude by listing some matrix variants and their accompanying permanent complexities, many of which are simple consequences of the reduction in Section~\ref{sec:real_orthogonal}.

\subsection{Positive Semidefinite Matrix Permanents}
Permanents of PSD matrices have recently become relevant to the expanding theory of boson sampling \cite{rahimi:2015}.  Namely, permanents of PSD matrices describe the output probabilities of a boson sampling experiment in which the input is a tensor product of thermal states.  That is, each of the $m$ modes of the system starts in a state of the form 
$$\rho_i = (1 - \tau_i) \sum_{n=0}^\infty \tau_i^n \ket{n}\bra{n}$$
where $\tau_i = \left<n_i\right>/( \left<n_i\right> + 1)$ and $ \left<n_i\right>$ is average number of photons one observes when measuring $\rho_i$.  In particular, notice that $\tau_i \ge 0$.

Let $U$ be a unitary matrix representing a linear optical network.  Let $D$ be the diagonal matrix with $\tau_1, \ldots, \tau_m$ along the diagonal, and $A := U D U^\dag$.  Since $\tau_i \ge 0$ for all $i$, $A$ is PSD.  We can calculate the probability of detecting one photon in each mode:\footnote{A similar formula arises for detecting 1 photon in each of $k$ distinct modes and 0 photons in the remaining $m-k$ modes.}
$$\bra{1}^{\otimes m} \( \varphi(U) \(\bigotimes_{i = 1}^m \rho_i\) \varphi(U)^\dag \)\ket{1}^{\otimes m} = \frac{\Per(A)}{\prod_{i=1}^m(1 +\left<n_i\right>)}.$$

One might then reasonably ask, ``how hard is it to compute such probabilities?''  The following theorem answers that question in the exact case.

\begin{theorem}
\label{thm:psd_hard}
The permanent of a positive-definite matrix in $\mathbb{Z}^{n \times n}$ is $\sharpP$-hard.  This implies $\sharpP$-hardness for the larger class of positive \emph{semidefinite} matrices.
\end{theorem}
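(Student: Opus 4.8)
The plan is to reduce from the $\sharpP$-hardness of the permanent of a $0$-$1$ matrix (Valiant \cite{valiant}), using a parametrized family of integer positive-definite matrices. Given $B \in \{0,1\}^{n \times n}$, I would introduce an integer parameter $t$ and consider the symmetric matrix
$$ M_t \;=\; t\, I_{2n} \,+\, \begin{pmatrix} 0 & B \\ B^{T} & 0 \end{pmatrix} \;\in\; \mathbb{Z}^{2n \times 2n}. $$
By the singular value decomposition of $B$, the eigenvalues of the off-diagonal block matrix are exactly $\pm \sigma_i(B)$, so the eigenvalues of $M_t$ are $t \pm \sigma_i(B)$; hence $M_t$ is positive definite as soon as $t > \sigma_{\max}(B)$, and since $\sigma_{\max}(B) \le \|B\|_F \le n$ for a $0$-$1$ matrix, \emph{any} integer $t > n$ puts $M_t$ in the target class.

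The key step is to expand $\Per(M_t)$ along the block structure. A permutation $\pi \in S_{2n}$ contributes a nonzero term only if, whenever it sends a ``top'' index $i \le n$ to a ``top'' column (or a ``bottom'' index to a ``bottom'' column), it does so on the diagonal of the $tI$ block. Grouping contributing permutations by $k := \#\{\,i \le n : \pi(i) > n\,\}$, one checks that $\pi$ yields a monomial $t^{2(n-k)}$ times a product of entries of $B$; therefore $\Per(M_t) = \sum_{j=0}^{n} d_j t^{2j}$ is a polynomial in $t$ with non-negative integer coefficients. Its constant term $d_0$ comes precisely from permutations that map $\{1,\dots,n\}$ onto $\{n+1,\dots,2n\}$ and back, so by the elementary identity $\Per\!\left(\begin{smallmatrix} 0 & B \\ C & 0 \end{smallmatrix}\right) = \Per(B)\Per(C)$ we get $d_0 = \Per(B)\,\Per(B^{T}) = \Per(B)^2$.

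This already yields hardness by polynomial interpolation: evaluating $\Per(M_t)$ at $2n+1$ integers $t > n$ (all making $M_t$ positive definite) and interpolating recovers $d_0 = \Per(B)^2$, whence $\Per(B)$ by an integer square root. To respect the single-oracle-call definition of $\sharpP$-hardness, I would instead take one large integer $t = n! + 1$: then $M_t$ is positive definite, $0 \le \Per(B)^2 < t^2$, and every non-constant monomial of $\Per(M_t)$ is divisible by $t^2$, so $\Per(M_t) \bmod t^2 = \Per(B)^2$. Thus a single call of the positive-definite-permanent oracle on the polynomial-size matrix $M_{n!+1}$, followed by a reduction mod $(n!+1)^2$ and an integer square root, computes $\Per(B)$; composing with Valiant's reduction gives every function in $\sharpP$ with one oracle call, and positive definite $\subseteq$ positive semidefinite finishes the theorem.

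The only point requiring genuine care is the choice of $t$: it must simultaneously (i) force positive-definiteness (handled by the eigenvalue identity $t \pm \sigma_i(B)$ together with the crude bound $\sigma_{\max}(B) \le n$), (ii) be large enough that the constant term $\Per(B)^2$ is not corrupted modulo $t^2$, and (iii) still have polynomially many bits; $t = n! + 1$ threads all three. The remaining ingredients --- the block-expansion bookkeeping and the product formula for $\Per\!\left(\begin{smallmatrix} 0 & B \\ C & 0\end{smallmatrix}\right)$ --- are routine.
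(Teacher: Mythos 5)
Your proposal is correct and follows essentially the same route as the paper: symmetrize $B$ into $\bigl(\begin{smallmatrix}0 & B \\ B^T & 0\end{smallmatrix}\bigr)$, shift by a scalar multiple of the identity to force positive-definiteness, observe that the permanent is a polynomial in the shift whose constant term equals $\Per(B)^2$, and recover that constant term with a single evaluation at a sufficiently large integer. The only (inessential) refinements you make are noting that the polynomial contains only even powers of $t$ and extracting just the constant term via reduction mod $t^2$ with $t = n!+1$, whereas the paper picks $x > (2n)!$ and reads off all coefficients by base-$x$ expansion before substituting $x=0$.
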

\begin{proof}
It is well-known that the permanent of a 0-1 matrix is $\sharpP$-hard \cite{valiant}.  Therefore, let $B \in \{0,1\}^{n \times n}$ and consider the matrix
$$\Lambda_B = \begin{pmatrix}
0 & B \\
B^T & 0
\end{pmatrix}$$
Since $\Per(B) \ge 0$, we have $\Per(B) = \sqrt{\Per(\Lambda_B)}$.  Also observe that $\Lambda_B^T = \Lambda_B$, so $\Lambda_B$ is Hermitian, that is, diagonalizable with real eigenvalues.  Furthermore, since $B$ is a 0-1 matrix, its spectral radius is at most $2n$.  Defining $\Lambda_B(x) := \Lambda_B + x I$, we see that $\Lambda_B(x)$ is positive-definite for all $x > 2n$.  

Notice now that $\Per(\Lambda_B(x))$ is a degree-$2n$ polynomial in $x$.  Therefore, given an oracle that calculates the permanent of a positive-definite matrix, we can interpolate a monic polynomial through the points $x = 2n+1, 2n+2, \ldots, 4n$ to recover the polynomial $\Per(\Lambda_B(x))$.  Since $\Per(\Lambda_B(0)) = \Per(\Lambda_B)$, the permanent of a positive-definite matrix under \emph{Turing reductions} is $\sharpP$-hard.

We now only have left to prove that the above reduction can be condensed into a single call to the positive-definite matrix permanent oracle.  Since the matrix $B$ is a 0-1 matrix, the polynomial $\Per(\Lambda_B(x))$ has positive integer coefficients, the largest of which is at most $(2n)!$.  Therefore, if $x > (2n)!$, then we can simply read out each of the coefficients of  $\Per(\Lambda_B(x))$ with a single oracle call.  Clearly, this requires at most a polynomial increase in the bit length of the integers used in the reduction.
\end{proof}

Theorem~\ref{thm:psd_hard} implies that there is some linear optical experiment one can perform with thermal input states for which calculating the exact success probability is computationally difficult.  We would like to say that this also precludes an efficient classical sampling algorithm (unless $\PH$ collapses), as is done in work by Aaronson and Arkhipov \cite{aark} and Bremner, Jozsa, Shepherd \cite{bjs}.  Unfortunately, those arguments rely on the fact that even finding an \emph{approximation} to their output probabilities is difficult, but the following theorem heavily suggests that such a result cannot exist.

\begin{theorem}[Rahimi-Keshari, Lund, Ralph \cite{rahimi:2015}]
\label{thm:psd_approx_easy}
There exists an efficient classical sampling algorithm for Boson Sampling with thermal input states.  Furthermore, multiplicatively approximating the permanent of a PSD matrix is in the class $\mathsf{FBPP}^\NP$ .
\end{theorem}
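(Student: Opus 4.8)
The plan is to prove the two assertions separately: first exhibit an efficient classical sampler, and then bootstrap it into an $\mathsf{FBPP}^{\NP}$ multiplicative approximation for $\Per(A)$ via Stockmeyer-style approximate counting.

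For the sampler I would exploit that a thermal state is \emph{classical}, i.e.\ that it has a nonnegative Glauber--Sudarshan $P$-representation, $\rho_i = \frac{1}{\pi\langle n_i\rangle}\int e^{-|\beta|^2/\langle n_i\rangle}\,\ket{\beta}\bra{\beta}\,d^2\beta$. Hence $\bigotimes_i\rho_i$ is a probabilistic mixture of coherent \emph{product} states $\ket{\beta_1,\ldots,\beta_m}$, with $\beta=(\beta_1,\ldots,\beta_m)$ drawn from a product of complex Gaussians (one per mode). Passive linear optics maps a coherent product state to another coherent product state: $\varphi(U)$ sends the amplitude vector $\beta$ to $U\beta$, because coherent states are joint eigenvectors of the annihilation operators and $U$ evolves them linearly. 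Finally, a photon-number measurement of a coherent product state $\ket{\gamma_1,\ldots,\gamma_m}$ returns, in each mode $j$ \emph{independently}, a Poisson variable of mean $|\gamma_j|^2$, since $|\langle n|\gamma_j\rangle|^2 = e^{-|\gamma_j|^2}|\gamma_j|^{2n}/n!$ and the state is unentangled. So the algorithm is: draw $\beta$ from the product Gaussian, compute $\gamma=U\beta$ by one matrix--vector multiplication, and output $(\mathrm{Pois}(|\gamma_1|^2),\ldots,\mathrm{Pois}(|\gamma_m|^2))$. Every step is polynomial time, and by construction this reproduces the true thermal-input output distribution exactly.

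For the approximation claim I would feed this sampler into the approximate-counting technique of Stockmeyer, as used by Aaronson and Arkhipov \cite{aark}. Recall from the derivation preceding Theorem~\ref{thm:psd_hard} that the probability of the all-ones detection event equals $\Per(A)/\prod_i(1+\langle n_i\rangle)$, with the normalizing product computable directly from the input parameters. Since the output distribution is samplable by a polynomial-time classical algorithm, the probability of any fixed outcome $y$ equals $2^{-r}\cdot\#\{\text{length-}r\text{ random strings on which the sampler outputs }y\}$, a $\sharpP$ quantity with a polynomial-time-computable summand; Stockmeyer's theorem then gives a $\BPP^{\NP}$ procedure approximating it to within a $1\pm\epsilon$ multiplicative factor in time $\mathrm{poly}(n,1/\epsilon)$. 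Taking $y$ to be the all-ones outcome and multiplying by the known normalization yields a multiplicative approximation to $\Per(A)$ in $\mathsf{FBPP}^{\NP}$, and the same argument handles the $k$-mode detection events of the footnote.

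The steps that need care are the two physical facts underpinning the sampler: that the $P$-function of $\rho_i$ really is a nonnegative probability density (so the auxiliary variable $\beta$ is legitimate), and that $\varphi(U)$ acts as $\ket{\beta}\mapsto\ket{U\beta}$ with the subsequent photon counting factoring into independent Poissonians --- this is exactly where bosonic statistics reduce to something classical, and one should check that the conventions (whether $U$, $U^{T}$, or $\overline{U}$ appears) are consistent with the \ptf. On the counting side, the only subtlety is that Stockmeyer's theorem needs no lower bound on the probability being estimated: it tolerates a vanishingly small $\Per(A)$ (the cost grows only with the bit-length of the count) and returns $0$ when $\Per(A)=0$, so no separate case analysis is required. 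Everything else is routine.
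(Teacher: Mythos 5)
This theorem is imported from \cite{rahimi:2015} rather than proved in the paper, but your proposal correctly reconstructs that reference's argument and is consistent with the paper's own footnote identity $\Per(A) = \mathbb{E}_{x \sim \mathcal{G}_{\mathbb C}(0,1)^n}\bigl[\prod_i |\sum_j c_{ij}x_j|^2\bigr]$ for $A = CC^\dagger$, which is precisely the nonnegative Gaussian average that your $P$-representation sampler realizes (draw $\beta$ Gaussian, push through $U$, read out independent Poissons). The one step you elide is the discretization of the continuous Gaussian and Poisson draws into a polynomial-length uniform random string so that the fixed-outcome probability really has the form $f/2^r$ with $f \in \sharpP$; this is standard and does not affect the conclusion, but it should be stated explicitly rather than folded into ``Everything else is routine,'' since Stockmeyer's theorem acts on counting quantities, not on continuous integrals.
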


Intuitively, such an algorithm exists because it is possible to write the permanent of a PSD matrix as an integral\footnote{Suppose we have PSD matrix $A = CC^\dag$ where $C = \{c_{i,j}\}$.  Then the permanent of $A$ can be expressed as the following expected value over complex Gaussians:
$$\Per(A) = \mathlarger{\mathlarger{\mathlarger{\operatorname*{ \mathbb E}}}}_{x \in \mathcal G_{\mathbb C}(0,1)^n}\left[ \prod_{i=1}^n \left| \sum_{j=1}^n c_{i,j} x_j \right|^2\right].$$} of a nonnegative function, on which we can use Stockmeyer's approximate counting algorithm \cite{stockmeyer}.  Such a representation as a sum of positive terms also implies that the permanent of a PSD matrix is nonnegative. 

Notice that this also justifies our use of techniques distinct from the linear optical approach.  Suppose we can encode the answer to a $\GapP$-hard problem into the permanent of a PSD matrix as we do with real orthogonal matrices, then multiplicatively approximating the permanent of a PSD matrix would also be $\GapP$-hard under Turing reductions (see Theorem~\ref{thm:approx1} in Appendix~\ref{sec:approx}).  On the other hand, Theorem~\ref{thm:psd_approx_easy} says that such a multiplicative approximation \emph{does} exist, so
$$\PH \subseteq \P^\GapP \subseteq\mathsf{BPP}^\NP \subseteq \mathsf{\Sigma}_3^\P.$$
Therefore, either such a reduction does not exist or the polynomial hierarchy collapses to the third level.


\subsection{More Permanent Consequences of the Main Result}

In this section, we try to give a sense in which our proof for the hardness of the permanent for real orthogonal matrices leads to new hardness results for many classes of matrices.  The structure of this section is as follows:  we will first \emph{restrict} as much as possible the class of matrices for which the permanent is $\sharpP$-hard; we will then observe that the permanent for any larger class of matrices must also be hard, which will show hardness for many natural classes of matrices.

We call matrix $A$ an \emph{involution} if $A = A^{-1}$.

\begin{theorem}
\label{thm:orthogonal_involution_hard}
Let $A$ be a real orthogonal involution with $\Per(A) \ge 0$.  The permanent of $A$ is $\sharpP$-hard.
\end{theorem}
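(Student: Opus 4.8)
The plan is to reduce from Theorem~\ref{thm:main_real_orthogonal}, which already produces a real orthogonal matrix $O$ with $\Per(O) = 2^a 3^b \Delta_C$, and massage $O$ into an orthogonal \emph{involution} without changing the permanent except by a controllable factor. The natural trick is a block construction: given $O$, consider
$$
A = \begin{pmatrix} 0 & O \\ O^{T} & 0 \end{pmatrix}.
$$
Then $A$ is real, symmetric, and $A^2 = \begin{pmatrix} O O^{T} & 0 \\ 0 & O^{T} O \end{pmatrix} = I$, so $A$ is an orthogonal involution (it is its own inverse and its own transpose). Moreover, by the same off-diagonal-block reasoning used in the proof of Theorem~\ref{thm:psd_hard}, every cycle cover of $A$ pairs up rows from the two blocks, forcing $\Per(A) = \Per(O)\,\Per(O^{T}) = \Per(O)^2 = (2^a 3^b \Delta_C)^2 = 4^a 9^b \Delta_C^2$. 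Since $\Per(A)$ is a perfect square and $\Delta_C$ can be recovered from $\Delta_C^2$ only up to sign, this alone is not quite enough — but for circuits $C$ we are free to assume $\Delta_C \ge 0$ (e.g.\ by padding $C$ with a dummy variable so that it is counting, or more directly by reducing from counting satisfying assignments of a formula, which is already $\sharpP$-hard and yields a nonnegative quantity). So from $\Per(A) = 4^a 9^b \Delta_C^2$ we extract $\Delta_C$ by a single square root and division, and the reduction is still single-call. Note also $\Per(A) \ge 0$ automatically, matching the hypothesis in the theorem statement.

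The one remaining subtlety is the sign/normalization bookkeeping: Theorem~\ref{thm:main_real_orthogonal} gives $\Per(O) = 2^a 3^b \Delta_C$ with $a,b \in \mathbb Z$ possibly negative, but squaring makes the exponents $2a, 2b$ which are even, and we can as in the earlier proof arrange $p(n)$ and $\Gamma$ even so that $2^a 3^b$ is an honest integer; squaring only helps here. So the chain
$$
\Delta_C = \frac{\sqrt{\Per(A)}}{2^a 3^b}
$$
is computable in polynomial time from $\Per(A)$, establishing $\sharpP$-hardness of the permanent restricted to real orthogonal involutions with nonnegative permanent.

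The main obstacle — really the only place one must be careful — is ensuring the reduction remains \emph{single-call} and that the $\sharpP$-hard source quantity is genuinely nonnegative so the square root is unambiguous; both are handled by choosing the right $\sharpP$-hard starting point (counting satisfying assignments of a CNF, which is nonnegative and $\sharpP$-complete under the paper's notion of hardness) and feeding it through Theorem~\ref{thm:main_real_orthogonal}. Everything else (that the block matrix is an involution, that it is orthogonal, that its permanent factors) is a short direct verification analogous to the $\Lambda_B$ computation already appearing in the proof of Theorem~\ref{thm:psd_hard}.
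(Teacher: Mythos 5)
Your proof takes essentially the same route as the paper: you form the block matrix $\left(\begin{smallmatrix}0 & O \\ O^T & 0\end{smallmatrix}\right)$ (the paper's $\Lambda_B$), verify that it is a real symmetric orthogonal involution, and handle the sign ambiguity in $\sqrt{\Per(A)}$ by first arranging the source quantity to be nonnegative. The paper realizes that last step concretely by building an auxiliary circuit $C'(x,b)=C(x)\vee b$ so that $\Delta_{C'}=\Delta_C\pm 2^n$ is of known sign, which is precisely the ``padding'' you gesture at; otherwise the two arguments coincide.
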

\begin{proof}
Let $C : \{0,1\}^n \rightarrow \{0,1\}$ be a Boolean function for which we want to calculate $\Delta_C$.  We will construct a new circuit $C' : \{0,1\}^{n+1} \rightarrow \{0,1\}$ such that for $x \in \{0,1\}^n$ and $b \in \{0,1\}$ we have $C'(x, b) = C(x) \vee b.$  It is not hard to see then that $\Delta_{C'} = \Delta_C + 2^n$.  Importantly, this implies that $\Delta_{C'} \ge 0$.  

Now let us leverage the reduction in Theorem~\ref{thm:main_real_orthogonal} to build a real orthogonal matrix $B$ such that $\Per(B) \propto \Delta_{C'}$.  As in the proof of Theorem~\ref{thm:psd_hard}, let
$$\Lambda_B = \begin{pmatrix}
0 & B \\
B^T & 0
\end{pmatrix}.$$
Since $\Delta_{C'} \ge 0$, we have $\Per(B) \ge 0$, which implies that $\Per(B) = \sqrt{\Per(\Lambda_B)}$.  However, since $B$ is orthogonal, we have that $\Lambda_B^2 = I$, so $\Lambda_B$ is an involution.   Furthermore, $\Lambda_B = \Lambda_B^T$, so $\Lambda_B$ is a real orthogonal matrix.  Therefore, the permanent of real orthogonal involutions is $\sharpP$-hard.
\end{proof}
	
We call a matrix $A$ \emph{special} if $\det(A) = 1$.  Furthermore, a matrix $A$ is \emph{symplectic} if $A^T\Omega A = \Omega \text{ where } \Omega = \(\begin{smallmatrix}0 & I_n \\ -I_n & 0\end{smallmatrix}\)$.  We strengthen Theorem~\ref{thm:orthogonal_involution_hard} to provide the smallest class of matrices for which we know the permanent is $\sharpP$-hard.

\begin{theorem}
\label{thm:special_orthogonal_involution_hard}
Let $A$ be a real special orthogonal symplectic involution with $\Per(A) \ge 0$.  The permanent of $A$ is $\sharpP$-hard.\end{theorem}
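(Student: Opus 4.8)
The plan is to strengthen Theorem~\ref{thm:orthogonal_involution_hard} by upgrading the matrix $\Lambda_B$ constructed there so that it is additionally special (determinant $1$) and symplectic, while preserving orthogonality, involutivity, and the property that its permanent is a perfect square equal to $\Per(B)^2$. The starting point is exactly the reduction of Theorem~\ref{thm:orthogonal_involution_hard}: given $C$, build $C'$ with $C'(x,b) = C(x) \vee b$ so that $\Delta_{C'} = \Delta_C + 2^n \ge 0$, and then invoke Theorem~\ref{thm:main_real_orthogonal} to obtain a real orthogonal $B$ with $\Per(B) \propto \Delta_{C'} \ge 0$, hence $\Per(B) \ge 0$ and $\Per(B) = \sqrt{\Per(\Lambda_B)}$ where $\Lambda_B = \left(\begin{smallmatrix} 0 & B \\ B^T & 0\end{smallmatrix}\right)$.

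First I would check which of the three desired properties $\Lambda_B$ already has. It is real, symmetric, and orthogonal (since $B B^T = I$), and $\Lambda_B^2 = I$ so it is an involution; all of this is already in the proof of Theorem~\ref{thm:orthogonal_involution_hard}. What may fail is $\det(\Lambda_B) = 1$ and the symplectic condition $\Lambda_B^T \Omega \Lambda_B = \Omega$. For the determinant: $\det\left(\begin{smallmatrix} 0 & B \\ B^T & 0 \end{smallmatrix}\right) = (-1)^{N} \det(B)\det(B^T) = (-1)^N \det(B)^2$ for an $N \times N$ block $B$, which is $(-1)^N$ since $\det(B) = \pm 1$. So by padding $B$ with an identity block of appropriate parity (which only multiplies $\Per(B)$ by $1$ and does not change orthogonality), I can force $N$ even and $\det(\Lambda_B) = 1$, making it special. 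The remaining obstacle, and the one I expect to be the crux, is the symplectic condition. The natural fix is to replace $B$ by $B \oplus B^{-1} = B \oplus B^T$ (still orthogonal, still permanent a product we can control since $\Per(B \oplus B^T) = \Per(B)^2 \ge 0$), or more cleverly to exhibit a fixed orthogonal change of basis conjugating the block-antidiagonal involution into one that simultaneously satisfies $A^T \Omega A = \Omega$. Concretely, for a symmetric orthogonal involution $\Lambda$, one wants an orthogonal $P$ with $P^{-1}\Lambda P$ symplectic; since symmetric orthogonal involutions are classified up to orthogonal conjugacy by the dimensions of their $\pm 1$ eigenspaces, the task reduces to arranging these eigenspace dimensions to be compatible with a symplectic form — i.e., making the $+1$ and $-1$ eigenspaces each even-dimensional and suitably paired by $\Omega$. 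Padding $B$ with identity and $-I$ blocks lets me tune the eigenvalue multiplicities of $\Lambda_B$ freely without disturbing $\Per(B)$ or orthogonality.

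So the key steps, in order, are: (1) reduce to $C'$ with $\Delta_{C'} \ge 0$ and get real orthogonal $B$ with $\Per(B) \ge 0$ from Theorem~\ref{thm:main_real_orthogonal}; (2) pad $B$ by direct sums with $I$ and $-I$ blocks, and possibly by replacing $B$ with $B \oplus B^T$, chosen so that the resulting $\Lambda_B$ has both $\pm 1$ eigenspaces of even dimension and total even size — this controls $\det$ and prepares for symplecticity while leaving $\Per(B)$ unchanged up to a harmless nonnegative factor; (3) apply a fixed rational (or algebraic) orthogonal conjugation $A = P^{-1}\Lambda_B P$ bringing the form into block structure compatible with $\Omega$, verifying $A^T\Omega A = \Omega$, $A^T A = I$, $A^2 = I$, $\det A = 1$ directly; (4) observe $\Per$ is unchanged by none of the above in a way we cannot invert — in fact we only need $\Per(A)$ to determine $\Per(B)^2 = \Per(\Lambda_B)$, and since conjugation by a permutation-like $P$ or a careful choice of $P$ may alter the permanent, the cleanest route is to keep $P$ itself a signed permutation matrix so that $\Per(A) = \pm\Per(\Lambda_B)$ with a known sign, or to argue the specific $P$ we pick preserves the permanent exactly; and (5) conclude $\Per(A) \ge 0$ equals $\Per(B)^2 \propto \Delta_{C'}^2$, from which $\Delta_{C'}$ and hence $\Delta_C$ are recovered by a square root and a subtraction, all with a single oracle call. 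The main obstacle is step (3)–(4): ensuring the conjugation that installs the symplectic structure is by a matrix that both is orthogonal and preserves (or predictably transforms) the permanent; I expect this is handled by choosing $P$ to be a signed permutation matrix realizing the right pairing of basis vectors, since signed permutations are orthogonal and change $\Per$ only by an overall sign that we can read off from the permutation's structure.
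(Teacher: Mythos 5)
Your proposal correctly identifies the starting point (the orthogonal involution from Theorem~\ref{thm:orthogonal_involution_hard}) and correctly flags the symplectic condition as the crux, but your primary planned route — tuning eigenspace dimensions of $\Lambda_B$ by padding, then conjugating by a signed permutation $P$ to install the symplectic form — is substantially more elaborate than needed, and the details are not actually carried out. You would still need to exhibit a concrete $P$, verify it is a signed permutation compatible with both the orthogonal structure and $\Omega$, and track the sign it introduces in the permanent; none of this is done, so as written the argument has a genuine gap.

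The cleaner observation you appear to circle around but never commit to: if $M$ is \emph{any} real orthogonal matrix, then $A := I_2 \otimes M = M \oplus M$ is automatically symplectic, since
\[
A^T \Omega A = \begin{pmatrix} M^T & 0 \\ 0 & M^T \end{pmatrix}\begin{pmatrix} 0 & I \\ -I & 0 \end{pmatrix}\begin{pmatrix} M & 0 \\ 0 & M \end{pmatrix} = \begin{pmatrix} 0 & M^T M \\ -M^T M & 0 \end{pmatrix} = \Omega.
\]
No conjugation and no eigenspace analysis is required. Taking $M$ to be the real orthogonal involution $\Lambda_B$ produced by Theorem~\ref{thm:orthogonal_involution_hard} (with $\Per(M) \ge 0$), one sees immediately that $A = M \oplus M$ is orthogonal, is an involution ($A^2 = M^2 \oplus M^2 = I$), is special ($\det A = \det(M)^2 = \det(M^2) = 1$), is symplectic as above, and has $\Per(A) = \Per(M)^2 \ge 0$ with $\Per(M) = \sqrt{\Per(A)}$. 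This is the paper's proof. Also note a subtle misstep in your version: the ``replace $B$ by $B \oplus B^T$'' move is applied to the pre-$\Lambda$ matrix, and $\Lambda_{B \oplus B^T}$ is \emph{not} symplectic (the block computation yields $\hat B^2$ and $(\hat B^T)^2$ in the off-diagonal blocks, not $I$); the direct sum has to be applied to the \emph{involution} $M = \Lambda_B$, not to $B$ itself.
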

\begin{proof}
Let $B$ be a real orthogonal involution, and let $I_n$ be the $n \times n$ identity matrix. Consider the matrix 
$$I_2 \otimes B = \begin{pmatrix}
B & 0 \\
0 & B
\end{pmatrix}.$$
Notice that 
$$\det(I_2 \otimes B) = \det(B)^2 = \det(B^2) = \det(I_n) = 1,$$
where we use that $B^2 = I_n$ is an involution for the third equality. Therefore, $I \otimes B$ is special.  It is also easy to verify that $I_2 \otimes B$ is real orthogonal symplectic involution.  Assuming $\Per(B) \ge 0$, we have $\Per(B) = \sqrt{\Per(I_2 \otimes B)}$.  Combining the above with Theorem~\ref{thm:orthogonal_involution_hard}, we get that the permanent of real special orthogonal involutions is $\sharpP$-hard.
\end{proof}

Since the set of $n \times n$ real special orthogonal matrices form a group $\mathrm{SO}(n, \mathbb{R})$, we immediately get $\sharpP$-hardness for all the matrix groups containing it.  
\begin{corollary}
\label{cor:classical_groups}
The permanent of an $n \times n$ matrix $A$ in any of the classical Lie groups over the complex numbers is $\sharpP$-hard.  That is, it is hard for the following matrix groups:
\begin{align*}
 \textbf{General linear:  }& A \in \mathrm{GL}(n) \text{ iff } \det(A) \neq 0 \\
 \textbf{Special linear:  }& A \in \mathrm{SL}(n) \text{ iff } \det(A) = 1 \\
 \textbf{Orthogonal:  }& A \in \mathrm{O}(n) \text{ iff } AA^T = I_n \\
 \textbf{Special orthogonal:  }& A \in \mathrm{SO}(n) \text{ iff } AA^T = I_n \text{ and } \det(A) = 1 \\
 \textbf{Unitary:  }& A \in \mathrm{U}(n) \text{ iff } AA^\dag = I_n \\ 
 \textbf{Special unitary:  }& A \in \mathrm{SU}(n) \text{ iff } AA^\dag = I_n \text{ and } \det(A) = 1 \\
 \textbf{Symplectic:  }& A \in \mathrm{Sp}(2n) \text{ iff } A^T\Omega A = \Omega \text{ where } \Omega = \(\begin{smallmatrix}0 & I_n \\ -I_n & 0\end{smallmatrix}\)
\end{align*}
\end{corollary}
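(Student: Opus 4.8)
The plan is to read the corollary off directly from Theorem~\ref{thm:special_orthogonal_involution_hard} via a containment argument: that theorem already isolates a family of hard instances that lies inside \emph{every} group in the list, so a permanent oracle for any one of these groups in particular solves the hard family, and $\sharpP$-hardness propagates upward.

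Concretely, let $\mathcal{C}$ denote the class of matrices $A$ that are real, special ($\det A = 1$), orthogonal ($A A^T = I$), symplectic ($A^T \Omega A = \Omega$ with $\Omega = \left(\begin{smallmatrix}0 & I_n \\ -I_n & 0\end{smallmatrix}\right)$), involutions ($A^2 = I$), and satisfy $\Per(A) \ge 0$. By Theorem~\ref{thm:special_orthogonal_involution_hard}, the permanent is $\sharpP$-hard for $\mathcal{C}$ under a single oracle call. The first step is to verify the inclusions $\mathcal{C} \subseteq G$ for each $G$ among $\mathrm{GL}(n)$, $\mathrm{SL}(n)$, $\mathrm{O}(n)$, $\mathrm{SO}(n)$, $\mathrm{U}(n)$, $\mathrm{SU}(n)$, and $\mathrm{Sp}(2n)$. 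For the first six these are immediate: an orthogonal matrix is invertible, so $\mathcal{C} \subseteq \mathrm{GL}(n)$; it has unit determinant, so $\mathcal{C} \subseteq \mathrm{SL}(n)$; a real orthogonal matrix satisfies $A A^\dag = A A^T = I$, hence is unitary, giving $\mathcal{C} \subseteq \mathrm{U}(n)$ and, together with $\det A = 1$, $\mathcal{C} \subseteq \mathrm{SU}(n)$; and $\mathcal{C} \subseteq \mathrm{O}(n) \cap \mathrm{SO}(n)$ by definition. For the symplectic group we do \emph{not} invoke an inclusion $\mathrm{SO} \subseteq \mathrm{Sp}$ (which fails in general) — instead we use that the matrices produced by Theorem~\ref{thm:special_orthogonal_involution_hard} are, by construction, already symplectic and of even dimension $2n$ (so that $\Omega$ is defined), whence $\mathcal{C} \subseteq \mathrm{Sp}(2n)$.

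The second step is routine. Fix any $G$ in the list and suppose $\O$ is an oracle computing $\Per(M)$ for $M \in G$. Given a $\sharpP$ instance (say a Boolean circuit $C$), run the reduction behind Theorem~\ref{thm:special_orthogonal_involution_hard} to produce $A \in \mathcal{C} \subseteq G$ with $\Per(A)$ encoding $\Delta_C$, query $\O$ once on $A$, and decode. Hence every function in $\sharpP$ lies in $\FP^{\O[1]}$, which is precisely the claimed $\sharpP$-hardness for $G$.

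I expect no genuine obstacle here; the only points demanding care are bookkeeping ones — that the classical Lie groups are taken over $\mathbb{C}$ while our instances are real (harmless, as a real orthogonal or real symplectic matrix is in particular a complex unitary or complex symplectic matrix), that the symplectic case relies on the even-dimensional instances and the specific form of $\Omega$ matched by the construction of Theorem~\ref{thm:special_orthogonal_involution_hard}, and that ``$\sharpP$-hard for a matrix class'' (single-call reducibility) is trivially inherited by any superclass. None of this requires ideas beyond Theorem~\ref{thm:special_orthogonal_involution_hard}.
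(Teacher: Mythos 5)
Your argument matches the paper's proof: both reduce the corollary to containment of the hard instances from Theorem~\ref{thm:special_orthogonal_involution_hard} in each listed group, noting that for the first six groups this follows from $\mathrm{SO}(n,\mathbb{R})$ being a subgroup, while the symplectic case is handled by the fact that the constructed instances are themselves symplectic rather than by any false inclusion $\mathrm{SO}\subseteq\mathrm{Sp}$. Your write-up is somewhat more explicit about the superclass-monotonicity of single-call hardness and the even-dimension bookkeeping, but it is the same argument.
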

\begin{proof}
Since $\mathrm{SO}(n, \mathbb{R})$ is a subgroup of all the stated Lie groups besides the symplectic group $\mathrm{Sp}(2n)$, their permanents are $\sharpP$-hard by Theorem~\ref{thm:special_orthogonal_involution_hard}.  Theorem~\ref{thm:special_orthogonal_involution_hard} handles the symplectic case separately.
\end{proof}

%
%
%


\section{Open Problems}
\label{sec:open_problems}

This paper gives many new classes of matrices for which the permanent is hard.  Nevertheless, there exist classes of matrices which have unknown permanent complexity, and proving $\sharpP$-hardness or otherwise remains a central open problem.  For instance, is computing the permanent of an orthogonal matrix modulo a prime $p$ hard for all $p \neq 2,3$?  Notice that our result only gives $\ModpP$-hardness for 1/16th of all primes.


Another interesting open question about permanents concerns the complexity of multiplicatively approximating permanents of PSD matrices.  Although we show the exact version of this problem to be $\sharpP$-hard in this paper, we know that an $\mathsf{FBPP}^\NP$ algorithm exists \cite{rahimi:2015}.  Could this problem actually just be in $\P$?  Is there any more insight to be gained by viewing PSD permanents as probabilities of certain boson sampling experiments?  For instance, Chakhmakhchyan, Cerf, and Garcia-Patron \cite{chakhmakhchyan:2016} have recently detailed conditions on the eigenvalues of a PSD matrix for which a linear optical sampling algorithm gives a better \emph{additive} approximation to the permanent than the classic approximation algorithm of Gurvits \cite{gurvits:2005}.

\section{Acknowledgments}
We would like to thank Scott Aaronson for posing the question which led to this paper and for his comments on this paper.  We would also like to thank Rio LaVigne and Michael Cohen for some key mathematical insights.

\addcontentsline{toc}{section}{References}
\bibliographystyle{plain}
\bibliography{bibliography}

\appendix
\section{Counting Classes}
\label{sec:complexity}

Let us introduce the complexity classes we use in this paper. Note that the permanent is a function, so computing it is a function problem. Hence, we will sometimes need the class $\FP$ to stand in for $\P$ when we are talking about function problems.  
\begin{definition}
$\FP$ is the class of functions computable by deterministic Turing machines in polynomial time. 
\end{definition}
Of course, computing the permanent is, in general, thought to be intractable (i.e., not in $\FP$). We use a variety of different classes to capture the difficulty of computing the permanent (depending on the kind of matrix, underlying field, etc.), but the most important class is $\sharpP$: 
\begin{definition}
$\sharpP$ is the class of function problems of the form ``compute the number of accepting paths of a polynomial-time non-deterministic Turing machine." For example, given a classical circuit of NAND gates as input, the problem of computing the number of satisfying assignments is in $\sharpP$ (and indeed, is $\sharpP$-complete). 
\end{definition}
Since $\sharpP$ is a class of function problems (more specifically, counting problems), we often consider $\P^{\sharpP}$ to compare $\sharpP$ to decision classes. Observe that $\P^{\sharpP} = \P^{\PP}$ since, on the one hand, the $\sharpP$ oracle can count paths to simulate $\PP$, and on the other hand, we can use the $\PP$ oracle to binary search (on the number of accepting paths) to count exactly. We add that $\P^{\sharpP} \subseteq \PSPACE$ is a upper bound for $\sharpP$, and Toda's theorem \cite{toda} gives $\PH \subseteq \P^{\sharpP}$.

Fenner, Fortnow, and Kurtz \cite{fennergap:1991} define a very closely related class, $\GapP$, which is also relevant to us. 
\begin{definition}
$\GapP$ is the class of function problems of the form ``compute the number of accepting paths \emph{minus} the number of rejecting paths of a polynomial-time non-deterministic Turing machine."
\end{definition}
We have $\GapP \supseteq \sharpP$ since we can take a $\sharpP$ problem (manifest as a non-deterministic Turing machine) and at the end of each rejecting path, add a non-deterministic branch which accepts in one half and rejects in the other. In the other direction, any $\GapP$ problem can be solved with at most two calls to a $\sharpP$ oracle (one for accepting paths, one for rejecting), and a subtraction. Hence, for most of our results we neglect the difference and stated $\sharpP$-hardness. 

Nonetheless, $\GapP$ and $\sharpP$ \emph{are} different. For one, functions in $\sharpP$ are non-negative (and integral) by definition, whereas functions in $\GapP$ can take negative values. The distinction is also important in the context of approximation; Stockmeyer's approximate counting gives a multiplicative approximation to any $\sharpP$ problem in $\BPP^{\NP}$, whereas it is known that multiplicative approximation to a $\GapP$-hard problem remains $\GapP$-hard under Turing reductions (see Theorem~\ref{thm:approx1}). 

One cannot even get very bad multiplicative approximations to $\GapP$-hard problems. Even the worst multiplicative approximation will distinguish zero from non-zero outputs, and this problem is captured by the class $\CeP$, defined below. 
\begin{definition}
$\CeP$ is the class of decision problems of solvable by a non-deterministic polynomial-time machine which accepts if it has the same number of accepting paths as rejecting paths. 
\end{definition}
A good upper bound for $\CeP$ is simply $\PP$. This is easily seen once we have the following theorem.
\begin{theorem}
\label{thm:gap_arithmetic}
Suppose $f_1, f_2 \in \Sigma^{*} \rightarrow \mathbb Z$ are functions computable in $\GapP$. Then $f_1 + f_2$, $-f_1$, and $f_1 f_2$ are computable in $\GapP$. 
\end{theorem}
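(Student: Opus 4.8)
The plan is to work from the characterization of $\GapP$ as the class of functions expressible as the \emph{gap} of a polynomial-time nondeterministic Turing machine --- the number of accepting paths minus the number of rejecting paths --- and, for each of the three operations, to assemble a new such machine out of machines $M_1, M_2$ witnessing $f_1, f_2$. Write $a_i(x)$ and $r_i(x)$ for the numbers of accepting and rejecting computation paths of $M_i$ on input $x$, so that $f_i(x) = a_i(x) - r_i(x)$. For $-f_1$, simply take $M_1$ and swap the accept/reject labels on all halting configurations; the resulting machine has $r_1$ accepting paths and $a_1$ rejecting paths, hence gap $-f_1$.

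For $f_1 + f_2$, let $M$ on input $x$ make a single initial binary nondeterministic choice and then simulate $M_1(x)$ in one branch and $M_2(x)$ in the other, inheriting the halting decision of whichever submachine it ran. The computation paths of $M$ form the disjoint union of those of $M_1$ and $M_2$ with labels preserved, so $M$ has $a_1 + a_2$ accepting paths and $r_1 + r_2$ rejecting paths, giving gap $(a_1 - r_1) + (a_2 - r_2) = f_1(x) + f_2(x)$.

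For the product, the idea is to run the two machines in sequence and let the composite path multiply signs. Let $M$ on input $x$ first simulate $M_1(x)$ down to a halting configuration, remembering in a bit $c_1$ whether it accepted, then continue by simulating $M_2(x)$ down to a halting configuration with decision bit $c_2$; finally $M$ accepts iff $c_1 = c_2$ and rejects iff $c_1 \neq c_2$. The paths of $M$ are in bijective correspondence with pairs consisting of a path of $M_1$ and a path of $M_2$, so
\[
a_M(x) = a_1 a_2 + r_1 r_2, \qquad r_M(x) = a_1 r_2 + r_1 a_2,
\]
and the gap of $M$ is $a_1 a_2 + r_1 r_2 - a_1 r_2 - r_1 a_2 = (a_1 - r_1)(a_2 - r_2) = f_1(x)\, f_2(x)$. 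Since $M_1$ and $M_2$ run in polynomial time, so does $M$ in all three constructions, which proves that $f_1 + f_2$, $-f_1$, and $f_1 f_2$ lie in $\GapP$.

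I expect no real obstacle: the constructions for negation and addition are immediate, and the only point in the product construction that needs thought is the choice of acceptance rule --- the ``agree versus disagree'' convention is exactly what forces the composite gap to factor as $(a_1 - r_1)(a_2 - r_2)$ rather than some other bilinear combination of the $a_i, r_i$. One could alternatively route all three arguments through the identity $\GapP = \sharpP - \sharpP$, invoking closure of $\sharpP$ under addition and multiplication, but the direct machine constructions above are self-contained.
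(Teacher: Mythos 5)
Your proposal is correct and matches the paper's proof essentially verbatim: negation by swapping accept/reject, addition by an initial nondeterministic branch into $M_1$ or $M_2$, and multiplication by running $M_1$ then $M_2$ and accepting iff their outcomes agree, yielding the same bilinear identity $a_1a_2 + r_1r_2 - a_1r_2 - r_1a_2 = (a_1-r_1)(a_2-r_2)$.
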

\begin{proof}
Let $M_1$ and $M_2$ be non-deterministic machines witnessing $f_1 \in \GapP$ and $f_2 \in \GapP$ respectively. Then the machines for $f_1 + f_2$, $-f_1$, and $f_1 f_2$ are defined as follows. 
\begin{enumerate}
\item For $f_1 + f_2$, non-deterministically branch at the start, then run $M_1$ in one branch and $M_2$ in the other. 
\item For $-f_1$, take the complement of $M_1$. That is, make every accepting path reject, and make every rejecting path accept. 
\item For $f_1 f_2$, run $M_1$ to completion, then run $M_2$ to completion (in every branch of $M_1$). Accept if the two machines produce the same outcome, otherwise reject. 
\end{enumerate} 
The last construction may require some explanation. Let $a_1, a_2$ be the number of accepting paths of $M_1$ and $M_2$ respectively, and similarly let $b_1, b_2$ be the numbers of rejecting paths. Then there are $a_1 a_2 + b_1 b_2$ accepting paths for the new machine and $a_1 b_2 + a_2 b_1$ rejecting paths, so as a $\GapP$ machine it computes 
$$
a_1 a_2 - a_1 b_2 - a_2 b_1 + b_1 b_2 = (a_1 - b_1) (a_2 - b_2) = f_1(x) f_2(x).
$$
\end{proof}
Theorem~\ref{thm:gap_arithmetic} implies that $\CeP \subseteq \PP$ because we can square and negate the gap. In other words, we can find a machine such that the gap is always negative (i.e., strictly less than half of all paths accept) unless the original machine had gap zero, in which case the gap is still zero (or, WLOG, very slightly positive). It is also worth noting that $\coCeP$ is known to equal $\NQP$, by a result of Fenner et al.\ \cite{NQP:1998}. 


\begin{definition}
The class $\NQP$ contains decision problems solvable by a polynomial-time quantum Turing machine (or, equivalently, a uniform, polynomial-size family of quantum circuits) where we accept if there is any nonzero amplitude on the accept state at the end of the computation. 
\end{definition}
Quantum classes with exact conditions on the amplitudes (e.g., $\NQP$ or $\EQP$) tend to be very sensitive to the gate set, or QTM transition amplitudes allowed. Adleman, Demarrais, and Huang \cite{quantumcomputability:1997} are careful to define $\NQP$ for the case where the transition amplitudes are algebraic and real. 


Finally, we specify computational hardness for our finite field problems using a mod $k$ decision version of $\sharpP$. 
\begin{definition}
For any integer $k \geq 2$, let $\ModkP$ be the class of decision problems solvable by a polynomial time non-deterministic machine which rejects if the number of accepting paths is divisible by $k$, and accepts otherwise. In the special case $k = 2$, $\ModkP$ is also known as ``parity $\P$", and denoted $\parityP$. 
\end{definition}
Clearly $\P^{\sharpP}$ is an upper bound for $\ModkP$.  We are finally ready to state the main hardness result for these counting classes, namely, the celebrated theorem of Toda \cite{toda} and a subsequent generalization by Toda and Ogiwara \cite{todaogiwara:1992}.  There are many important consequences of Toda's work, but we only require the following formulation.

\begin{theorem}[Toda's Theorem \cite{toda, todaogiwara:1992}]
Let $A$ be one of the counting classes $\ModkP$, $\CeP$, $\#P$, $\PP$, or $\GapP$. Then $\PH \subseteq \BPP^A$.
\end{theorem}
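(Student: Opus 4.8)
The plan is to follow Toda's two-step strategy and dispatch the five target classes in two groups: for $A \in \{\sharpP, \PP, \GapP\}$ the statement is Toda's original theorem plus trivial oracle inclusions, while for $A \in \{\ModkP, \CeP\}$ one additionally needs the $p$-adic refinement of Toda and Ogiwara. \emph{Step 1 — push the hierarchy into $\parityP$.} First I would prove $\PH \subseteq \mathrm{BP}\cdot\parityP$ by induction on the level of $\PH$. The base case uses the Valiant--Vazirani isolation lemma: random affine hashes isolate a unique satisfying assignment with inverse-polynomial probability, and ``exactly one satisfying assignment'' is a $\parityP$ predicate, so independent repetitions boost the success probability; this puts $\NP$ (and, by complementation, each $\Pi_i^\P$ in the inductive step) into $\mathrm{BP}\cdot\parityP$. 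The inductive step requires that $\mathrm{BP}\cdot\parityP$ be closed under an $\exists$ layer and a further parity layer, for which one combines Valiant--Vazirani with Toda's amplification trick: iterating the integer polynomial $P(y) = 3y^4 + 4y^3$ maps a count $\equiv -1 \pmod{2^{\ell}}$ to one $\equiv -1 \pmod{2^{2\ell}}$ and a count $\equiv 0 \pmod{2^\ell}$ to one $\equiv 0 \pmod{2^{2\ell}}$ (because $P(0)=0$, $P(-1)=-1$, and $P'(0)=P'(-1)=0$), and counts of nondeterministic machines are closed under sums and products — this is exactly the arithmetic of Theorem~\ref{thm:gap_arithmetic} read modulo $2$ — so after $O(\log q)$ iterations the parity bit is rigid modulo $2^q$ and survives an intervening existential quantifier.

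\emph{Step 2 — the easy classes.} From $\PH \subseteq \mathrm{BP}\cdot\parityP$ we get $\PH \subseteq \BPP^{\parityP}$, and since a single query to a $\sharpP$ oracle returns an exact count whose residue modulo $2$ answers any $\parityP$ question, $\parityP \subseteq \P^{\sharpP}$, whence $\PH \subseteq \BPP^{\sharpP}$. The $\PP$ case then follows from $\P^{\sharpP} = \P^{\PP}$ (noted in the main text) and the $\GapP$ case from $\GapP \supseteq \sharpP$, so a $\GapP$ oracle is at least as powerful.

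\emph{Step 3 — $\ModkP$ and $\CeP$.} For a prime $p$, I would rerun Step 1 $p$-adically: precompose each count $N$ with the map $N \mapsto 1 - N^{p-1}$, which by Fermat's little theorem is $\equiv 0$ or $\equiv 1 \pmod p$ according to whether $p \mid N$, and then amplify with an integer polynomial that rigidifies residues in $\{0,1\}$ to higher powers of $p$; the closure needed is again Theorem~\ref{thm:gap_arithmetic}, now read modulo $p$. This yields $\PH \subseteq \BPP^{\ModpP}$, and for composite $k$ one picks a prime $p \mid k$ and uses $\ModpP \subseteq \P^{\ModkP}$ (knowing a count modulo $k$ determines it modulo $p$). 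For $\CeP$, the same manipulations carried out inside $\GapP$ produce a $\GapP$ function that vanishes exactly when the underlying $\parityP$ computation rejects, so a single $\CeP$ query followed by a majority vote over the random bits decides membership; this is the content of the Toda--Ogiwara generalization \cite{todaogiwara:1992}.

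\emph{Main obstacle.} The genuinely delicate part — and where I expect essentially all the work to be — is Step 1 together with its $p$-adic analogue in Step 3: one must verify simultaneously that the iterated polynomial really does rigidify the relevant residue class modulo ever-higher prime powers and that every transformation keeps us inside the counting class, so that the existential layers of $\PH$ can be absorbed without erasing the mod-$p$ signal. The remaining inclusions among $\sharpP$, $\PP$, $\GapP$, $\ModkP$, and $\CeP$ are elementary.
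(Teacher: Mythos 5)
Note that the paper does not prove this statement; it is imported as a cited theorem (Toda \cite{toda} for the $\sharpP$, $\PP$, and $\GapP$ cases; Toda and Ogiwara \cite{todaogiwara:1992} for $\ModkP$ and $\CeP$). There is therefore no in-paper argument to compare yours against, and the burden of your sketch is to reconstruct the external proofs.

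Your outline does that faithfully and the key calculations check out: $P(y)=3y^4+4y^3$ satisfies $P(0)=0$, $P(-1)=-1$, and $P'(y)=12y^2(y+1)$ vanishes at both points, so $P$ squares the modulus of the residue class as claimed (explicitly, $P(y)=y^3(3y+4)$ and $P(y)+1=(y+1)^2(3y^2-2y+1)$), and the closure needed at each step is exactly the arithmetic of Theorem~\ref{thm:gap_arithmetic}. The Fermat normalization $N\mapsto 1-N^{p-1}$ is the standard Toda--Ogiwara move. Two points could be tightened. First, your parenthetical justification of $\ModpP\subseteq\P^{\ModkP}$ for $p\mid k$ is off: a $\ModkP$ oracle reports only whether the count is divisible by $k$, not the residue itself, so one cannot ``determine it modulo $p$''; the correct reduction multiplies the accepting-path count by $k/p$ (by branching over $k/p$ dummy choices), after which divisibility of the new count by $k$ is equivalent to divisibility of the old by $p$. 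Second, the $\CeP$ clause conflates a $\GapP$ value being equal to zero with being congruent to zero modulo a large power of $2$, and a $\CeP$ oracle detects only the former. What rescues the argument is that the Valiant--Vazirani step already forces the relevant inner count into $\{0,1\}$ with high probability, hence the gap into $\{0,-1\}$ exactly, and your amplifying polynomial fixes both of these values on the nose, so the composed $\GapP$ function really does lie in $\{0,-1\}$ and a single $\CeP$ query followed by a majority vote works. These are presentational rather than substantive gaps, and the overall strategy is correct.
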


This means in particular that, if a problem is hard for any of these classes, then there is no efficient algorithm for the problem unless $\PH$ collapses.


\section{Real Construction of Toffoli (Proof of Lemma~\ref{lem:toff_proof})}
\label{app:toff_proof}

In this appendix we prove Lemma~\ref{lem:toff_proof} from Section~\ref{sec:real_orthogonal}.  Let us first define $R_\theta$ as the rotation by $\theta$ about the $Y$-axis.  That is, $R_\theta = \cos(\theta /2) I - i \sin(\theta /2)Y$ where $Y$ is the Pauli $\sigma_Y$ matrix.  For our purposes, we only require the following two matrices:
\begin{center}
\begin{minipage}{.5\linewidth}
\centering
$\B = \frac{1}{2}
\begin{pmatrix}
 \sqrt{2+\sqrt{2}} & -\sqrt{2-\sqrt{2}} \\
 \sqrt{2-\sqrt{2}} & \sqrt{2+\sqrt{2}} 
\end{pmatrix}$
\end{minipage}
\begin{minipage}{.3\linewidth}
\centering
$\R=
\begin{pmatrix}
0 & -1 \\
1 & 0  
\end{pmatrix}$
\end{minipage}
\end{center}

Let us now recall the statement of the lemma:
\begin{repeated}
There exists a circuit of CSIGN, Hadamard, and $\B$ gates which implements a Toffoli gate \emph{exactly}.
\end{repeated}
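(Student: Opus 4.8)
The plan is to build the Toffoli gate out of a doubly-controlled $\R$ gate (call it $\CCR$), since $\R = R_{\pi}$ differs from the Pauli $X$ gate (the target of Toffoli) only by conjugation-free phase bookkeeping, and the standard circuit identity for decomposing a controlled-$U$ gate into a circuit using one-qubit gates and CNOTs carries over to the real setting when $U$ has a real square root. Concretely, I would start from the well-known decomposition of the Toffoli gate into a doubly-controlled-$V$ gate where $V^2 = X$: Toffoli $= (\text{CV}_{1,3})(\text{CNOT}_{1,2})(\text{CV}^\dagger_{2,3})(\text{CNOT}_{1,2})(\text{CV}_{2,3})$. The usual choice $V = \sqrt{X}$ has complex entries, so instead I take $V$ to be a \emph{real} square root of $X$ up to a global sign; since $\R^2 = -I$ and $\R$ is real orthogonal, a doubly-controlled-$\R$ circuit assembled in this pattern produces Toffoli possibly times a diagonal sign correction on a subspace, which I then fix with an extra CSIGN (or $Z$, itself a real one-qubit gate obtainable from $\B$ and $\R$). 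The key subroutine is therefore an \emph{exact, real} implementation of singly-controlled $\R$ and of CNOT from CSIGN and single-qubit real gates.

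The main building block is: CNOT $= (I \otimes H)\,\CSIGN\,(I \otimes H)$, which is exact and uses only real gates. For controlled-$\R$, I would similarly conjugate CSIGN by an appropriate real one-qubit gate $W$ on the target satisfying $W Z W^{-1} = \R$ — but $Z$ and $\R$ are not conjugate (different eigenvalues: $\pm 1$ versus $\pm i$), so this direct route fails, which is precisely where $\B$ enters. Instead I would realize controlled-$\R$ by the standard "conjugate a controlled-$Z$ by a gate that rotates the $Z$-eigenbasis to the $\R$-eigenbasis", i.e. write $\R = A Z A^{-1}\cdot(\text{phase})$ is impossible over the reals, so one works one level up: use two controlled-phase-type gates. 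The cleanest path is to observe that $\B = R_{\pi/4}$ (rotation by $\pi/4$ about the $Y$-axis), so $\B^2 = R_{\pi/2}$ and $\R = R_\pi = \B^4$; thus a controlled-$\R$ can be synthesized by the "gray-code" / Barenco-style identity controlled-$R_\theta = (I\otimes R_{\theta/2})\,\CNOT\,(I \otimes R_{-\theta/2})\,\CNOT$ with $\theta = \pi$ and $R_{\pm\pi/2} = \B^{\pm 2}$, all of which are real. This gives controlled-$\R$ exactly from CNOT and powers of $\B$, hence from CSIGN, $H$, and $\B$.

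Assembling: (1) show CNOT $= (I\otimes H)\CSIGN(I\otimes H)$; (2) show controlled-$\R = (I\otimes \B^{2})\,\CNOT\,(I\otimes \B^{-2})\,\CNOT$ by direct $2\times 2$ block computation, using $\B^2 \B^{-2} = I$ and $\B^2 X \B^{-2} = \R$ (which is the statement $R_{\pi/2} X R_{-\pi/2} = R_\pi$, checked from $R_\theta X R_{-\theta} = $ rotation of $X$ about $Y$); note $\B^{-1}$ is just $\B^{T}$, obtainable as $\B^{7}$ if we want to avoid inverses; (3) substitute $V = \R$ (a real root of $-X$) into the Toffoli-from-controlled-$V$ identity and track the resulting global phase/sign on each computational basis state, observing that the spurious phases are all $\pm1$ and confined to basis states where the control pattern is "$10$" or "$01$", hence correctable by a single controlled-$Z = \CSIGN$ between the two control qubits together with possibly one $Z$; (4) verify $Z$ itself is expressible exactly in $\{H,\B,\CSIGN\}$ — indeed $Z = H X H$ and $X = $ (controlled-$X$ with control fixed to $\ket 1$), or more simply $Z$ arises from $\R^2 = -I$ plus a Hadamard sandwich on a diagonal gate; alternatively absorb the correction directly into the CSIGN layer. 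The main obstacle I anticipate is step (3): the bookkeeping of which global/relative phases appear when $V^2 = -X$ rather than $+X$, and confirming that every stray phase is a genuine $\pm1$ sign lying in a subspace reachable by CSIGN/$Z$ corrections rather than an uncorrectable complex phase — this requires a careful but finite case check over the eight computational basis states, and is exactly the kind of calculation the authors defer to the appendix.
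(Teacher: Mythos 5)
There is a genuine gap at the heart of your plan, and it kills step (3). Your crucial claim is that $\R = R_{\pi}$ is ``a real root of $-X$'' and that substituting $V = \R$ into the Barenco decomposition yields Toffoli up to diagonal sign corrections. But $\R^2 = -I$, not $-X$; indeed no real $2 \times 2$ matrix can square to $\pm X$, since $\pm X$ has eigenvalues $\{1,-1\}$, which would force the square root to have one eigenvalue in $\{\pm 1\}$ and one in $\{\pm i\}$, impossible for a real matrix whose non-real eigenvalues must come in conjugate pairs. If you nevertheless plug $V = \R$ into the usual pattern $CV_{1,3}\cdot\CNOT_{1,2}\cdot CV^{\dagger}_{2,3}\cdot\CNOT_{1,2}\cdot CV_{2,3}$, the net action on the target is $V^{x_1}V^{x_2}V^{-(x_1\oplus x_2)} = V^{2x_1x_2}$, so for controls $(1,1)$ you get $\R^{2} = -I$ on the target. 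That is not ``Toffoli times signs on the $01/10$ subspace''; it is precisely $\CSIGN$ acting on the two control qubits and the identity on the target. The defect is not a correctable phase on a few basis states but a complete absence of the off-diagonal bit-flip, and no diagonal correction (CSIGN, $Z$, or any composition of them) can produce one. To fix it you would need a doubly-controlled $X$ (or doubly-controlled $Z$, equivalent under a target Hadamard), i.e.\ the very gate you are trying to build.

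The paper avoids this obstruction by \emph{not} using the controlled-$V$ route at all. It builds a true controlled-controlled-$\R$ gate ($\CCR$), which applies $\R$ itself (not $\R^2$) to the target when both controls are $1$, using the interleaved pattern $\B,\CNOT_2,\B^{-1},\CNOT_1,\B,\CNOT_2,\B^{-1},\CNOT_1$; the cancellations there work multiplicatively via $X\B^{-1}X = \B$, so that the net target operation is $I$ unless both controls are set, in which case it is $\B^4 = \R$. The gate $\CCR$ is a signed permutation of basis states, so it is \emph{not} a classical gate, but the paper then composes two $\CCR$ gates with two $\CSIGN$ gates to produce a non-affine \emph{classical} reversible gate, and finally invokes the result of Aaronson, Grosse, and Schaeffer \cite{ags:2015} that any non-affine classical reversible gate together with ancillas generates Toffoli. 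That last step is the key idea you are missing: once one has any nonlinear real orthogonal permutation gate, Toffoli follows by a generic generation theorem, and there is no need to hunt for a real square root of $X$ (which does not exist).
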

\begin{proof}

We construct the Toffoli gate from the CSIGN, Hadamard, and $\B$ gates in three steps:
\begin{enumerate}
\item \textbf{Construct a controlled-controlled-$\R$ gate ($\CCR$) from $\CSIGN$ and $\B$ gates.}  $\CCR$ is a three-qubit gate that applies $\R$ to the third qubit if the first two qubits are in the state $\ket{11}$.  Notice that $\CCR$ is already a kind of ``poor man's''  Toffoli gate.  If it were not for the minus sign in the $\R$ gate, we would be done.  The construction is given in Figure~\ref{fig:B_to_CCR}.  Observe that if either of the two control qubits is zero, then any CNOT gate controlled by that qubit can be ignored.  The remaining gates will clearly cancel to the identity.  Furthermore, if the two control qubits are in the state $\ket{11}$, then on the last qubit, we apply the operation $X \B^{-1} X \B X \B^{-1} X \B$.  Since $X \B^{-1} X = \B$,  
$$X \B^{-1} X \B X \B^{-1} X \B =  \B^4 = \R.$$

Notice that this construction uses $\CNOT$ gates, but observe that a $\CNOT$ is a CSIGN gate conjugated by the Hadamard gate: $$(I \otimes H) \CSIGN (I \otimes H) = \CNOT.$$

\begin{figure}[ht]
\centering
\mbox{
\Qcircuit @C=1em @R=1.5em {
& \qw & \qw & \qw & \ctrl{2} & \qw & \qw & \qw & \ctrl{2} & \qw & & & \ctrl{1} & \qw \\ 
& \qw & \ctrl{1} & \qw & \qw & \qw & \ctrl{1} & \qw & \qw & \qw & = &  & \ctrl{1} & \qw \\
& \gate{\B} & \targ & \gate{\B^{-1}} & \targ & \gate{\B} & \targ & \gate{\B^{-1}} & \targ & \qw & & & \gate{\R} & \qw
}}
\caption{Generating $\CCR$ from the CNOT and $\B$ gates.}
\label{fig:B_to_CCR}
\end{figure}

\item \textbf{Construct a non-affine classical reversible gate from $\CSIGN$ and $\CCR$ gates.} By classical, we simply mean that the gate maps each computational basis state to another computational basis state (i.e., states of the form $\ket{x}$ for $x \in \{0,1\}^n$).  If this transformation is non-affine, then it suffices to generate Toffoli (perhaps with some additional ancilla qubits) by Aaronson et al. \cite{ags:2015}. The construction is shown in Figure~\ref{fig:CCR_to_classical_nonaffine}.

\begin{figure}[ht]
\centering
\mbox{
\Qcircuit @C=1.5em @R=1.5em {
& \gate{\R} & \ctrl{1} & \qw & \ctrl{1} &\qw & & & \targ & \ctrl{1}  & \qw \\ 
& \ctrl{-1} & \ctrl{1} & \ctrl{1} & \ctrl{0} & \qw & = & & \ctrl{-1} & \ctrl{1}  & \qw \\
& \ctrl{-1} & \gate{\R} & \ctrl{0} & \qw & \qw & & & \ctrl{-1} & \targ  & \qw 
}}
\caption{Generating non-affine classical gate from $\CCR$ and $\CSIGN$.}
\label{fig:CCR_to_classical_nonaffine}
\end{figure}
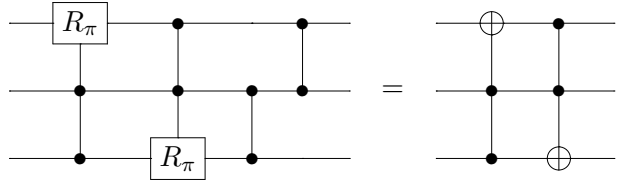

\item \textbf{Use the non-affine gate to generate Toffoli.}  We give an explicit construction in Figure~\ref{fig:nonaffine_to_toffoli}.  Notice that the fourth qubit is an ancillary qubit starting in the $\ket{0}$ state.\footnote{Indeed, this ancillary qubit is necessary because the non-affine gate in Figure~\ref{fig:CCR_to_classical_nonaffine} is an even permutation and the Toffoli gate is an odd permutation on three bits.}
\end{enumerate}
 
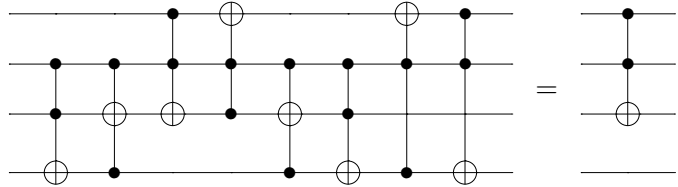
\begin{figure}[ht]
\centering
\mbox{
\Qcircuit @C=1.2em @R=1.2em {
& \qw & \qw & \ctrl{1} & \targ & \qw & \qw & \targ & \ctrl{1} & \qw & & & \ctrl{1} & \qw \\
& \ctrl{1} & \ctrl{1} & \ctrl{1} & \ctrl{-1} & \ctrl{1} & \ctrl{1} & \ctrl{-1} & \ctrl{2} & \qw & \raisebox{-2em}{=} & & \ctrl{1} & \qw \\
& \ctrl{1} & \targ & \targ & \ctrl{-1} & \targ & \ctrl{1} & \qw & \qw & \qw & & & \targ & \qw \\
 & \targ & \ctrl{-1} & \qw & \qw & \ctrl{-1} & \targ & \ctrl{-2} & \targ & \qw & & & \qw & \qw 
}}
\caption{Generating Toffoli gate from non-affine gate in Figure~\ref{fig:CCR_to_classical_nonaffine}.}
\label{fig:nonaffine_to_toffoli}
\end{figure}
\end{proof}


\section{Gadget Details}
\label{app:fftech}

As discussed above in Section~\ref{sec:real_orthogonal} and Section~\ref{sec:finite_field}, our results on orthogonal matrices depend on a collection of gadgets. In the real orthogonal setting (Section~\ref{sec:real_orthogonal}), each gadget is a real orthogonal matrix with algebraic entries, and all entries have clear, compact expressions in terms of radicals. However, in Section~\ref{sec:finite_field}, we wish to reuse the same gadgets over finite fields, and radicals are no longer the best representation. 

Instead, we will show that our (real) gadget matrices have entries in $\mathbb Q(\alpha)$, the algebraic field extension of the rational numbers by $\alpha$, where $\alpha = \sqrt{2 + \sqrt{2}} + \sqrt{3 + \sqrt{6}} \approx 4.182173283$ is the largest real root of irreducible polynomial
$$
f(x) = x^{16} - 40x^{14} + 572x^{12} - 3736x^{10} + 11782 x^8 - 17816 x^6 + 11324 x^4 - 1832 x^2 + 1.
$$
More specifically, we will write every entry as a polynomial in $\alpha$, with rational coefficients and degree less than $16$. 

This is a cumbersome representation for hand calculation, but there are some advantages. First, it eliminates any ambiguity about, for instance, which square root of $2$ to use in a finite field. Second, we can check the various conditions our gadgets need to satisfy in the field $\mathbb Q(\alpha)$, and then argue that the verification generalizes to $\mathbb F_p(\alpha)$, with a few caveats. So, without further ado, we present polynomials for a set of reals which generate all the entries of our gadgets. 

\subsection{Gadget entries}
\label{app:fftech_entries}

Since $\pm \frac{1}{\sqrt{2}}$ are the only entries in $\decoder$, our decoder gadget, we show how to express those entries as polynomials in $\alpha$.
\small 
\begin{align*}
\frac{1}{\sqrt{2}} &= \frac{1}{11776} \left( \alpha^{14}-53 \alpha^{12}+1077 \alpha^{10}-10561 \alpha^8+51555 \alpha^6-115791 \alpha^4+95207 \alpha^2-8379 \right),
\end{align*}
\normalsize
For our encoder gadget $\encoder$, we also must also express $\frac{1}{\sqrt{3}}$ as an element in $\mathbb{Q}(\alpha)$. Note that $\frac{1}{\sqrt{6}}$ can be obtained as $\frac{1}{\sqrt{2}} \cdot \frac{1}{\sqrt{3}}$. 

\small
\begin{align*}
\frac{1}{\sqrt{3}} &= \frac{1}{11776} \left( \alpha^{14}-53 \alpha^{12}+1077 \alpha^{10}-10561 \alpha^8+51555 \alpha^6-115791 \alpha^4+95207 \alpha^2-8379 \right).
\end{align*}
\normalsize

Showing that the entries of the $\B$ gate are in $\mathbb{Q}(\alpha)$ requires the following: 
\begin{align*}
\sqrt{2 + \sqrt{2}} &= \frac{1}{5888} \big( -123 \alpha^{15}+4932 \alpha^{13}-70785 \alpha^{11}+464494 \alpha^9 \\
& \qquad \qquad -1470141 \alpha^7+2209176 \alpha^5-1357287 \alpha^3+193302 \alpha \big) \\
\sqrt{2 - \sqrt{2}} &= \frac{1}{5888} \big( 216 \alpha^{15}-8711 \alpha^{13}+126234 \alpha^{11}-841629 \alpha^9 \\ 
& \qquad \qquad +2733428 \alpha^7-4270353 \alpha^5+2799098 \alpha^3-466411 \alpha \big)
\end{align*}
\normalsize

Finally, we have the $V$ gate. We already have the $\frac{1}{3\sqrt{2}}$ in front, and the various multiples of $\sqrt{2}$ inside, so we just need $\sqrt{3 \pm \sqrt{6}}$ and $\sqrt{6 \pm 2 \sqrt{6}}$. These are related by a factor of $\sqrt{2}$, so it suffices to give $\sqrt{3 \pm \sqrt{6}}$. 
\begin{align*}
\sqrt{3 + \sqrt{6}} &= \frac{1}{5888} \big( 123 \alpha^{15}-4932 \alpha^{13}+70785 \alpha^{11}-464494 \alpha^9 \\
& \qquad \qquad +1470141 \alpha^7-2209176 \alpha^5+1357287 \alpha^3-187414 \alpha \big) \\
\sqrt{3 - \sqrt{6}} &= \frac{1}{256} \big( 15 \alpha^{15}-598 \alpha^{13}+8505 \alpha^{11}-55084 \alpha^9 \\
& \qquad \qquad +171665 \alpha^7-256518 \alpha^5+161671 \alpha^3-25624 \alpha \big)
\end{align*}
\normalsize

The numbers above, combined with $\frac{1}{2}$ and $\frac{1}{3}$, generate all the entries of our real orthogonal gadgets. Note that the denominators in front of the polynomials above (e.g., $11776$, $5888$, $256$, $3$, etc.) all divide $35328 = 2^9 \cdot 3 \cdot 23$. In other words, this representation is a bad choice for fields of characteristic $2$, $3$, or $23$ because, in those cases, division by $35328$ is division by $0$. Aside from this restriction, the representation is well-defined for any field containing some root $\alpha$ of the polynomial $p$. 


We should not be surprised that the representation fails for fields of characteristic $2$ or $3$ because our matrices contain, for instance, the entries $\frac{1}{3}$ and $\frac{1}{\sqrt{2}}$. We also know the permanent of an orthogonal matrix is easy to compute in fields of characteristic $2$ or $3$, so it is actually no surprise to find this obstacle to our hardness proof. 

On the other hand, we can find no explanation for the requirement $p \neq 23$; it appears to be a quirk of the algebraic number $\alpha$. In fact, a different choice fails for different primes. Consider $\beta \approx 5.596386846$, the largest real root of 
\begin{align*}
x^{16}-56 x^{14}-32 x^{13}+1084 x^{12}+960 x^{11}-9224 x^{10}-8928 x^9+37702 x^8+ & \\
33920 x^7-73736 x^6-53216 x^5+63932 x^4+23488 x^3-21560 x^2+3808 x-191 &.
\end{align*}
This appendix is long enough without doing all the same steps for $\beta$, so let us claim without proof that $\mathbb Q(\beta) = \mathbb Q(\alpha)$. Furthermore, when we represent the matrix entries as polynomials in $\beta$ (we omit the details), the denominators prohibit the use of this representation for fields of characteristic $2$, $3$, $191$, and $3313$, but \emph{not} $23$. Hence, for all primes $p$ other than $2$ or $3$, there is some representation that works for that prime.

%
%
%

\subsection{Galois Extension}
\label{app:fftech_galois}

We need $\mathbb Q(\alpha)$ to be a Galois extension to apply Chebotarev's theorem, which we use to prove Theorem~\ref{thm:ffmain2}. Another helpful consequence is that if $\alpha$ is in some field, then all the roots of $f$ are also in the field since they can be expressed as polynomials in $\alpha$.  

The most direct way to prove $\mathbb Q(\alpha)$ is a Galois extension is to write all 16 roots of $f$ in terms of $\alpha$. Since $f$ is an even polynomial, half of the roots are just the negatives of the other half, so we restrict our attention to the $8$ positive roots. 

\begin{center}
\begin{tabular}{|r|c|}
\hline
Root & Polynomial \\
\hline
$0.0234$ & \scriptsize $\frac{1}{5888}\(-129 \alpha^{15}+5043 \alpha^{13}-69381 \alpha^{11}+425303 \alpha^9-1214867 \alpha^7+1629561 \alpha^5-919335 \alpha^3+122941 \alpha\)$ \\
$0.4866$ & \scriptsize $\frac{1}{2944}\(123 \alpha^{15}-4932 \alpha^{13}+70785 \alpha^{11}-464494 \alpha^9+1470141 \alpha^7-2209176 \alpha^5+1357287 \alpha^3-190358 \alpha\)$ \\
$1.1057$ & \scriptsize $\frac{1}{2944}\(-234 \alpha^{15}+9343 \alpha^{13}-133200 \alpha^{11}+865713 \alpha^9-2709218 \alpha^7+4054545 \alpha^5-2537860 \alpha^3+391327 \alpha\)$ \\
$1.5073$ & \scriptsize $\frac{1}{5888}\(561 \alpha^{15}-22465 \alpha^{13}+321849 \alpha^{11}-2108561 \alpha^9+6681723 \alpha^7-10170267 \alpha^5+6517531 \alpha^3-1055763 \alpha\)$ \\
$1.5690$ & \scriptsize $\frac{1}{5888}\(-93 \alpha^{15}+3779 \alpha^{13}-55449 \alpha^{11}+377135 \alpha^9-1263287 \alpha^7+2061177 \alpha^5-1441811 \alpha^3+278997 \alpha\)$ \\
$2.5897$ & \scriptsize $\frac{1}{2944}\(111 \alpha^{15}-4411 \alpha^{13}+62415 \alpha^{11}-401219 \alpha^9+1239077 \alpha^7-1845369 \alpha^5+1180573 \alpha^3-198025 \alpha\)$ \\
$3.0997$ & \scriptsize $\frac{1}{5888}\(339 \alpha^{15}-13643 \alpha^{13}+197019 \alpha^{11}-1306123 \alpha^9+4203569 \alpha^7-6479529 \alpha^5+4156385 \alpha^3-653825 \alpha\)$ \\
$4.1821$ & \scriptsize $\alpha$ \\
\hline
\end{tabular}
\end{center}


\section{Approximation}
\label{sec:approx}

Much like in Aaronson's paper \cite{aar:per}, our hardness reductions for \emph{exactly} computing the permanent lead naturally to hardness of approximation results as well.  Approximation results comes in two flavors: additive and multiplicative. For example, Gurvits' algorithm \cite{gurvits:2005} approximates the permanent of a matrix $A$ up to $\pm \varepsilon \| A \|^n$ additive error. We will focus strictly on multiplicative approximation. That is, the result of the approximation should be between $\frac{1}{k} \Per(A)$ and $k \Per(A)$ for some $k$. 

We give approximation results only for real orthogonal matrices since it is unclear how to even define multiplicative approximation in a finite field. All of our results follow from the fact that we actually prove $\GapP$-hardness (since we compute the gap, $\Delta_C$, rather than just the number of satisfying assignments). None of the results use anything specific to permanents; they are all $\GapP$ folklore, but we state them as permanent results for clarity. 

\begin{theorem}
\label{thm:approx1}
Suppose $A$ is an oracle that approximates the permanent of a real orthogonal matrix to \emph{any} multiplicative factor. In other words, $A$ is an oracle for the sign (zero, positive, or negative) of the permanent. Then $\GapP \subseteq \FP^{A}$.
\end{theorem}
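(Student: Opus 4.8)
The plan is to bootstrap a sign oracle for permanents of real orthogonal matrices into a full $\GapP$ computation by reusing the reduction of Theorem~\ref{thm:main_real_orthogonal} together with a binary-search-style interpolation. First I would recall that, given a Boolean circuit $C$, Theorem~\ref{thm:main_real_orthogonal} produces a real orthogonal matrix $O$ with $\Per(O) = 2^a 3^b \Delta_C$ for known integers $a,b$; since $2^a 3^b > 0$, the sign of $\Per(O)$ already reveals the sign of $\Delta_C$. So the sign oracle immediately decides $\CeP$-type questions about $C$. The real work is to amplify this into exact computation of $\Delta_C$ (which is exactly what a $\GapP$ function is, up to the standard normalization), and the key observation is that we can learn $\Delta_C$ bit by bit by \emph{perturbing} the circuit so that the sign of the resulting gap encodes a comparison of $\Delta_C$ against a chosen threshold.

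The central step: from $C$ on $n$ bits I would build, for any target integer $t$ with $|t| \le 2^n$, a new Boolean circuit $C_t$ on polynomially many bits such that $\Delta_{C_t}$ has the same sign as $\Delta_C - t$ (or is zero exactly when $\Delta_C = t$). Concretely, one can pad: introduce extra input variables and arrange, via a small classical sub-circuit, that the padded circuit contributes a fixed additive shift of $-t$ to the gap (writing $t$ in binary and OR-ing in appropriate disjunctions/conjunctions on fresh variables, as in the proof of Theorem~\ref{thm:orthogonal_involution_hard} where $C'(x,b) = C(x)\vee b$ shifts the gap by $2^n$). Since $|\Delta_C| \le 2^n$, a binary search over $t \in \{-2^n, \ldots, 2^n\}$ using $O(n)$ queries to the sign oracle pins down $\Delta_C$ exactly. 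Each query is: construct $C_t$, run it through the Theorem~\ref{thm:main_real_orthogonal} reduction to get a real orthogonal matrix $O_t$, and ask the oracle for $\operatorname{sign}(\Per(O_t))$, which equals $\operatorname{sign}(2^{a_t}3^{b_t}(\Delta_C - t)) = \operatorname{sign}(\Delta_C - t)$. All of this is polynomial time with $O(n)$ oracle calls, so $\GapP \subseteq \FP^A$. (The statement does not claim a single call, only $\FP^A$, so the $O(n)$ adaptive queries are fine.)

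I would then note the minor bookkeeping points: a $\GapP$ function is the gap of a nondeterministic poly-time machine, which one can present as $\Delta_{C}$ for a poly-size circuit $C$ by standard Cook--Levin-style encoding, so reducing to the problem ``compute $\Delta_C$'' loses no generality; and the interpolation/search range $[-2^n, 2^n]$ is tight because a poly-time NTM has at most $2^{\mathrm{poly}}$ paths, so the search still uses only polynomially many bits and hence polynomially many oracle calls.

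The main obstacle I anticipate is the shift construction: one must ensure $C_t$ stays of polynomial size uniformly in $t$, that the extra gadget genuinely adds $-t$ (not $-2^k$ for a single $k$) to the gap — which requires carefully combining several disjunction gadgets, each contributing a power-of-two shift, so that their combined effect is additive rather than interfering — and that feeding $C_t$ into the Theorem~\ref{thm:main_real_orthogonal} pipeline still yields an \emph{orthogonal} matrix (it does, since that theorem's output is orthogonal for any input circuit). Verifying additivity of the stacked shift gadgets is the one place where a short but genuine argument is needed; everything else is routine.
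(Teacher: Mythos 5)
Your proposal is correct and follows essentially the same route as the paper: construct a shifted circuit $C_t$ with $\Delta_{C_t} = \Delta_C - t$, feed it through the Theorem~\ref{thm:main_real_orthogonal} pipeline, query the sign oracle, and binary search over $t \in [-2^n, 2^n]$. The only difference is cosmetic: where you worry about hand-building an additive shift gadget out of stacked disjunctions, the paper simply invokes closure of $\GapP$ under subtraction (Theorem~\ref{thm:gap_arithmetic}), which sidesteps the ``additivity'' concern you flag at the end.
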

\begin{proof}
We give an $\FP^{A}$ algorithm for computing $\Delta_C$ for a classical circuit $C$. Since this problem is $\GapP$-hard, we get $\GapP \subseteq \FP^{A}$.  

By earlier results, we can construct a real orthogonal matrix with permanent proportional to $\Delta_C$. Then we can apply the oracle to compute the sign of the permanent, and hence the sign of $\Delta_C$. This is helpful, but we can do better. 

Recall that we can add or subtract two $\GapP$ functions (see Appendix~\ref{sec:complexity}), so for any integer $k$, we can construct a circuit $C_k$ such that $\Delta_{C_k} = \Delta_C - k$. Then we can apply $A$ to give us the sign of $\Delta_{C_k}$, or equivalently, compare $\Delta_C$ to $k$. In other words, we can use $A$ to binary search for the value of $\Delta_C$, which we know to be an integer in the range $-2^n$ and $2^n$.
\end{proof}

Recall that $\CeP$ is the class of decision problems of solvable by a non-deterministic polynomial-time machine which accepts if it has the same number of accepting paths as rejecting paths. By Toda's theorem, $\PH \subseteq \BPP^{\CeP}$.

\begin{theorem}
\label{thm:approx2}
Suppose $A$ is an oracle that approximates the absolute value of permanent of a real orthogonal matrix to \emph{any} multiplicative factor. That is, $A$ tells us whether the permanent is zero. Then $\P^{\CeP} \subseteq \P^{A}$. 
\end{theorem}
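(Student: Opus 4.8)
The plan is to reduce a $\CeP$ oracle to the zero-testing oracle $A$, which is very close to what Theorem~\ref{thm:approx1} already gives us, but now we must be careful that we only get a \emph{single bit} (zero versus nonzero) per oracle call rather than a sign. The key observation is that a $\CeP$ query asks exactly whether a $\GapP$ function $\Delta_C$ is zero, and by the earlier reductions (Theorem~\ref{thm:main_real_orthogonal} and the $\GapP$ arithmetic of Appendix~\ref{sec:complexity}), we can build a real orthogonal matrix $O$ with $\Per(O) = 2^a 3^b \Delta_C$ for integers $a, b$. Since $2^a 3^b \neq 0$, we have $\Per(O) = 0$ if and only if $\Delta_C = 0$, so one call to $A$ decides the $\CeP$ instance. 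This handles one query; the work is to handle a $\P^{\CeP}$ computation, i.e., a polynomial-time machine making adaptively many such queries.

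The main step is therefore to show $\P^{\CeP} \subseteq \P^A$ by simulating the $\CeP$ oracle calls one at a time. First I would note that each query string $x$ that the base $\P$ machine feeds to its $\CeP$ oracle describes (via the standard $\CeP$-completeness of, say, counting satisfying assignments mod the gap) a $\GapP$ function, hence a classical circuit $C_x$ with $\Delta_{C_x}$ computing the gap; this $C_x$ is constructible in polynomial time from $x$. Then I apply the reduction above to get a real orthogonal $O_x$ with $\Per(O_x) \propto \Delta_{C_x}$ by a nonzero constant, and one call to $A$ on $O_x$ returns precisely the bit ``is $\Delta_{C_x} = 0$,'' which is the answer to the $\CeP$ query. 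Feeding that bit back to the base machine and continuing, the whole $\P^{\CeP}$ computation is simulated in $\P^A$. Since $\P^{\CeP}$ is closed under the obvious manipulations, this gives $\P^{\CeP} \subseteq \P^A$ as claimed.

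The one genuine subtlety — and the step I expect to need the most care — is making sure the matrices $O_x$ are honestly real orthogonal and polynomial-time computable uniformly in $x$, which is exactly what Theorem~\ref{thm:main_real_orthogonal} (restated form) provides: it yields, from any polynomially sized Boolean circuit, a real orthogonal matrix computable in polynomial time with $\Per(O) = 2^a 3^b \Delta_C$. So there is no new construction to do; the content is entirely in chaining the known reduction with a single application of the ``tells us whether the permanent is zero'' oracle and observing that the proportionality constant $2^a 3^b$ is never zero. A secondary point worth a sentence is that, unlike Theorem~\ref{thm:approx1} where binary search recovered the full integer value of $\Delta_C$, here we deliberately use only the zero/nonzero bit, because that is all a $\CeP$ query needs and all the weaker oracle $A$ provides.
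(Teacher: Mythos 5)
Your proposal is correct and follows essentially the same route as the paper: reduce a $\CeP$ query to deciding whether $\Delta_C = 0$, build the real orthogonal matrix with $\Per(O) = 2^a 3^b \Delta_C$ via Theorem~\ref{thm:main_real_orthogonal}, and use one call to $A$ per query (noting $2^a 3^b \neq 0$). The paper's proof is just a terser statement of the same argument.
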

\begin{proof}
The problem of computing whether $\Delta_C = 0$ for a classical circuit $C$ is $\CeP$-hard. But clearly we can construct a real, orthogonal matrix from the circuit with permanent proportional to $\Delta_C$, and then apply $A$ to determine if the permanent is zero, and hence whether $\Delta_C$ is zero. Therefore $\P^{\CeP} \subseteq \P^{A}$. 
\end{proof}

Finally, we show that even a very poor approximation to the absolute value of the permanent still allows us to calculate the exact value of the permanent via a boosting argument.

\begin{theorem}
\label{thm:approx3}
Suppose $A$ is an oracle that approximates the absolute value of the permanent of an $n \times n$ real orthogonal matrix to within a $2^{n^{1-\varepsilon}}$ factor for some $\varepsilon > 0$. Then $\GapP \subseteq \FP^{A}$.
\end{theorem}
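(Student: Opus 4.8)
*Suppose $A$ is an oracle that approximates the absolute value of the permanent of an $n \times n$ real orthogonal matrix to within a $2^{n^{1-\varepsilon}}$ factor for some $\varepsilon > 0$. Then $\GapP \subseteq \FP^{A}$.*

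\emph{Overall approach.} The plan is to use $A$ to compute, for an arbitrary Boolean circuit $C$, the exact value of $\Delta_C=\sum_{x}(-1)^{C(x)}$; since this is a canonical $\GapP$-complete quantity (Appendix~\ref{sec:complexity}), that yields $\GapP\subseteq\FP^A$. I would do this by binary search on $\Delta_C$, exactly as in the proof of Theorem~\ref{thm:approx1}, the difficulty being that a single binary-search comparison asks to distinguish quantities that may differ by only a $1+2^{-N}$ factor, which is hopeless against an oracle with a $2^{n^{1-\varepsilon}}$-factor error. To bridge this gap I would \emph{boost}: rather than comparing $|\Delta_C-\ell|$ to $|\Delta_C-r|$ at the endpoints of the current interval, compare their $k$-th powers for a polynomially large $k$. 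Raising a $\GapP$ function to the $k$-th power is legitimate by Theorem~\ref{thm:gap_arithmetic} and only multiplies the circuit size by $k$; Theorem~\ref{thm:main_real_orthogonal} then turns the resulting circuit $C'$ into a real orthogonal matrix $O$ with $\Per(O)=2^a3^b\,\Delta_{C'}$ for known integers $a,b$, so an $A$-query on $O$ returns $\Delta_{C'}$ (which, on the circuits we build, is a positive integer) up to the factor $G:=2^{m^{1-\varepsilon}}$, where $m$ is the dimension of $O$.

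\emph{The search.} The procedure maintains an integer interval $[\ell,r]\ni\Delta_C$, starting from $[-2^N-1,\,2^N+1]$, so that $\ell<\Delta_C<r$ throughout and both $\Delta_C-\ell$ and $r-\Delta_C$ are strictly positive. In a round with $L=r-\ell$, build real orthogonal matrices whose permanents are known positive multiples of $X:=(\Delta_C-\ell)^k$ and of $Y:=(r-\Delta_C)^k$, and query $A$ to get $\tilde X\in[X/G,\,XG]$ and $\tilde Y\in[Y/G,\,YG]$. If $\tilde Y>2G^2\tilde X$ then $Y>X$, hence $\Delta_C<\tfrac{\ell+r}{2}$, and we recurse on $[\ell,\lceil\tfrac{\ell+r}{2}\rceil]$; symmetrically if $\tilde X>2G^2\tilde Y$ we recurse on the upper half; if neither inequality fires, then $\Delta_C$ must lie in the open middle third $(\ell+\tfrac{L}{3},\,\ell+\tfrac{2L}{3})$ and we recurse there. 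At most one inequality can fire, and if $\Delta_C\le\ell+\tfrac{L}{3}$ then $r-\Delta_C\ge 2(\Delta_C-\ell)$, so $Y\ge 2^kX$ and the first inequality \emph{does} fire for $k$ large enough (symmetrically on the top), which is what makes the three-way split exhaustive. Each round shrinks $L$ by a constant factor bounded away from $1$, so after $\mathcal O(N)$ rounds the interval has constant width, and we finish by exact zero-tests: for each of the $\mathcal O(1)$ remaining candidates $t$, a real orthogonal matrix with permanent a positive multiple of $\Delta_C-t$ has permanent $0$ iff $\Delta_C=t$, and a multiplicative approximation of a value distinguishes $0$ from nonzero, so $A$ pins down $\Delta_C$.

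\emph{Choosing $k$; the obstacle.} A round is correct provided the amplified \emph{constant}-factor gap beats $G^2$: in the decisive case $Y/X\ge 2^k$, so $\tilde Y/\tilde X\ge 2^k/G^2$, which exceeds $2G^2$ as soon as $2^k>2G^4$, i.e.\ $k>1+4\,m^{1-\varepsilon}$. The crux is that $m$ is the dimension of the matrix built from the power-$k$ circuit, which has size $\mathcal O(k\,|C'|)$ with $|C'|=\mathrm{poly}(N)$, and the reductions of Theorem~\ref{thm:build_Q} and Theorem~\ref{thm:klm_plus} have \emph{linear} overhead (a size-$s$ Boolean circuit compiles to an $\mathcal O(s)$-qubit, $\mathcal O(s)$-$\CSIGN$-gate quantum circuit, hence an $\mathcal O(s)$-mode linear optical network), so $m=\mathcal O(k\,|C'|)$. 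The requirement then reads $k>1+4\,(\mathcal O(k\,|C'|))^{1-\varepsilon}$, i.e.\ $k^{\varepsilon}=\Omega(|C'|^{\,1-\varepsilon})$, which is satisfied by $k=\mathrm{poly}(N)$ with exponent about $1/\varepsilon$; for such $k$ we also get $m=\mathrm{poly}(N)$, so every oracle call is on a polynomial-size matrix and the whole search runs in $\FP^A$. The main obstacle is precisely this accounting: one must verify the linear (not merely polynomial) overhead of the KLM-style construction, since an overhead $s\mapsto s^{c}$ with $c>1$ would give $m^{1-\varepsilon}=\Omega(k^{c(1-\varepsilon)})$, growing faster than $k$ for small $\varepsilon$ and leaving no polynomial $k$ that works; and one must check the middle-third branch, i.e.\ that failure of both inequalities genuinely confines $\Delta_C$ away from the clear lower and upper thirds. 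This is also why the weaker (unbounded-factor) setting of Theorem~\ref{thm:approx2} yields only a zero-test rather than all of $\GapP$: it is the bounded approximation factor that makes the boosting go through.
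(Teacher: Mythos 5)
Your proposal is correct and uses the same core mechanism as the paper's proof: binary search on $\Delta_C$ via circuits $C_k$ with $\Delta_{C_k}=\Delta_C-k$, amplified by taking $k$-th powers (Theorem~\ref{thm:gap_arithmetic}), with the crucial bookkeeping that the linear-optical matrix dimension grows only linearly in the number of circuit copies, so that a polynomial power $k$ suffices to overwhelm the $2^{m^{1-\varepsilon}}$ approximation factor. The one place you diverge is the interval-shrinkage step: the paper makes a single oracle call per round (estimate $\Delta_C-a$ from the left endpoint, intersect the resulting multiplicative window with the current interval to shrink it by a factor of $\tfrac{3}{4}$), whereas you make two calls comparing powered distances from both endpoints and do a three-way split. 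Both give constant-factor shrinkage in $O(N)$ rounds; the paper's variant is marginally leaner and also handles the boundary case $\Delta_C=a$ gracefully (the approximation of $0$ is $0$, which collapses the interval), while your variant sidesteps that case by starting from a strictly larger interval. Your observation that a superlinear (rather than merely polynomial) compilation overhead would break the argument for small $\varepsilon$ is correct and is implicitly what the paper's exponent calculation $\varepsilon\ge\frac{1}{c+1}$ is tracking.
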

\begin{proof}
We give an $\FP^{A}$ algorithm for computing $\Delta_C$ of a classical circuit. Since this problem is $\GapP$-hard, we get $\GapP \subseteq \FP^{A}$.  

As in Theorem~\ref{thm:approx1}, we can construct a circuit $C_k$ such that $\Delta_{C_k} = \Delta_C - k$ for any integer $k$. By applying oracle $A$ to the real orthogonal matrix corresponding to $C_k$, we can get a multiplicative estimate for $|\Delta_C - k|$. Let us assume for the moment that $A$ gives a multiplicative approximation to within a factor of $2$, and improve this to $2^{n^{1-\varepsilon}}$ later. 

Suppose we are given an interval $[a, b]$ guaranteed to contain $\Delta_C$. For instance, $\Delta_C$ is initially in $[-2^{n}, 2^{n}]$. Apply $A$ to find an estimate for $\Delta_{C_a} = \Delta_C - a$. Suppose the approximation we get is $x^{*}$. Then we have 
$$
a + \frac{1}{2} x^{*} \leq \Delta_C \leq a + 2 x^{*}.
$$
So $\Delta_C$ is in the interval $[a + \frac{1}{2} x^*, a + 2x^{*}] \cap [a,b]$. One can show that this interval is longest when $a + 2x^{*} = b$, where it has length $\frac{3}{4} (b-a)$. Since the interval length decreases by a constant factor each step, we only need $O(n)$ steps to shrink it from $[-2^n, 2^n]$ to length $<1$, and determine $\Delta_C$.

Finally, suppose we are given an oracle which gives an approximation to within a multiplicative factor $2^{n^{1-\varepsilon}}$. Theorem~\ref{thm:gap_arithmetic} in Appendix~\ref{sec:complexity} lets us construct a circuit $C^{m}$ (not to be confused with $C_k$) such that $\Delta_{C^m} = (\Delta_C)^m$. The circuit is essentially $m$ copies of $C$, so we can only afford to do this for $k$ polynomial in the size of $C$, otherwise our algorithm is too slow.

The point of $C^m$ is that a factor $\beta$ approximation to $\Delta_{C^m}$ gives a factor $\beta^{1/m}$ approximation of $\Delta_C$ by taking $m$th roots. This is excellent for reducing a constant approximation factor, but when $\beta$ grows with $n$, we must account for the fact that the size of $C^{m}$ grows with $n$ as well. In particular, the size of $C^{m}$ scales with $m$, and the dimension of the matrix in our construction scales linearly with $m$ as well. 

So, for our algorithm to succeed, we need $\beta(nm)^{1/m} \leq 2$ or 
$$\beta(nm) \leq 2^{m}$$
for $m$ a polynomial in $n$. Suppose we can afford $m = n^{c}$ copies of $C$. Then we succeed when $\beta(n^{1+c}) \leq 2^{n^c}$, or 
$$
\beta(n) \leq 2^{n^{1-\frac{1}{c+1}}}.
$$
Within the scope of polynomial time algorithms, we can make $\frac{1}{c+1}$ less than any $\varepsilon$, and thereby handle any $2^{n^{1-\varepsilon}}$ approximation factor. 
\end{proof}

The core ideas in both Theorem~\ref{thm:approx1} and \ref{thm:approx3} were already noticed by Aaronson \cite{aar:per}, but we give slightly better error bounds for the latter theorem.



\section{Orthogonal Matrices mod 97 are \texorpdfstring{$\sharpP$-hard}{\#P-hard} via \texorpdfstring{$\NS$}{NS1}-approach}
\label{app:ns_approach}

It is natural to ask whether Theorem~\ref{thm:ffmain2} can be extended to more primes, or all primes. In other words, is there some prime $p \neq 2, 3$ such that it is easy to compute the permanent modulo $p$, even though computing the permanent over $\mathbb F_{p^4}$ is hard? In this appendix, we present a different construction for CSIGN gates (in fact, the construction originally used by KLM) which works in $\mathbb F_{97}$, where the $V$ gate does not. We conclude that there is at least one more prime, namely $p = 97$, where the permanent is hard. 

The original KLM construction builds an CSIGN gate from what they call an $\NS$ gate, instead of directly using a $V$ gate. Logically, the $\NS$ gate acts on one mode and does nothing to $0$ or $1$ photon, but flips the sign for $2$ photons. The construction of CSIGN from $\NS$ is shown in Figure~\ref{fig:csign_from_ns}. If $\ket{1,1}$ is the input state, the Hadamard gate turns it into a linear combination of $\ket{2,0}$ and $\ket{0,2}$, which then change phase by the $\NS$ gate, and get recombined into $-\ket{1,1}$ by the Hadamard gate. Otherwise, there are not enough photons for the $\NS$ gates to do anything, and the Hadamard gates cancel, so the gate does nothing (as a CSIGN should). 

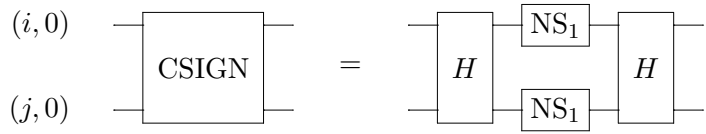
\begin{figure}[h]
\centering
\mbox{
\Qcircuit @C=1em @R=1.5em {
\lstick{(i, 0)} & & \multigate{1}{\CSIGN} &  \qw & &    & & &  \multigate{1}{H} & \gate{\NS} & \multigate{1}{H} & \qw \\ 
\lstick{(j, 0)} & & \ghost{\CSIGN} & \qw & & \raisebox{3em}{=} & & & \ghost{H} & \gate{\NS} &  \ghost{H} & \qw 
}}
\caption{Generating $\CSIGN$ from $H$ and $\NS$ \cite{klm}.}
\label{fig:csign_from_ns}
\end{figure}

It turns out it is impossible to construct an $\NS$ gate without at least two postselected modes, so the KLM $\NS$ is a three mode gate where the last two modes start and end (via postselection) in state $\ket{0,1}$. Unfortunately, the KLM $\NS$ gate postselects on a mode having zero photons, which is undesirable for our application. Therefore, we construct our own $\NS$ gate shown below. It postselects on the last two modes being $\ket{1,1}$ and has entirely real entries. 

The gate is 
$$\NS = \frac{1}{6}\begin{pmatrix}
 6-18 \gamma  & -\sqrt{6} \sqrt{9 \gamma -\sqrt{6-3 \gamma }-2} & -\sqrt{6} \sqrt{9 \gamma +\sqrt{6-3 \gamma }-2} \\
 -\sqrt{6} \sqrt{9 \gamma -\sqrt{6-3 \gamma }-2} & 9 \gamma +\sqrt{24-45 \gamma } & -3 \sqrt{2-4 \gamma } \\
 -\sqrt{6} \sqrt{9 \gamma +\sqrt{6-3 \gamma }-2} & -3 \sqrt{2-4 \gamma } & 9 \gamma -\sqrt{24-45 \gamma } \\
\end{pmatrix}
$$
where $\gamma \triangleq \frac{1}{18} \left(\sqrt{33}+3\right) \approx 0.4858090359$. 

One can verify the following identities hold. 
\begin{align*}
\bra{0,1,1} \phi(\NS) \ket{0,1,1} &= \gamma, \\
\bra{1,1,1} \phi(\NS) \ket{1,1,1} &= \gamma, \\
\bra{2,1,1} \phi(\NS) \ket{2,1,1} &= -\gamma.
\end{align*}
That is, with amplitude $\gamma$ the postselection succeeds, and the three mode gate behaves like an $\NS$ gate on the first mode. 

The field extension containing this gate is of higher degree than $\mathbb Q(\alpha)$, so we have not computed it explicitly. If we proved the equivalent of Theorem~\ref{thm:ffmain2} in that extension, we would expect the density to be worse. However, this construction of an CSIGN works for at least one prime where $V$ does not, namely $p = 97$.

\end{document}